\theoremstyle{plain}
\newtheorem{theorem}{Theorem}[section]
\newtheorem{lemma}[theorem]{Lemma}
\newtheorem*{lemma*}{Lemma}
\newtheorem{corollary}[theorem]{Corollary}
\theoremstyle{definition}
\newcommand{\rn}{\mathbb{R}^n}
\newcommand{\rnn}{\mathbb{R}^{n \times n}}
\newcommand{\vecfont}[1]{\mathbf{#1}}
\newcommand{\e}{\vecfont{e}}
\newcommand{\eps}{\epsilon}
\newcommand{\Var}{\operatorname{Var}}
\newcommand{\note}[1]{\marginpar{\tiny *note in TeX*}}
\newcommand{\ignore}[1]{}
\renewcommand{\phi}{\varphi}
\newcommand{\R}{\mathbb{R}}
\newcommand{\E}{\mathbb{E}}
\newcommand{\expec}[1]{\mathbb{E}\left[#1\right]}
\newcommand{\prob}[1]{\mathbb{P}\left[#1\right]}
\newcommand{\Vt}{V^{(t+1)}}
\newcommand{\Vht}{\widehat{V}^{(t+1)}}
\newcommand{\Ut}{U^{(t)}}
\newcommand{\Uht}{\widehat{U}^{(t)}}
\newcommand{\Rt}{R^{(t+1)}}
\newcommand{\Vo}{V^*}
\newcommand{\So}{\Sigma^*}
\newcommand{\Uo}{U^*}
\newcommand{\so}{\sigma^*}
\newcommand{\vto}{v^{t+1}}
\newcommand{\ut}{u^t}
\newcommand{\uo}{u^* }
\newcommand{\vo}{v^* }
\newcommand{\wvto}{\widehat{v}^{t+1}}
\newcommand{\ip}[2]{\langle #1, #2 \rangle}
\newcommand{\vt}{\vto}
\newcommand{\vht}{\wvto}
\def\Ro{R_{\Omega}}
\def\lv{\left\vert}
\def\rv{\right\vert}
\def\lV{\left\Vert}
\def\rV{\right\Vert}
\def\dij{\delta_{ij}}
\def\wij{w_{ij}}
\def\qij{q_{ij}}
\def\qhij{\hat{q}_{ij}}
\newcommand{\uht}{\widehat{u}^{t}}
\def\Mk{M_{\mathbf{r_k}}}
\def\Uk{U_{\mathbf{r_k}}}
\def\Rok{R_{\Omega^k}}
\begin{document}
\title{Tighter Low-rank Approximation via Sampling the Leveraged Element}
\author{
Srinadh Bhojanapalli\\
{The University of Texas at Austin}\\
{bsrinadh@utexas.edu}
\and
Prateek Jain\\
{Microsoft Research, India}\\
{prajain@microsoft.com}
\and
Sujay Sanghavi\\
{The University of Texas at Austin}\\
{sanghavi@mail.utexas.edu}
}
\maketitle

\begin{abstract}

In this work, we propose a new randomized algorithm for computing a low-rank approximation to a given matrix. Taking an approach different from existing literature, our method first involves a specific biased sampling, with an element being chosen based on the leverage scores of its row and column, and then involves weighted alternating minimization over the factored form of the intended low-rank matrix, to minimize error only on these samples. Our method can leverage input sparsity, yet produce approximations in {\em spectral} (as opposed to the weaker Frobenius) norm; this combines the best aspects of otherwise disparate current results, but with a dependence on the condition number $\kappa = \sigma_1/\sigma_r$. In particular we require $O(nnz(M) + \frac{n\kappa^2 r^5}{\epsilon^2} )$ computations to generate a rank-$r$ approximation to $M$ in spectral norm. In contrast, the best existing method requires $O(nnz(M)+ \frac{nr^2}{\epsilon^4})$ time to compute an approximation in Frobenius norm. Besides the tightness in spectral norm, we have a better dependence on the error $\epsilon$. Our method is naturally and highly parallelizable.

Our new approach enables two extensions that are interesting on their own. The first is a new method to directly compute a low-rank approximation (in efficient factored form) to the product of two given matrices; it computes a small random set of entries of the product, and then executes weighted alternating minimization (as before) on these. The sampling strategy is different because now we cannot access leverage scores of the product matrix (but instead have to work with input matrices). The second extension is an improved algorithm with smaller communication complexity for the distributed PCA setting (where each server has small set of rows of the matrix, and want to compute low rank approximation with small amount of communication with other servers). 

\end{abstract}
\newpage

\section{Introduction}\label{sec:intro}

Finding a low-rank approximation to a matrix is fundamental to a wide array of machine learning techniques. The large sizes of modern data matrices has driven much recent work into efficient (typically randomized) methods to find low-rank approximations that do not exactly minimize the residual, but run much faster / parallel, with fewer passes over the data. Existing approaches involve either intelligent sampling of a few rows / columns of the matrix, projections onto lower-dimensional spaces, or sampling of entries followed by a top-$r$ SVD of the resulting matrix (with unsampled entries set to 0). 

We pursue a different approach: we first sample entries in a specific biased random way, and then minimize the error on these samples over a search space that is the factored form of the low-rank matrix we are trying to find. We note that this is different from approximating a 0-filled matrix; it is instead reminiscent of matrix completion in the sense that it only looks at errors on the sampled entries. Another crucial ingredient is how the sampling is done: we use a combination of $\ell_1$ sampling, and of a distribution where the probability of an element is proportional to the sum of the leverage scores of its row and its column. 

Both the sampling and the subsequent alternating minimization are naturally fast, parallelizable, and able to utilize sparsity in the input matrix. Existing literature has either focused on running in input sparsity time but approximation in (the weaker) Frobenius norm, or running in $O(n^2)$ time with approximation in spectral norm. Our method provides the best of both worlds: it runs in input sparsity time, with just two passes over the data matrix, and yields an approximation in spectral norm. It does however have a dependence on the ratio of the first to the $r^{th}$ singular value of the matrix. 

Our alternative approach also yields new methods for two related problems: directly finding the low-rank approximation of the product of two given matrices, and distributed PCA. 

%
%
%

{\bf Our contributions} are thus three new methods in this space:
\begin{itemize}
\item {\bf Low-rank approximation of a general matrix:} Our first (and main) contribution is a new method (LELA, Algorithm 1) for low-rank approximation of any given matrix: first draw a random subset of entries in a specific biased way, and then execute a weighted alternating minimization algorithm that minimizes the error on these samples over a factored form of the intended low-rank matrix. The sampling is done with only two passes over the matrix (each in input sparsity time), and both the sampling and the alternating minimization steps are highly parallelizable and compactly stored/manipulated. 

For a matrix $M$, let $M_r$ be the best rank-$r$ approximation (i.e. the matrix corresponding to  top $r$ components of SVD). Our algorithm finds a rank-$r$ matrix $\widehat{M}_r$ in time $O(nnz(M) + \frac{n\kappa^2 r^5}{\epsilon^2} )$, while providing approximation in spectral norm: $\|M-\widehat{M}_r\| \leq \|M-M_r\| +\epsilon \|M-M_r\|_F$, where $\kappa=\sigma_1(M)/\sigma_r(M)$ is the condition number of $M_r$. Existing methods either can run in input sparsity time, but provide approximations in (the weaker) Frobenius norm (i.e. with $||\cdot ||$ replaced by $||\cdot||_F$ in the above expression), or run in $O(n^2)$ time to approximate in spectral norm, but even then with leading constants larger than 1. Our method however does have a dependence on $\kappa$, which these do not. See Table~\ref{table:1} for a detailed comparison to  existing results for low-rank approximation.

\item {\bf Direct approximation of a matrix product:} We provide a new method to directly find a low-rank approximation to the product of two matrices, without having to first compute the product itself. To do so, we first choose a small set of entries (in a biased random way) of the product that we will compute, and then again run weighted alternating minimization on these samples. The choice of the biased random distribution is now different from above, as the sampling step does not have access to the product matrix. However, again both the sampling and alternating minimization are highly parallelizable.

For $A\in \R^{n_1\times d}$, $B\in \R^{d\times n_2}$, and $n=max(n_1,n_2)$, our algorithm first chooses $O(nr^3 \log n / \epsilon^2)$ entries of the product $A\cdot B$ that it needs to sample; each sample takes $O(d)$ time individually, since it is a product of two length-$d$ vectors (though these can be parallelized). The weighted alternating minimization then runs in $O(\frac{nr^5 \kappa^2}{\eps^2})$ time (where $\kappa=\sigma_1(A\cdot B)/\sigma_r(A\cdot B)$). This results in a rank-$r$ approximation $\widehat{AB}_r$ of $A\cdot B$  in spectral norm, as given above.

\item {\bf Distributed PCA:} Motivated by applications with really large matrices, recent work has looked at low-rank approximation in a distributed setting where there are $s$ servers -- each have small set of rows of the matrix -- each of which can communicate with a central processor charged with coordinating the algorithm. In this model, one is interested in find good approximations while minimizing both computations and the communication burden on the center. 

We show that our LELA algorithm can be extended to the distributed setting while guaranteeing small communication complexity. In particular, our algorithm guarantees the same error bounds as that of our non-distributed version but guarantees communication complexity of $O(ds+ \frac{n r^5\kappa^2}{\eps^2}\log n)$ real numbers for computing rank-$r$ approximation to $M\in \R^{n\times d}$. For $n\approx d$ and large $s$, our analysis guarantees significantly lesser communication complexity than the state-of-the-art method \cite{kannan2014principal}, while providing tighter spectral norm bounds. 
\end{itemize}

{\bf Notation}: 
Capital letter $M$ typically denotes a matrix. $M^i$ denotes the $i$-th row of $M$, $M_j$ denotes the $j$-th column of $M$, and $M_{ij}$ denotes the $(i,j)$-th element of $M$. Unless specified otherwise, $M\in \R^{n\times d}$ and $M_r$ is the best rank-$r$ approximation of $M$. Also, $M_r=\Uo\Sigma^* (\Vo)^T$ denotes the SVD of $M_r$. $\kappa=\sigma_1^*/\sigma_r^*$ denotes the condition number of $M_r$, where $\sigma_i^*$ is the $i$-th singular value of $M$. $\|u\|$ denotes the $L_2$ norm of vector $u$. $\|M\|=\max_{\|x\|=1}\|Mx\|$ denotes the spectral or operator norm of $M$. $\|M\|_F=\sqrt{\sum_{ij}M_{ij}^2}$ denotes the Frobenius norm of $M$. Also, $\|M\|_{1,1}=\sum_{ij}|M_{ij}|$. $dist(X, Y)=\|X_\perp^TY\|$ denotes the principal angle based distance between subspaces spanned by $X$ and $Y$ orthonormal matrices. Typically, $C$ denotes a global constant independent of problem parameters and can change from step to step. 

Given a set $\Omega\subseteq [n]\times [d]$, $P_{\Omega}(M)$ is given by: $P_{\Omega}(M)(i,j)=M_{ij}$ if $(i,j)\in \Omega$ and $0$ otherwise. $R_{\Omega}(M)=w.*P_{\Omega}(M)$ denotes the Hadamard product of $w$ and $P_{\Omega}(M)$. That is, $R_{\Omega}(M)(i,j)=w_{ij}M_{ij}$ if $(i,j)\in \Omega$ and $0$ otherwise. Similarly let $R_{\Omega}^{1/2} (M)(i, j) =\sqrt{w_{ij}}M_{ij}$ if $(i,j)\in \Omega$ and $0$ otherwise.
\section{Related results}\label{sec:related}

\begin{table*}
\centering
    \begin{tabular}{| l | l | l | l |}
    \hline
   Reference & Frobenius norm  & Spectral norm  & Computation time \\ \hline
   {\bf BJS14 (Our Algorithm)} &  $ (1+\eps)\|\Delta\|_F$ &  $\|\Delta\| + \eps \|\Delta\|_F $ & $ O(nnz(M) + \frac{nr^5\kappa^2\log(n)}{\eps^2}) $ \\ \hline
   CW13\cite{clarkson2013low} & $(1+\eps)\|\Delta\|_F $ & $(1+\eps)\|\Delta\|_F$ & $O(nnz(M) + \frac{nr^2}{\eps^4}+\frac{r^3}{\eps^5}) $ \\ \hline
  BG13 \cite{ boutsidis2013improved} & $(1+\eps)\|\Delta\|_F $ & $c\|\Delta\|+ \eps \|\Delta\|_F$ & $O(n^2(\frac{r +\log(n)}{\eps^2})+ n\frac{r^2 \log(n)^2}{\eps^4})$ \\ \hline
   NDT09\cite{ nguyen2009fast} & $(1+\eps)\|\Delta\|_F $ & $c \|\Delta\|+ \eps\sqrt{n} \|\Delta\|$ & $ O(n^2 \log(\frac{r\log(n)}{\eps}) +\frac{nr^2\log(n)^2}{\eps^4} )$\\ \hline
   WLRT08\cite{ woolfe2008fast} & $(1+\eps)\|\Delta\|_F $ & $\|\Delta\|+\eps \sqrt{n}\|\Delta\|$ & $ O(n^2 \log(\frac{r}{\eps}) +\frac{nr^4}{\eps^4})$ \\ \hline
   Sar06\cite{ sarlos2006improved} & $(1+\eps)\|\Delta\|_F $ & $(1+\eps)\|\Delta\|_F$ & $ O(nnz(M) \frac{r}{\eps} +n \frac{r^2}{\epsilon^2})$ \\ \hline
    \end{tabular}
\caption{Comparison of error rates and computation time of some low rank approximation algorithms. $\Delta=M-M_r$.}
\label{table:1}
\end{table*}

\noindent {\bf Low rank approximation:} Now we will briefly review some of the existing work on algorithms for low rank approximation. \cite{frieze1998fast} introduced the problem of computing low rank approximation of a matrix $M$ with few passes over $M$. They presented an algorithm that samples few rows and columns and does SVD to compute low rank approximation, and gave additive error guarantees. \cite{drineas2003pass, drineas2006fast} have extended these results. \cite{achlioptas2001fast} considered a different approach based on entrywise sampling and quantization for low rank approximation and has given additive error bounds. 

\cite{har2006low, sarlos2006improved, drineas2006subspace, deshpande2006adaptive} have given low rank approximation algorithms with relative error guarantees in Frobenius norm. \cite{woolfe2008fast, nguyen2009fast} have provided guarantees on error in spectral norm which are later improved in~\cite{halko2011finding, boutsidis2013improved}. The main techniques of these algorithms is to use a random Gaussian or Hadamard transform matrix for projecting the matrix onto a low dimensional subspace and compute the rank-$r$ subspace. \cite{boutsidis2013improved} have given an algorithm based on random Hadamard transform that computes rank-$r$ approximation in time $O(\frac{n^2r}{\eps^2})$ and gives spectral norm bound of $\|M-\widehat{M}_r\| \leq c\|M-M_r\| + \eps \|M-M_r\|_F$.

One drawback of Hadamard transform is that it cannot take advantage of sparsity of the input matrix. Recently \cite{clarkson2013low} gave an algorithm using sparse subspace embedding that runs in input sparsity time with relative Frobenius norm error guarantees.

We presented some results in this area as a comparison with our results in table~\ref{table:1}. This is a heavily subsampled set of existing results on low rank approximations. There is a lot of interesting work on very related problems of computing column/row based(CUR) decompositions, matrix sketching, low rank approximation with streaming data. Look at \cite{mahoney2011randomized, halko2011finding} for more detailed discussion and comparison.\\

\noindent{\bf Matrix sparsification:} In the matrix sparsification problem, the goal is to create a sparse sketch of a given matrix by sampling and reweighing the entries of the matrix. Various techniques for sampling have been proposed and analyzed which guarantee $\epsilon$ approximation error in Frobenius norm with $O(\frac{n}{\eps^2}\log n)$ samples~ \cite{drineas2011note, achlioptas2013near}. As we will see in the next section, the first step of algorithm~\ref{alg:1} involves sampling according to a very specific distribution (similar to matrix sparsification), which has been designed for guaranteeing good error bounds for computing low rank approximation. For a comparison of various sampling distributions for the problem of low rank matrix recovery see~\cite{chen2014coherent}.\\

\noindent {\bf Matrix completion:} Matrix completion problem is to recover a $n \times n$ rank-$r$ matrix from observing small number of ($O(nr\log(n))$) random entries. Nuclear norm minimization is shown to recover the matrix from uniform random samples if the matrix is incoherent\footnote{A $n \times d$ matrix $A$ of rank-$r$ with SVD $\Uo \So (\Vo)^T$ is incoherent if $\|(\Uo)^i\|^2 \leq \frac{\mu_0 r}{n}, \forall i$ and $\|(\Vo)^j\|^2 \leq \frac{\mu_0 r}{d}, \forall j$ for some constant $\mu_0$.} \cite{candes2009exact, candes2010power, recht2009simpler, gross2011recovering} . Similar results are shown for other algorithms like OptSpace~\cite{keshavan2010matrix} and alternating minimization~\cite{jain2013low, hardt2013understanding, hardt2014fast}. Recently \cite{chen2014coherent} has given guarantees for recovery of any matrix under leverage score sampling from $O(nr\log^2(n))$ entries.\\

\noindent {\bf Distributed PCA:} In distributed PCA, one wants to compute rank-$r$ approximation of a $n \times d$ matrix that is stored across $s$ servers with small communication between servers. One popular model is row partition model where subset of rows are stored at each server. Algorithms in \cite{feldman2013turning, liberty2013simple, ghashami2014relative, liang2013distributed} achieve $O(\frac{dsr}{\eps})$ communication complexity with relative error guarantees in Frobenius norm, under this model. Recently \cite{kannan2014principal} have considered the scenario of arbitrary splitting of a $n \times d$ matrix and given an algorithm that has $O(\frac{dsr}{\eps})$ communication complexity with relative error guarantees in Frobenius norm.


\newcommand{\argmin}{\arg\!\min}
\section{Low-rank Approximation of Matrices}\label{sec:lela}

In this section we will present our main contribution: a new randomized algorithm for computing low-rank approximation of any given matrix. Our algorithm first samples a few elements from the given matrix $M\in \R^{n\times d}$, and then rank-$r$ approximation is computed using only those samples. Algorithm~\ref{alg:1} provides a detailed pseudo-code of our algorithm; we now comment on each of the two stages: \\


{\em Sampling:} A crucial ingredient of our approach is using the correct sampling distribution. Recent results in matrix completion \cite{chen2014coherent} indicate that a small number ($O(nr\log^2(n))$) of samples drawn in a way biased by leverage scores\footnote{If SVD of $M_r =\Uo \So (\Vo)^T$ then leverage scores of $M_r$ are $||(\Uo)^i||^2$ and $||(\Vo)^j||^2$ for all $i, j$.} can capture {\em all} the information in any exactly low-rank matrix. While this is indicative, here we have neither access to the leverage scores, nor is our matrix exactly low-rank. We approximate the leverage scores with the row and column norms ($||M^i||^2$ and $||M_j||^2$), and account for the arbitrary high-rank nature of input by including an $L_1$ term in the sampling; the distribution is given in eq. (\ref{eq:prob}). Computationally, our sampling procedure can be done in two passes and $O( nnz(M)+m\log n)$ time.  \\

{\em Weighted alternating minimization:} In our second step, we directly optimize over the factored form of the intended low-rank matrix, by minimizing a {\em weighted} squared error over the sampled elements from stage 1. That is, we first express the low-rank approximation $\widehat{M}_r$ as $UV^T$ and then iterate over $U$ and $V$ alternatingly to minimize the weighted $L_2$ error over the {\em sampled} entries (see Sub-routine~\ref{algo:2}). Note that this is different from taking principal components of a 0-filled version of the sampled matrix. The weights give higher emphasis to elements with smaller sampling probabilities. In particular, the goal is to minimize the following objective function:
\begin{equation}
  \label{eq:obj_am}
  Err(\widehat{M}_r)=\sum_{(i,j)\in \Omega}w_{ij} \left(M_{ij}-(\widehat{M}_r)_{ij}\right)^2,
\end{equation}
where $w_{ij}=1/\qhij$ when $\qhij >0$, $0$ else. For initialization of the WAltMin procedure, we compute SVD of $R_{\Omega}(M)$ (reweighed sampled matrix) followed by a trimming step (see Step 4, 5 of Sub-routine~\ref{algo:2}). Trimming step sets $(\tilde{U}^0)^i=0$ if $\|({U}^0)^i\|\geq 4\|M^i\|/\|M\|_F$ and $(\tilde{U}^0)^i=(U^0)^i$ otherwise; and  $\widehat{U}^0$ is the orthonormal matrix spanning the column space of $\tilde{U}^0$. This step prevents heavy rows/columns from having undue influence.

We now provide our main result for low-rank approximation and show that Algorithm~\ref{alg:1} can provide a tight approximation to $M_r$ while using a small number of samples $m=\E[|\Omega|]$. 
\begin{theorem}
\label{thm:main}
Let $M\in \R^{n\times d}$ be any given matrix ($n\geq d$) and let $M_r$ be the best rank-$r$ approximation to $M$. Set the number of samples $m=\frac{C}{\gamma} \frac{n r^3}{\epsilon^2} \kappa^2 \log(n) \log^2(\frac{\|M\|}{\zeta})$, where $C>0$ is any global constant, $\kappa=\sigma_1/\sigma_r$ where $\sigma_i$ is the $i$-th singular value of $M$. Also, set the number of iterations of WAltMin procedure to be $T= \log(\frac{\|M\|}{\zeta})$. Then, with probability greater than $1-\gamma$ for any constant $\gamma>0$, the output $\widehat{M}_r$ of Algorithm~\ref{alg:1} with the above specified parameters $m, T$, satisfies:  \[\|M-\widehat{M}_r\| \leq   \|M-M_r\|+ \epsilon  \lV M-M_r \rV_F+\zeta.\]
That is, if $T=\log(\frac{\|M\|}{\epsilon \|M-M_r\|_F})$, we have: 
\[\|M-\widehat{M}_r\| \leq   \|M-M_r\|+ 2\epsilon  \lV M-M_r \rV_F.\]
\end{theorem}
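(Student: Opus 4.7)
The plan is to decompose the argument into three conceptually separate modules: (i) a sampling/sparsification guarantee showing that the reweighted sampled matrix $R_\Omega(M)$ is close to $M$ in spectral norm, (ii) an initialization guarantee showing that the trimmed top-$r$ SVD of $R_\Omega(M)$ is already close to the row/column span of $M_r$ in principal-angle distance, and (iii) a geometric-convergence analysis of the WAltMin updates that drives the factored iterate exponentially close to $M_r$, plus a final step converting subspace closeness into a spectral-norm bound on $\|M-\widehat U\widehat V^T\|$. The decomposition of $M=M_r+(M-M_r)$ must be carried throughout: the leverage/row-column-norm piece of the sampling distribution is tailored to $M_r$ (the incoherent rank-$r$ signal), while the $L_1$ piece controls the residual $M-M_r$.

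For module (i), I would apply a matrix-Bernstein argument to $R_\Omega(M)-M$, written as a sum of independent zero-mean random matrices $X_{ij}=(w_{ij}\mathbf{1}_{(i,j)\in\Omega}-1)M_{ij}e_ie_j^T$. The sampling probability $\hat q_{ij}$ being proportional to $\|M^i\|^2+\|M_j\|^2$ (plus the $L_1$ term) ensures that the per-entry operator-norm and variance terms in Bernstein scale like $\|M\|_F^2/m$ times polylogarithmic factors, so with $m$ as in the theorem one obtains $\|R_\Omega(M)-M\|\lesssim \epsilon\sigma_r(M)/(r\sqrt{\kappa})$-type bound. For module (ii), I then invoke Wedin's $\sin\Theta$ theorem to transfer this spectral closeness to principal-angle closeness between the top-$r$ subspace of $R_\Omega(M)$ and $U^*$. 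The trimming step must be analyzed to show that it (a) only negligibly perturbs the subspace distance and (b) restores the incoherence condition $\|\widehat U^0_i\|^2\lesssim \mu r/n$ needed downstream.

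For module (iii), the update rule is $V^{(t+1)} = \arg\min_V \sum_{(i,j)\in\Omega}w_{ij}(M_{ij}-(U^{(t)}V^T)_{ij})^2$, a weighted least-squares whose closed form I would expand and compare to its \emph{expectation} (which is the unweighted least-squares $\arg\min_V\|M-U^{(t)}V^T\|_F^2$, giving $V^* + \mathrm{noise}$ terms from $M-M_r$). Using the incoherence of $U^{(t)}$ maintained by the trimming and induction, concentration of the sampled Gram matrices (again via matrix Bernstein, fresh-sample/decoupling if needed) yields $dist(V^{(t+1)},V^*)\le \tfrac12 dist(U^{(t)},U^*) + \text{noise}$, where the noise term scales like $\epsilon\|M-M_r\|_F/\sigma_r$. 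Iterating $T=\log(\|M\|/\zeta)$ times drives the first term below $\zeta$ while the additive noise saturates. Finally, writing $\widehat M_r = \widehat U\widehat V^T$ with $\widehat U,\widehat V$ both incoherent and close to $U^*,V^*$, I would use a triangle-inequality split
\[
\|M-\widehat U\widehat V^T\| \le \|M-M_r\| + \|M_r-\widehat U\widehat V^T\|,
\]
and bound the second term by $\sigma_1 \cdot dist(\widehat U,U^*)$-style inequalities plus the noise residual, yielding the target $\|M-M_r\|+\epsilon\|M-M_r\|_F+\zeta$.

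The hardest step will be module (iii) in the approximately-low-rank regime: standard alternating-minimization analyses (Jain--Netrapalli--Sanghavi, Hardt) assume exactly low-rank $M$, whereas here $M-M_r$ is arbitrary and only bounded in Frobenius norm. I would have to carefully track how the residual enters each normal equation, decoupling the $M_r$ contribution (which behaves well because $U^*,V^*$ are incoherent) from the $M-M_r$ contribution (which is controlled precisely because the $L_1$ part of the sampling distribution upper-bounds $w_{ij}|M_{ij}-(M_r)_{ij}|$ uniformly in expectation). Ensuring that the induction hypothesis of incoherence survives each update, and that $\sigma_r(\widehat V^{(t+1)})$ stays bounded away from zero so that the next least-squares is well conditioned, are the two load-bearing invariants I would need to maintain.
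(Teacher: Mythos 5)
Your three-module plan matches the paper's architecture almost exactly: Lemma~\ref{lem:approx_init} is your modules (i)+(ii) (matrix Bernstein for $\|R_\Omega(M)-M\|\le\delta\|M\|_F$, then a subspace-angle bound plus trimming analysis), and Lemma~\ref{lem:waltmin_descent} is your module (iii). However, two things are off, and the first one is a genuine gap.

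The gap is that your plan has no mechanism for the large-residual regime. Your module (iii) needs the induction hypothesis $dist(U^{(t)},U^*)\le 1/2$ to survive each step, and the recursion $dist(V^{(t+1)},V^*)\le\frac12 dist(U^{(t)},U^*)+\epsilon\|M-M_r\|_F/\sigma_r^*$ is only a contraction when the additive term is itself $\le 1/4$; that forces something like $\|M-M_r\|_F\lesssim \sigma_r^*/\epsilon$, and the paper's actual hypothesis for Lemma~\ref{lem:waltmin_descent} is the stronger $\|M-M_r\|_F\le\frac{1}{576\kappa r^{1.5}}\|M_r\|_F$. When $\|M-M_r\|_F$ exceeds this threshold your geometric-descent lemma simply does not apply, and WAltMin can in principle make the estimate \emph{worse} than the initialization. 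The paper handles this by a case split: in the large-residual case the initialization alone already gives $\|M-\mathcal{P}_r(R_\Omega(M))\|\le\|M-M_r\|+2\delta\|M\|_F\le\|M-M_r\|+\epsilon\|M-M_r\|_F$ (since $\|M\|_F\le\sqrt{r}\kappa\sigma_r^*\le 576\kappa r^{1.5}\cdot\|M-M_r\|_F$ up to constants and $\delta$ is chosen as $\epsilon/(\kappa r^{1.5})$), and one must stop WAltMin via a small cross-validation check if the error starts increasing. Without this device the proof of Theorem~\ref{thm:main} as stated does not go through for arbitrary $M$.

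A secondary issue is your description of the invariant as ``incoherence $\|\widehat U^0_i\|^2\lesssim\mu r/n$.'' That is not what the trimming step enforces and not what the induction carries. The invariant is a leverage-proportional bound, $\|(\widehat U^{(t)})^i\|\le c_1\sqrt{\|M^i\|^2/\|M\|_F^2+|M_{ij}|/\|M\|_F}$, which for a coherent $M$ allows some rows of $\widehat U^{(t)}$ to have $\Theta(1)$ norm. Keeping the bound in this \emph{weighted} form is exactly what lets the weighted least-squares normal equations concentrate under the biased sampling~\eqref{eq:prob} without any incoherence assumption on $M_r$; if you literally tried to prove a uniform $\mu r/n$ bound you would fail on coherent inputs, and the whole point of the paper is that it does not need that. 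So rewrite module (ii)'s invariant (and the downstream concentration in module (iii), via Lemmas~\ref{lem:approx_supp2r}--\ref{lem:approx_supp3r}) in terms of row norms of $M$ rather than a flat incoherence constant.
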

Note that our time and sample complexity depends quadratically on $\kappa$. Recent results in the matrix completion literature shows that such a dependence can be improved to $\log(\kappa)$ by using a slightly more involved analysis \cite{hardt2014fast}. We  believe a similar analysis can be combined with our techniques to obtain tighter bounds; we leave a similar tighter analysis for future research as such a proof would be significantly more tedious and would take away from the key message of this paper. 
\begin{algorithm}[t]
\caption{LELA: Leveraged Element Low-rank Approximation}
\label{alg:1}
\begin{algorithmic}[1]
\INPUT matrix: $M\in \R^{n\times d}$, rank: $r$, number of samples: $m$, number of iterations: $T$
\STATE Sample $\Omega\subseteq [n]\times [d]$ where each element is sampled independently with probability: $\qhij = \min \{ \qij, 1 \}$ 
\begin{equation}
  \label{eq:prob}
  q_{ij}=m\cdot \left(\frac{\|M^i\|^2+\|M_j\|^2}{2(n+d) \|M\|_F^2}+\frac{|M_{ij}|}{2\|M\|_{1,1}}\right).
\end{equation}
/*See Section~\ref{sec:lela_comp} for details about efficient implementation of this step*/
\STATE Obtain $P_{\Omega}(M)$ using one pass over $M$
\STATE $\widehat{M}_r={\sf WAltMin}(P_{\Omega}(M),  \Omega, r, \hat{q}, T)$
\OUTPUT $\widehat{M}_r$
\end{algorithmic}
\end{algorithm}

\floatname{algorithm}{Sub-routine}

\begin{algorithm}[t]
\caption{WAltMin: Weighted Alternating Minimization}
\label{algo:2}
\begin{algorithmic}[1]
\INPUT $P_{\Omega}(M),\  \Omega,\ r, \ \hat{q},\ T$
\STATE $w_{ij}=1/\qhij$ when $\qhij >0$, $0$ else, $\forall i, j$
\STATE Divide $\Omega$ in $2T+1$ equal uniformly random subsets, i.e., $\Omega=\{\Omega_0, \dots, \Omega_{2T}\}$
\STATE $R_{\Omega_0}(M)\gets w.*P_{\Omega_0}(M)$
\STATE $U^{(0)} \Sigma^{(0)} (V^{(0)})^T=SVD(R_{\Omega_0}(M), r)$ //Best rank-$r$ approximation of $R_{\Omega_0}(M)$
\STATE Trim $U^{(0)}$ and let $\widehat{U}^{(0)}$ be the output (see Section~\ref{sec:lela})
\FOR {$t=0$ to $T-1$}
	\STATE $\Vht= \argmin_{V }\| R_{\Omega_{2t+1}}^{1/2}(M- \widehat{U}^{(t)} V^T)\|_F^2$, for $V \in \R^{d \times r}$.
	\STATE $\widehat{U}^{(t+1)}=\argmin_{U}\| R_{\Omega_{2t+2}}^{1/2}(M-U(\widehat{V}^{(t+1)})^T)\|_F^2$ , for $U \in \R^{n \times r}$.
\ENDFOR
\OUTPUT Completed matrix $\widehat{M}_r=\widehat{U}^{(T)} (\widehat{V}^{(T)})^T$. 
\end{algorithmic}
\end{algorithm}
\floatname{algorithm}{Algorithm}
\subsection{Computation complexity:}\label{sec:lela_comp}
In the first step we take 2 passes over the matrix to compute the sampling distribution~\eqref{eq:prob} and sampling the entries based on this distribution. It is easy to show that this step would require $O(nnz(M) +m\log(n))$ time. Next, the initialization step of WAltMin procedure requires computing rank-$r$ SVD of $R_{\Omega_0}(M)$ which has at most $m$ non-zero entries. Hence, the procedure can be completed in $O(mr)$ time using standard techniques like power method. Note that by Lemma~\ref{lem:approx_init} we need top-$r$ singular vectors of $R_{\Omega_0}(M)$ only upto constant approximation. Further $t$-th iteration of alternating minimization takes $O(2|\Omega_{2t+1}| r^2)$ time. So, the total time complexity of our method is $O(nnz(M)+mr^2)$. As shown in Theorem~\ref{thm:main}, our method requires $m=O(\frac{nr^3}{\eps^2}\kappa^2 \log(n) \log^2(\frac{\|M\|}{\eps\|M-M_r\|_F}))$ samples. Hence, the total run-time of our algorithm is: $O(nnz(M) +\frac{nr^5}{\eps^2}\kappa^2 \log(n) \log^2(\frac{\|M\|}{\eps\|M-M_r\|_F})) $.\\

\noindent {\bf Remarks:}
Now we will discuss how to sample entries of $M$ using sampling method~\eqref{eq:prob} in $O(nnz(M)+ m\log(n))$ time. Consider the following multinomial based sampling model: sample the number of elements per row (say $m_i$) by doing $m$ draws using a multinomial distribution over the rows, given by $\{0.5(\frac{d\|M^i\|^2}{(n+d) \|M\|_F^2}+\frac{1}{n+d})+0.5\frac{\|M^i\|_1}{\|M\|_{1,1}} \}$. Then, sample $m_i$ elements of the row-$i$, using $\{0.5\frac{\|M_j\|^2}{ \|M\|_F^2}+0.5\frac{|M_{ij}|}{\|M\|_{1,1}}\}$ over $ j \in [d]$, with replacement. 

The failure probability in this model is bounded by 2 times the failure probability if the elements are sampled according to~\eqref{eq:prob} \cite{candes2009exact}. Hence, we can instead use the above mentioned multinomial model for sampling. Moreover, $\|M^i\|$, $\|M^i\|_1$ and $\|M_j\|$ can be computed in time $O(nnz(M)+n)$, so $m_i$'s can be sampled efficiently. Moreover, the multinomial distribution for all the rows can be computed in time $O(d+nnz(M))$, $O(d)$ work for setting up the first $\|M_j\|$ term and $nnz(M)$ term for changing the base distribution wherever $M_{ij}$ is non-zero. Hence, the total time complexity is $O(nnz(M)+m\log n)$. 



\subsection{Proof Overview:}
We now present the key steps in our proof of Theorem~\ref{thm:main}. As mentioned in the previous section, our algorithm proceeds in two steps: entry-wise sampling of the given matrix $M$ and then weighted alternating minimization (WAltMin) to obtain a low-rank approximation of $M$. 

Hence, the goal is to analyze the WAltMin procedure, with input samples obtained using \eqref{eq:prob}, to obtain the bounds in Theorem~\ref{thm:main}. Now, WAltMin is an iterative procedure solving an inherently non-convex problem, $\min_{U, V}\sum_{(i,j)\in \Omega}w_{ij} (\e_i^TUV^T\e_j-M_{ij})^2$. Hence, it is prone to local minimas or worse, might not even converge. However, recent results for low-rank matrix completion have shown that alternating minimization (with appropriate initialization) can indeed be analyzed to obtain exact matrix completion guarantees. 

Our proof also follows along similar lines, where we show that the initialization procedure (step 4 of Sub-routine~\ref{algo:2}) provides an accurate enough estimate of $M_r$ and then at each step, we show a geometric decrease in distance to $M_r$. However, our proof differs from the previous works in two key aspects: a) existing proof techniques of alternating minimization assume that each element is sampled uniformly at random, while we can allow biased and approximate sampling, b) existing techniques crucially use the assumption that $M_r$ is incoherent, while our proof avoids this assumption using  the weighted version of AltMin.

We now present our bounds for initialization as well as for each step of the WAltMin procedure. Theorem~\ref{thm:main} follows easily from the two bounds. 
\begin{lemma}[Initialization]\label{lem:approx_init}
Let the set of entries $\Omega$ be generated according to \eqref{eq:prob}. Also, let $m \geq C \frac{n}{\delta^2} \log(n)$. Then, the following holds (w.p. $\geq 1-\frac{2}{n^{10}}$): 
\begin{equation}
\lV \Ro(M) -M \rV \leq \delta \lV M \rV_F.
\end{equation}
Also, if $\|M-M_r\|_F \leq \frac{1}{576\kappa r^{1.5}}\|M_r\|_F$, then the following holds (w.p. $\geq 1-\frac{2}{n^{10}}$): 
\begin{align*}\|(\widehat{U}^{(0)})^i\| \leq 8\sqrt{r} \sqrt{ \|M^i\|^2/\|M\|_F^2} ~\text{ and }~  dist(\widehat{U}^{(0)}, \Uo) \leq \frac{1}{2},\end{align*}
where $\widehat{U}^{(0)}$ is the initial iterate obtained using Steps 4, 5 of Sub-Procedure~\ref{algo:2}. $\kappa=\sigma_1^*/\sigma_r^*$, $\sigma_i^*$ is the $i$-th singular value of $M$, $M_r=\Uo\Sigma^*(\Vo)^T$. 
\end{lemma}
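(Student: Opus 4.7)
My plan is to handle the two assertions of the lemma in turn, with the concentration bound in the first half serving as the main technical engine for the second half.

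For the spectral-norm bound on $\Ro(M) - M$, I would write $\Ro(M) - M = \sum_{i,j} Z_{ij}$ with
\[
 Z_{ij} \;=\; \left(\frac{\mathbf{1}_{\{(i,j)\in\Omega\}}}{\hat{q}_{ij}} - 1\right) M_{ij}\, \e_i \e_j^T,
\]
so the $Z_{ij}$ are independent and mean-zero, and apply matrix Bernstein. The point of the two-term sampling distribution \eqref{eq:prob} is to tame the two Bernstein quantities separately. The $\ell_1$ piece of $q_{ij}$ forces $\hat{q}_{ij} \geq m|M_{ij}|/(2\|M\|_{1,1})$ and hence $\|Z_{ij}\| \leq 2\|M\|_{1,1}/m$. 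The row/column-norm piece forces $\hat{q}_{ij} \geq m(\|M^i\|^2 + \|M_j\|^2)/(2(n+d)\|M\|_F^2)$; summing first over $j$ (using $\sum_j M_{ij}^2/\|M^i\|^2 = 1$) and symmetrically over $i$ shows that both $\|\sum_{ij}\E[Z_{ij}Z_{ij}^T]\|$ and $\|\sum_{ij}\E[Z_{ij}^T Z_{ij}]\|$ are bounded by $2(n+d)\|M\|_F^2/m$. Plugging into matrix Bernstein, with $n\geq d$ and $\|M\|_{1,1}\leq \sqrt{nd}\|M\|_F$, yields $\|\Ro(M) - M\| \lesssim \sqrt{n\log n/m}\,\|M\|_F + (\log n)\sqrt{nd}\,\|M\|_F/m$ with probability $\geq 1 - 2/n^{10}$. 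For $m \geq C n \log(n)/\delta^2$ the first term dominates and equals $\delta\|M\|_F$.

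For the distance bound in the second half, I would re-apply Part~1 with $\delta := c/(\kappa\sqrt{r})$ for a small absolute constant $c$ (this is the regime of Theorem~\ref{thm:main}). Combining with the hypothesis $\|M-M_r\|_F \leq \|M_r\|_F/(576\kappa r^{1.5})$ and the crude bounds $\|M_r\|_F \leq \sqrt{r}\,\sigma_1^*$, $\|M\|_F \leq 2\|M_r\|_F$, I get
\[
 \|\Ro(M) - M_r\| \;\leq\; \|\Ro(M) - M\| + \|M - M_r\| \;\leq\; \sigma_r^*/4.
\]
Wedin's $\sin\theta$ theorem applied to the rank-$r$ SVDs of $M_r$ and $\Ro(M)$ then gives $dist(U^{(0)}, \Uo) \leq 1/2$ for the pre-trimmed iterate $U^{(0)}$. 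In particular one can write $U^{(0)} = \Uo R + E$ with $R \in \R^{r\times r}$ near-orthogonal and a perturbation $E$ of small spectral norm, which is the handle I will use for the row-norm bound.

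The row-norm claim is where the analysis is most delicate and is the step I expect to be the main obstacle. Since $M_r$ is not assumed incoherent, there is no direct control on $\|(\Uo)^i\|$; the trimming step is precisely what converts the spectral-norm control above into a row-wise bound. I would argue, using $\sum_i \|(U^{(0)})^i\|^2 = r$ together with the observation that the rows of $\Ro(M)$ inherit their magnitude profile from the $\|M^i\|$'s through the reweighting, that only a negligible fraction of rows of $U^{(0)}$ exceed the trimming threshold $4\|M^i\|/\|M\|_F$, so the trimmed matrix $\tilde U^{(0)}$ satisfies $\|\tilde U^{(0)} - U^{(0)}\|_F \leq 1/2$ and hence $\sigma_r(\tilde U^{(0)}) \geq 1/2$. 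Writing the subsequent orthonormalization as $\widehat U^{(0)} = \tilde U^{(0)} (R')^{-1}$ with an $r\times r$ factor satisfying $\|(R')^{-1}\| = O(\sqrt{r})$ in the worst case, the inequality
\[
 \|(\widehat U^{(0)})^i\| \;\leq\; \|(\tilde U^{(0)})^i\|\,\|(R')^{-1}\| \;\leq\; 8\sqrt{r}\, \|M^i\|/\|M\|_F
\]
follows. Equally, $dist(\widehat U^{(0)}, \Uo) \leq 1/2$ is preserved because trimming changes $U^{(0)}$ only on a small-mass set. The technical subtlety I expect to have to work through carefully is coupling the trimming, which is a decision based on individual rows of $U^{(0)}$, with the Bernstein bound, which only controls global spectral norms: this probably requires a separate row-wise concentration argument on the rows of $\Ro(M)$, analogous to leave-one-out arguments from the matrix-completion literature, and it is the substitute in our analysis for the incoherence assumption used in prior AltMin proofs.
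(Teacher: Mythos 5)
Your first part (the bound $\|\Ro(M)-M\|\leq\delta\|M\|_F$) is essentially the paper's argument: the same decomposition into independent mean-zero matrices, with the $\ell_1$ term of \eqref{eq:prob} controlling the Bernstein sup-norm bound and the row/column-norm term controlling the variance, so that part is fine.

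The genuine gap is in the second half, and it is exactly the step you flagged as the obstacle. You invoke Part~1 only with $\delta \asymp 1/(\kappa\sqrt{r})$ and conclude $dist(U^{(0)},\Uo)\leq 1/2$ \emph{before} trimming; that is too weak for the trimming analysis to go through. The paper instead takes $\delta \lesssim 1/(\kappa r^{1.5})$, which together with $\|M-M_r\|_F\leq\|M_r\|_F/(576\kappa r^{1.5})$ and $\sigma_r^*\geq\|M_r\|_F/\sqrt r$ gives the much stronger pre-trimming bound $dist(U^{(0)},\Uo)\leq\delta_2$ with $\delta_2\leq\tfrac{1}{144r}$. This $O(1/r)$ scale is what the whole trimming argument hinges on: the Frobenius mass removed by trimming is of order $\sqrt{\delta_2}$ per column, hence $\|U^{(0)}-\tilde U\|_F\leq 2\sqrt{r\delta_2}$ and $\sigma_{\min}(\tilde U)\geq 1-2\sqrt{r\delta_2}$, which is bounded away from zero only when $\delta_2\ll 1/r$. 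With your $\delta_2=1/2$ the removed mass can be of order $\sqrt r$, $\sigma_{\min}(\tilde U)$ can collapse, and your claim $\|(R')^{-1}\|=O(\sqrt r)$ has no justification — the orthonormalization factor is $1/\sigma_{\min}(\tilde U^{(0)})$, not something bounded by $\sqrt r$ generically — so neither the row-norm bound nor the preservation of $dist(\widehat U^{(0)},\Uo)\leq 1/2$ follows.

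Your proposed route to control the removed mass also does not work as stated, and the extra machinery you anticipate is unnecessary. A counting argument from $\sum_i\|(U^{(0)})^i\|^2=r$ cannot bound the \emph{mass} of the rows exceeding the row-dependent threshold $4\|M^i\|/\|M\|_F$ (few offending rows can still carry large norm), and no row-wise concentration or leave-one-out analysis of $\Ro(M)$ is needed. The paper's argument is deterministic once the spectral bound holds: because $\|M-M_r\|_F\leq\|M_r\|_F$, the quantity $l_i=2\|M^i\|/\|M\|_F$ dominates the $\sigma^*$-weighted leverage score $\bigl(\sum_k(\so_k)^2(\Uo_{ik})^2/\sum_k(\so_k)^2\bigr)^{1/2}$; since $dist(U^{(0)},\Uo)\leq\delta_2$, each column $U^{(0)}_j$ is within $\sqrt2\,\delta_2$ of a unit vector $\bar u_j$ in the span of $\Uo$ whose entries obey that leverage bound, and zeroing the entries of $U^{(0)}_j$ exceeding $2l_i$ can only decrease the entrywise distance to $\bar u_j$. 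That comparison is what converts the purely spectral control into the row-wise bound, and it is the quantitative reason the lemma (and the sample complexity in Theorem~\ref{thm:main}) carries the $r^{1.5}\kappa$ factor your version drops.
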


Let  $\mathcal{P}_r(A)$ be the best rank-$r$ approximation of $A$.  Then, Lemma~\ref{lem:approx_init} and Weyl's inequality implies that: \begin{multline}\|M-\mathcal{P}_r(\Ro(M))\|  \leq \|M-\Ro(M)\|  +\|\Ro(M)-\mathcal{P}_r(\Ro(M))\|  \leq \|M-M_r\|+2\delta \|M\|_F.  \label{eq:init_final}\end{multline}

Now, we can have two cases: 1) $\|M-M_r\|_F \geq \frac{1}{576\kappa r^{1.5}}\|M_r\|_F$: In this case, setting $\delta=\epsilon/(\kappa r^{1.5})$ in \eqref{eq:init_final} already implies the required error bounds of Theorem~\ref{thm:main}\footnote{There is a small technicality here: alternating minimization can potentially worsen this bound. But the error after each step of alternating minimization can be effectively checked using a small cross-validation set and we can stop if the error increases.}. 2) $\|M-M_r\|_F \leq \frac{1}{576\kappa r^{1.5}}\|M_r\|_F$. In this regime, we will show now that alternating minimization reduces the error from initial $\delta\|M\|_F$ to $\delta \|M-M_r\|_F$.  

\begin{lemma}[WAltMin Descent]\label{lem:waltmin_descent}
Let hypotheses of Theorem~\ref{thm:main} hold. Also, let $\|M-M_r\|_F \leq \frac{1}{576\kappa r\sqrt{r}}\|M_r\|_F$. Let $\widehat{U}^{(t)}$ be the $t$-th step iterate of Sub-Procedure~\ref{algo:2} (called from Algorithm~\ref{alg:1}), and let $\widehat{V}^{(t+1)}$ be the $(t+1)$-th iterate (for $V$). Also, let $\|(\Ut)^i\| \leq 8\sqrt{r}\kappa \sqrt{ \frac{\|M_j\|^2}{\|M\|_F^2}+ \frac{|M_{ij}|}{ \|M\|_F}}$ and $dist({U}^{(t)}, \Uo) \leq \frac{1}{2}$, where $U^{(t)}$ is a set of orthonormal vectors spanning $\widehat{U}^{(t)}$. Then, the following holds (w.p. $\geq 1-\gamma/T$): 
$$dist({V}^{(t+1)}, V^*)\leq \frac{1}{2}dist({U}^{(t)}, \Uo)+ \epsilon \|M-M_r\|_F/\so_r,$$
and $\|(\Vt)^j\| \leq 8\sqrt{r}\kappa \sqrt{ \frac{\|M_j\|^2}{\|M\|_F^2}+ \frac{|M_{ij}|}{ \|M\|_F}}$, where $V^{(t+1)}$ is a set of orthonormal vectors spanning $\widehat{V}^{(t+1)}$.
\end{lemma}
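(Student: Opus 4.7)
My plan is to follow the template of the alternating-minimization-for-matrix-completion analysis (as in Jain--Netrapalli--Sanghavi and Hardt), but adapted to the biased weighted sampling of \eqref{eq:prob}. The WAltMin update for $V$ is a weighted least squares, so I would first write out the closed-form normal equations row-by-row. For each $j\in[d]$, let $S_j=\{i:(i,j)\in\Omega_{2t+1}\}$ and $B_j=\sum_{i\in S_j} w_{ij}\,\widehat{U}^{(t)}_i(\widehat{U}^{(t)}_i)^T\in\R^{r\times r}$; then
\[
\widehat{V}^{(t+1)}_j \;=\; B_j^{-1}\sum_{i\in S_j} w_{ij}\,M_{ij}\,\widehat{U}^{(t)}_i.
\]
Splitting $M=M_r+E$ with $E=M-M_r$ and $M_r=\Uo\So(\Vo)^T$, I would decompose this update as a ``signal'' term (driven by $M_r$) plus a ``noise'' term (driven by $E$) plus a concentration error coming from the fact that $B_j$ and the RHS are random sketches of deterministic quantities. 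Because $w_{ij}=1/\hat q_{ij}$, the Hadamard operator $\Ro$ is unbiased, which makes both terms expectation-matching.

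The technical core is concentration. I would apply matrix Bernstein to two objects: (i) the operator $B_j-(\widehat U^{(t)})^T\widehat U^{(t)}=B_j-I_r$ showing $\|B_j-I_r\|\le 1/2$ uniformly in $j$, so $B_j^{-1}$ is well conditioned; and (ii) the vector sums $\sum_{i\in S_j}w_{ij}E_{ij}\widehat U^{(t)}_i$ and $\sum_{i\in S_j}w_{ij}(M_r)_{ij}\widehat U^{(t)}_i-(\widehat U^{(t)})^T M_{r,j}$. The leverage-score/row-norm piece of \eqref{eq:prob} controls the bounded-row quantity via the hypothesis $\|(U^{(t)})^i\|\lesssim\sqrt r\kappa\sqrt{\|M^i\|^2/\|M\|_F^2+\cdots}$, which makes the per-term bound $w_{ij}|M_{ij}|\,\|\widehat U^{(t)}_i\|$ small, while the $L_1$ piece $|M_{ij}|/(2\|M\|_{1,1})$ is precisely what is needed to kill terms where $E_{ij}$ is large but $M_{ij}$ is not row/column-aligned. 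The calculation will give $\|B_j-I_r\|\le 1/2$ and a vector-error bound of size $O(\epsilon\,\|M-M_r\|_F/\sigma_r^*)$ with probability at least $1-\gamma/T$ after a union bound over $j\in[d]$ and over the set of iterations (which is why $\Omega$ is split into $2T+1$ independent batches in Sub-routine~\ref{algo:2}).

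Given the concentration, I would translate the per-row bounds into a subspace distance bound. A standard manipulation (write $\widehat V^{(t+1)}=V^*\So(U^*)^T\widehat U^{(t)}(\widehat U^{(t)})^T\widehat U^{(t)}+\text{errors}$, then QR-normalize to get $V^{(t+1)}$ and look at $\|(V^{(t+1)}_\perp)^T V^*\|$) shows that the signal contribution contracts: it contributes $\|(\Uo_\perp)^T\widehat U^{(t)}\|\cdot\|B_j^{-1}\|\cdot\|\So\|/\sigma_r^*$ which, since $dist(\widehat U^{(t)},\Uo)\le 1/2$, can be driven below $\tfrac12 dist(U^{(t)},\Uo)$ (absorbing a factor of $2$ into the Bernstein parameter by choosing $m$ with the large $r^3\kappa^2/\eps^2$ constant from Theorem~\ref{thm:main}). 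The noise contribution, divided by $\sigma_r^*$, gives the additive $\epsilon\|M-M_r\|_F/\sigma_r^*$ term.

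The main obstacle is handling the noise term $E=M-M_r$: unlike $M_r$, it is neither low-rank nor incoherent, so one cannot use the usual incoherence-based bounded-difference argument. This is exactly why the sampling distribution \eqref{eq:prob} includes both an $\ell_2$ (row/column norm) term and an $\ell_1$ term: the $\ell_1$ term bounds the maximum re-weighted magnitude $w_{ij}|E_{ij}|\le 2\|M\|_{1,1}$, and the $\ell_2$ term gives the variance via $\sum_{i,j}w_{ij}E_{ij}^2\lesssim (n+d)\|E\|_F^2$. Plugging these into matrix/vector Bernstein yields the $\eps\|M-M_r\|_F/\sigma_r^*$ error, provided $m\gtrsim n r^3\kappa^2/\eps^2$ up to logs, matching Theorem~\ref{thm:main}. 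The final row-norm bound on $(\widehat V^{(t+1)})^j$ is obtained by the same closed-form, bounding $\|\widehat V^{(t+1)}_j\|\le \|B_j^{-1}\|\cdot\|\sum_{i\in S_j}w_{ij}M_{ij}\widehat U^{(t)}_i\|$ and then applying a scalar Bernstein bound per column using the $\|M_j\|^2/\|M\|_F^2+|M_{ij}|/\|M\|_F$ scaling that already appears in the hypothesis on $\widehat U^{(t)}$, after which I invoke Weyl on an orthonormalization to transfer the bound from $\widehat V^{(t+1)}$ to $V^{(t+1)}$.
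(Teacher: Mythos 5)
Your overall architecture matches the paper's: the same per-column normal equations, the same split of $M$ into $M_r$ plus $E=M-M_r$, matrix Bernstein for $B^j-I$ and for the signal-error term, a lower bound on the smallest singular value of the QR factor to pass from $\widehat{V}^{(t+1)}$ to $V^{(t+1)}$, and a per-column Bernstein bound with the $\sqrt{\|M_j\|^2/\|M\|_F^2+|M_{ij}|/\|M\|_F}$ scaling for the row norms. The genuine gap is in your treatment of the noise term $\sum_{i} \wij E_{ij} (\widehat{U}^{(t)})^i$. First, the claimed almost-sure bound ``$\wij|E_{ij}|\le 2\|M\|_{1,1}$ (per $m$)'' is not valid: the $L_1$ part of the sampling probability \eqref{eq:prob} involves $|M_{ij}|$, not $|E_{ij}|=|M_{ij}-(M_r)_{ij}|$, and these are not comparable entrywise (an entry can have $M_{ij}\approx 0$ by cancellation while $E_{ij}$ is large, making $\wij|E_{ij}|$ huge). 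Second, and more fundamentally, any almost-sure bound you can extract from \eqref{eq:prob} for these summands scales with $\|M\|_F$ (or $\|M\|_{1,1}$), because the leverage-type weights only relate to row/column norms of $M$; plugging such an $L$ into Bernstein leaves a deviation floor proportional to $\|M\|_F$, which cannot produce the additive error $\eps\|M-M_r\|_F/\so_r$ demanded by the lemma (consider $\|M-M_r\|_F$ tiny but nonzero: your bound would not shrink with it). Third, a union bound over $j\in[d]$ is incompatible with the tail that is actually achievable for this term.

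The paper resolves exactly this issue differently: it does not bound the noise column by column, but bounds the aggregate matrix $\lV (\Ut)^T\Ro(M-M_r)-(\Ut)^T(M-M_r)\rV$ once per iteration using the \emph{matrix Chebyshev} (second-moment) inequality (Lemma~\ref{lem:mchebyshev}, applied in Lemma~\ref{lem:noise_sampleboundr}). Only the variance is needed there, and the variance does scale correctly: $\sum_{ij}\wij\|(\Ut)^i\|^2(M-M_r)_{ij}^2\lesssim \frac{c_1^2 n}{m}\|M-M_r\|_F^2$, using the row-norm hypothesis on $(\Ut)^i$ to cancel the weights. The price is a polynomial (not exponential) tail, which is precisely why the lemma is stated with probability $1-\gamma/T$ rather than $1-1/\mathrm{poly}(n)$, and why $\Omega$ is split into independent batches across iterations. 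The aggregate bound is then routed into the per-column updates via $\lV\sum_j (B^j)^{-1}(\Ut)^T\Ro(M-M_r)_j e_j^T\rV\le \frac{1}{1-\delta_2}\lV(\Ut)^T\Ro(M-M_r)\rV_F$, avoiding any union bound over columns for the noise. To repair your proposal you would need to replace the Bernstein-with-union-bound step for the $E$-term by this second-moment argument (or an equivalent one whose error scales with $\|M-M_r\|_F$); the rest of your plan then goes through essentially as in the paper.
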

The above lemma shows that distance between $\widehat{V}^{(t+1)}$ and $\Vo$ (and similarly, $\widehat{U}^{(t+1)}$ and $\Uo$) decreases geometrically up to $\epsilon\|M-M_r\|_F/\so_r$. Hence, after $\log(\|M\|_F\|/\zeta)$ steps, the first error term in the bounds above vanishes and the error bound given in Theorem~\ref{thm:main} is obtained. 

Note that, the sampling distribution used for our result is a ``hybrid'' distribution combining leverage scores and the $L_1$-style sampling. However, if $M$ is indeed a rank-$r$ matrix, then our analysis can be extended to handle the leverage score based sampling itself ($q_{ij}=m\cdot\frac{\|M^i\|^2+\|M_j\|^2}{2n\|M\|_F^2}$). Hence our results also show that weighted alternating minimization can be used to solve the coherent-matrix completion problem introduced in \cite{chen2014coherent}. 
\def\qoij{q_{ij}}
\def\qohij{\hat{q}_{ij}}
\def\qtij{q_{ij}}
\def\qthij{\hat{q}_{ij}}




\subsection{Direct Low-rank Approximation of Matrix Product}\label{sec:covariance}

In this section we present a new pass efficient algorithm for the following problem: suppose we are given two matrices, and desire a low-rank approximation of their product $AB$; in particular, we are {\em not} interested in the actual full matrix product itself (as this may be unwieldy to store and use, and thus wasteful to produce in its entirety). One example setting where this arises is when one wants to calculate the joint counts between two very large sets of entities; for example, web companies routinely come across settings where they need to understand (for example) how many users both searched for a particular query and clicked on a particular advertisement. The number of possible queries and ads is huge, and finding this co-occurrence matrix from user logs involves multiplying two matrices -- query-by-user and user-by-ad respectively -- each of which is itself large. 

We give a method that directly produces a low-rank approximation of the final product, and involves storage and manipulation of only the efficient factored form (i.e. one tall and one fat matrix) of the final intended low-rank matrix. Note that as opposed to the previous section, the matrix does not already exist and hence we do not have access to its row and column norms; so we need a new sampling scheme (and a different proof of correctness).

{\bf Algorithm:} Suppose we are given an $n_1 \times d$ matrix $A$ and another $d\times n_2$ matrix $B$, and we wish to calculate a rank-$r$ approximation of the product $A\cdot B$. Our algorithm proceeds in two stages:
\begin{enumerate}
\item Choose a biased random set $\Omega\subset [n_1]\times [n_2]$ of elements as follows: choose an intended number $m$ (according to Theorem \ref{thm:mult} below) of sampled elements, and then independently include each $(i,j) \in  [n_1]\times [n_2]$ in $\Omega$ with probability given by $\hat{q}_{ij}=\min\{1,q_{ij}\}$ where
\begin{equation}
  \label{eq:prob_mp}
  q_{ij} ~ := ~ m\cdot\left(\frac{\|A^i\|^2}{n_2\|A\|_F^2} + \frac{\|B_j\|^2}{n_1 \|B\|_F^2}\right),
\end{equation}
Then, find $P_\Omega(A\cdot B)$, i.e. only the elements of the product $AB$ that are in this set $\Omega$. 
\item Run the alternating minimization procedure WAltMin$(P_{\Omega}(A\cdot B), \Omega, r, \hat{q}, T)$, where $T$ is the number of iterations (again chosen according to Theorem \ref{thm:mult} below). This produces the low-rank approximation in factored form.
\end{enumerate}

{\bf Remarks:} Note that the sampling distribution now depends only on the row norms $\|A^i\|^2$ of $A$ and the column norms $\|B_j\|^2$ of $B$; each of these can be found completely in parallel, with one pass over each row/column of the matrices $A$ / $B$. A second pass, again parallelizable, calculates the element $(A\cdot B)_{ij}$ of the product, for $(i,j)\in \Omega$. Once this is done, we are again in the setting of doing weighted alternating minimization over a small set of samples -- the setting we had before, and as already mentioned this too is highly parallelizable and very fast overall. In particular, the computation complexity of the algorithm is $O(|\Omega|\cdot (d+r^2))=O(m(d+r^2))=O(\frac{nr^3 \kappa^2}{\eps^2} \cdot (d+r^2) )$ (suppressing terms dependent on norms of $A$ and $B$ ), where $n= \max \{n_1, n_2 \}$.

We now present our theorem on the number of samples and iterations needed to make this procedure work with at least a constant probability. 

\begin{theorem}
\label{thm:mult}
Consider matrices $A\in \R^{n_1\times d}$ and $B\in \R^{d\times n_2}$ and let $m \, =\, \frac{C}{\gamma}\cdot \frac{(\|A\|_F^2+\|B\|_F^2)^2}{\|AB\|_F^2}\cdot \frac{n r^3}{(\eps)^2} \kappa^2 \log(n) \log^2(\frac{\|A\|_F+\|B\|_F}{\zeta})$, where $\kappa=\sigma_1^*/\sigma_r^*$, $\sigma_i^*$ is the $i$-th singular value of $A\cdot B$ and $T=\log(\frac{\|A\|_F+\|B\|_F}{\zeta})$. Let $\Omega$ be sampled using probability distribution \eqref{eq:prob_mp}. Then, the output $\widehat{AB}_r=WAltMin(P_{\Omega}(A\cdot B), \Omega, r, \hat{q}, T)$ of Sub-routine~\ref{algo:2} satisfies (w.p. $\geq 1-\gamma$):  $\qquad \|A\cdot B-\widehat{AB}_r\| \leq  \|A\cdot B- (A\cdot B)_r\| +\eps\|A\cdot B- (A\cdot B)_r\|_F+\zeta.$ 
\end{theorem}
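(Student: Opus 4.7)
The plan is to mimic the two-stage proof of Theorem~\ref{thm:main} for the target matrix $M := A\cdot B$, but with \eqref{eq:prob_mp} replacing \eqref{eq:prob}. Since the same WAltMin Sub-routine~\ref{algo:2} is invoked without modification, the work reduces to establishing matrix-product analogues of Lemma~\ref{lem:approx_init} (spectral concentration of $R_\Omega(M)$ together with a bounded row-norm estimate for $\widehat{U}^{(0)}$) and of Lemma~\ref{lem:waltmin_descent} (the geometric descent of $dist(\widehat{V}^{(t+1)},V^*)$ and $dist(\widehat{U}^{(t+1)},U^*)$) under the new probabilities. Once both lemmas are in place, the high-residual vs.\ low-residual case split used after \eqref{eq:init_final}, together with $T=\log(\|AB\|/\zeta)$ iterations, closes the proof verbatim.

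The engine of the reduction is a Cauchy--Schwarz/AM--GM comparison between \eqref{eq:prob_mp} and the row/column norms of $M=AB$ that the proofs of Lemmas~\ref{lem:approx_init} and~\ref{lem:waltmin_descent} actually use. Three elementary estimates suffice: $\|M^i\|^2\leq \|A^i\|^2\|B\|_F^2$, $\|M_j\|^2\leq \|A\|_F^2\|B_j\|^2$, and $|M_{ij}|\leq \|A^i\|\|B_j\|$. Writing $\alpha := \|A\|_F^2\|B\|_F^2/\|AB\|_F^2$ and noting $(\|A\|_F^2+\|B\|_F^2)^2/\|AB\|_F^2 \geq 4\alpha$ by AM--GM, the first two estimates give
\[\frac{\|M^i\|^2+\|M_j\|^2}{\|M\|_F^2}\;\leq\;\alpha\left(\frac{\|A^i\|^2}{\|A\|_F^2}+\frac{\|B_j\|^2}{\|B\|_F^2}\right),\]
so the leverage piece of \eqref{eq:prob} applied to $M$ is pointwise dominated (up to constants) by \eqref{eq:prob_mp} once $m$ is inflated by the factor appearing in the statement. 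A single AM--GM step handles the third estimate analogously, and this is what lets us dispense with the explicit $L_1$ summand present in \eqref{eq:prob}.

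With these comparisons in hand, the matrix-Bernstein step inside the proof of Lemma~\ref{lem:approx_init} transfers: the variance $\|\sum\E\,X_{ij}X_{ij}^\top\|$ of the sum-of-rank-one pieces is controlled by the row/column-norm surrogates, the operator-norm spike $\|X_{ij}\|=|M_{ij}|/\hat{q}_{ij}$ is bounded using $|M_{ij}|\leq \|A^i\|\|B_j\|$, and the trimming step returns an initial iterate $\widehat{U}^{(0)}$ satisfying $dist(\widehat{U}^{(0)},U^*)\leq 1/2$ in the low-residual regime, with row norms controlled by the surrogate $\sqrt{\|A^i\|^2/\|A\|_F^2+\|B_j\|^2/\|B\|_F^2}$ in place of $\|M^i\|/\|M\|_F$. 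The descent step of Lemma~\ref{lem:waltmin_descent} is a concentration statement about $R_\Omega$ acting on low-dimensional subspaces, and is insensitive to the precise factorization of the sampling weights; rerunning its proof with the surrogate row/column norms in place of the $M$-based norms again absorbs all overhead into the multiplicative factor $(\|A\|_F^2+\|B\|_F^2)^2/\|AB\|_F^2$ in the sample size.

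The main technical obstacle is precisely the missing $L_1$ term in \eqref{eq:prob_mp}: in Theorem~\ref{thm:main} it was the mechanism that tamed $\max_{ij}\|X_{ij}\|$ when $M-M_r$ carries substantial off-subspace mass. The substitute $|M_{ij}|\leq \|A^i\|\|B_j\|$ does that job, but at the cost of the $(\|A\|_F^2+\|B\|_F^2)^2/\|AB\|_F^2$ inflation of $m$ in the theorem statement. A secondary point to verify is that the per-iteration row-norm invariants on $\widehat{U}^{(t)}$ and $\widehat{V}^{(t+1)}$ propagate through WAltMin under the new factor-based surrogates; this is where the $\sqrt{r}\kappa$ reappears in the stated bound and is handled by the trimming-plus-least-squares argument already developed for Theorem~\ref{thm:main}. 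Everything else is a direct reindexing of the earlier analysis.
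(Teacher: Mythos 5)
Your plan matches the paper's own proof: the appendix reduces Theorem~\ref{thm:mult} to product-analogues of Lemmas~\ref{lem:approx_init} and~\ref{lem:waltmin_descent} (namely Lemmas~\ref{lem:prod_init} and~\ref{lem:prod_waltmin_descent}), and uses exactly the estimates $\|M^i\|^2\leq\|A^i\|^2\|B\|_F^2$, $\|M_j\|^2\leq\|A\|_F^2\|B_j\|^2$, $|M_{ij}|\leq\|A^i\|\|B_j\|$ together with AM--GM (packaged as Lemma~\ref{lem:mult_supp1}) to compare \eqref{eq:prob_mp} against the norms of $M=A\cdot B$, explicitly remarking that the product structure renders the $L_1$ term of \eqref{eq:prob} unnecessary, with the loss absorbed into the factor $(\|A\|_F^2+\|B\|_F^2)^2/\|AB\|_F^2$ in $m$. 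The one small correction: the row-norm invariants that propagate through WAltMin are $\|(\widehat U^{(t)})^i\|\lesssim \sqrt{r}\,\kappa\sqrt{\|A^i\|^2/\|A\|_F^2}$ and $\|(\widehat V^{(t)})^j\|\lesssim \sqrt{r}\,\kappa\sqrt{\|B_j\|^2/\|B\|_F^2}$, each depending on only one of $A,B$, rather than the combined $j$-dependent surrogate $\sqrt{\|A^i\|^2/\|A\|_F^2+\|B_j\|^2/\|B\|_F^2}$ you wrote for $\widehat U^{(0)}$.
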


Next, we show an application of our matrix-multiplication approach to approximating covariance matrices $M=YY^T$, where $Y$ is a $n \times d$ sample matrix. Note, that a rank-$r$ approximation to $M$ can be computed by computing low rank approximation of $Y$, $\widehat{Y}_r$, i.e., $\tilde{M}_r=\widehat{Y}_r \widehat{Y}_r^T$. However, as we show below, such an approach leads to weaker bounds as compared to computing low rank approximation of $YY^T$: 

\begin{corollary}\label{cor:cov}
Let $M=YY^T\in \R^{n\times n}$ and let $\Omega$ be sampled using probability distribution \eqref{eq:prob_mp} with $m \geq \frac{C}{\gamma}\frac{n r^3}{\eps^2} \kappa^2 \log(n) \log^2(\frac{\|Y\|}{\zeta})$, the output $\widehat{M}_r$ of $WAltMin(P_{\Omega}(M), \Omega, r, \hat{q}, \log(\frac{\|Y\|}{\zeta}))$ satisfy (w.p. $\geq 1-\gamma$): \[\|\widehat{M}_r-(YY^T)_r\| \leq \eps  \lV Y-Y_r \rV_F^2+\zeta.\]
Further when $\|Y-Y_r\|_F \leq \|Y_r\|_F$ we get:  \[\|\widehat{M}_r- (YY^T)_r\| \leq \eps  \lV YY^T-(YY^T)_r \rV_F+\zeta.\]
\end{corollary}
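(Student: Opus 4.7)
Plan: The proof reduces to Theorem~\ref{thm:mult} by taking $A = Y$ and $B = Y^T$, so that $AB = YY^T = M$. The sampling distribution \eqref{eq:prob_mp} with this pair depends only on row norms of $Y$, so it is well-defined on the symmetric input $M$, and both the sampling step and WAltMin step of the covariance algorithm are exactly those of the matrix-product algorithm applied to $(Y,Y^T)$.

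The first step I would carry out is a short SVD bookkeeping calculation. Writing $Y = U\Sigma V^T$, we have $YY^T = U\Sigma^2 U^T$, so the best rank-$r$ approximation is
\[
(YY^T)_r \;=\; U_r\Sigma_r^2 U_r^T \;=\; Y_r Y_r^T,
\]
and the residual factors as
\[
YY^T - (YY^T)_r \;=\; U_\perp\Sigma_\perp^2 U_\perp^T \;=\; (Y-Y_r)(Y-Y_r)^T.
\]
Applying the elementary inequality $\|XX^T\|_F = \sqrt{\sum_i\sigma_i(X)^4} \le \sum_i\sigma_i(X)^2 = \|X\|_F^2$ with $X = Y-Y_r$ then yields the crucial
\[
\|YY^T-(YY^T)_r\|_F \;\le\; \|Y-Y_r\|_F^2.
\]

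The second step is to extract from the proof of Theorem~\ref{thm:mult} its sharper intermediate ``between-iterates'' bound rather than its statement. Iterating Lemma~\ref{lem:waltmin_descent} for $T = \log(\|Y\|/\zeta)$ steps drives $\mathrm{dist}(V^{(T)}, V^*)$ down to $O(\eps\|M-M_r\|_F/\sigma_r^*)$, and combining with the $\sigma_r^*$ scaling and the factored form $\widehat{M}_r = \widehat{U}^{(T)}(\widehat{V}^{(T)})^T$ produces
\[
\|\widehat{M}_r - (YY^T)_r\| \;\le\; \eps\,\|YY^T-(YY^T)_r\|_F + \zeta. \qquad(\star)
\]
Theorem~\ref{thm:mult}'s own statement is recovered from $(\star)$ only through a triangle inequality against the irreducible error $\|YY^T-(YY^T)_r\|$. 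The display $(\star)$ is already precisely the second claim of the corollary; substituting $\|YY^T-(YY^T)_r\|_F \le \|Y-Y_r\|_F^2$ from the first step into $(\star)$ then gives the first claim.

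The main obstacle I expect is the sample-complexity bookkeeping. A literal application of Theorem~\ref{thm:mult} to $(A,B)=(Y,Y^T)$ carries a prefactor $(\|A\|_F^2+\|B\|_F^2)^2/\|AB\|_F^2 = 4\|Y\|_F^4/\|YY^T\|_F^2$, which must be shown to be $O(r)$ so that it can be absorbed into the constants (or into one of the $r$'s) in the stated $m = \Theta(nr^3\kappa^2\log(n)\log^2(\|Y\|/\zeta)/\eps^2)$. Under the condition $\|Y-Y_r\|_F \le \|Y_r\|_F$ used for the second claim, this follows from $\|Y\|_F^2 \le 2\|Y_r\|_F^2$ together with $\|YY^T\|_F^2 \ge \|Y_rY_r^T\|_F^2 \ge \|Y_r\|_F^4/r$ by Cauchy--Schwarz on $\sigma_1^2,\dots,\sigma_r^2$. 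For the unconditional first claim I would need to track carefully where $\|AB\|_F$ enters the descent proof and verify that the $\|Y-Y_r\|_F^2$ on the right-hand side already absorbs any leftover ratio, so that the stated sample count still suffices.
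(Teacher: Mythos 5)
Your reduction is exactly the route the paper intends (it never writes out a separate proof of this corollary): set $A=Y$, $B=Y^T$, use $(YY^T)_r=Y_rY_r^T$, $YY^T-(YY^T)_r=(Y-Y_r)(Y-Y_r)^T$ and $\|(Y-Y_r)(Y-Y_r)^T\|_F\le\|Y-Y_r\|_F^2$, and, crucially, invoke the intermediate bound of the WAltMin analysis on $\|\widehat M_r-(YY^T)_r\|$ (the chain ending in $\|M_r-\widehat U^{(t)}(\widehat V^{(t)})^T\|\le c\sigma_1^*\zeta+\eps\|M-M_r\|_F$) rather than the statement of Theorem~\ref{thm:mult}, which only controls $\|M-\widehat M_r\|$ and would leave an extra $\|M-M_r\|$ term that the corollary does not have. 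All of that is correct and is the content of the corollary.

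The genuine loose end is the one you flag yourself: the corollary's stated $m$ drops the prefactor $C_{AB}=(\|A\|_F^2+\|B\|_F^2)^2/\|AB\|_F^2=4\|Y\|_F^4/\|YY^T\|_F^2$ appearing in Theorem~\ref{thm:mult}. Your bound $C_{AB}\le 16r$ under $\|Y-Y_r\|_F\le\|Y_r\|_F$ is correct, but it cannot simply be ``absorbed into the constants (or into one of the $r$'s)'': the statement has exactly $r^3$, so a literal application of Theorem~\ref{thm:mult} would require $nr^4$. The cleaner way to decouple this is to note where $C_{AB}$ actually enters the proof: only the constant-accuracy concentration steps (Lemma~\ref{lem:prod_init} and Lemmas~\ref{lem:mult_supp2}, \ref{lem:mult_supp3}) carry $C_{AB}$, while the $1/\eps^2$-level noise bound (Lemma~\ref{lem:mult_noisebound}) is $C_{AB}$-free, so the $C_{AB}$ and $1/\eps^2$ factors need not be multiplied together; with $C_{AB}=O(r)$ this brings the requirement essentially in line with the stated $m$. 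For the unconditional first claim, however, there is no a priori control on $C_{AB}$ (it can be of order $\mathrm{rank}(Y)$), and the slack you hope to exploit, $\|Y-Y_r\|_F^2/\|YY^T-(YY^T)_r\|_F$, does not dominate $\sqrt{C_{AB}}$ in general: with $\sigma_1=\dots=\sigma_r=1$ and $\sigma_{r+1}$ small the slack is $\approx 1$ while $\sqrt{C_{AB}}\approx 2\sqrt r$, so substituting $\|YY^T-(YY^T)_r\|_F\le\|Y-Y_r\|_F^2$ into $(\star)$ loses a $\sqrt r$ relative to the stated sample size. So your plan as written does not close the first claim with the stated $m$; this looseness appears to be inherited from the paper (which asserts the corollary without proof and without the $C_{AB}$ factor), but it should be stated as such rather than left as a verification ``to be tracked.''
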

Now, one can compute $\tilde{M}_r=\widehat{Y}_r \widehat{Y}_r^T$ in time $O(n^2r + \frac{nr^5}{\eps^2})$ with error $\frac{\|(YY^T)_r-\tilde{M}_r\|}{\|Y\|^2} \leq \eps\frac{\|Y-Y_r\|_F }{\|Y\|}$. Where as computing low rank approximation of $YY^T$ gives(from Corollary~\ref{cor:cov})  $\frac{\|(YY^T)_r-\widehat{M}_r\|}{\|Y\|^2} \leq \eps\frac{\|YY^T-Y_r Y_r^T\|_F}{\|Y\|^2}$, which can be much smaller than $\eps\frac{\|Y-Y_r\|_F}{\|Y\|}$.  The difference in error is a consequence of larger gap between singular values of $YY^T$ compared to $Y$. For related discussion and applications see section 4.5 of ~\cite{halko2011finding}.

\def\yk{Y_{\mathbf{c_k}}}
\def\yhk{\widehat{Y}_{\mathbf{c_k}}}
\def\vhk{\widehat{V}_{\mathbf{c_k}}}
\def\Uhk{\widehat{U}_{\mathbf{r_k}}}
\section{Distributed Principal Component Analysis}

\begin{algorithm}[t]
\caption{Distributed low rank approximation algorithm}
\label{alg:dpca}
\begin{algorithmic}[1]
\INPUT Matrix $\Mk$ at server $k$, rank-$r$, number of samples $m$ and number of iterations $T$.
\STATE {\it Sampling:} Each server $k$ computes column norms of $\Mk$,  $\|\Mk\|_{1,1}$ and communicates to $CP$. CP computes column norms of $M$, $\|M\|_{1,1}$, $\|M\|_F$ and communicates to all servers.
\STATE Each server $k$ samples $(i, j)$th entry with probability $\min\{ m(\frac{\|\Mk^i\|^2 +\|M_j\|^2}{2n\|M\|_F^2} +\frac{(\Mk)_{ij}}{\|M\|_{1,1}}) ,1\}$ for rows $\{ r_k \}$ and generates $\Omega^k$.
\STATE Each server $k$ sends lists of columns ($\{ c_k \} \subset [d]$) where $\Rok(\Mk)$ has sampled entries, to CP.
\STATE {\it Initialization:} CP generates random $n \times r$ matrix $Y^{(0)}$ and communicates $\yk^{(0)}$ to server $k$.
\FOR {$t=0$ to $\log(1/c)$}
	\STATE Each server $k$ computes $\yhk^{(t+1)} = \Rok(\Mk)^T \Rok(\Mk) *  \yk^{(t)}$ and communicates to CP.
	\STATE CP computes $Y^{(t+1)} =\sum_{k} \yhk^{(t+1)}$, normalize and communicates $\yk^{(t+1)}$ to server $k$.
\ENDFOR
\STATE {\it WAltMin:} Each server $k$ set  $\widehat{V}_{\mathbf{c_k}}^{(0)} =\yk^{(t+1)}$.
\FOR {$t=0$ to $T-1$}
	\STATE Each server $k$ computes $(\widehat{U}^{(t+1)})^i =\arg \min_{x \in \R^r} \sum_{j: (i, j) \in \Omega^k} \wij \left( M_{ij} -x^T(\widehat{V}^{t})^j \right)^2$,  for all $i \in \{ r_k \}$.
	\STATE Each server $k$ sends to CP; $z_{kj} =\left(\Uhk^{(t+1)} \right)^T \Rok(\Mk)e_j$ and $B_{kj}=\sum_{i: (i,j) \in \Omega^k} \wij u u^T$, $u= (\widehat{U}^{(t+1)})^i $ for all $j \in \{ c_k \}$.
	\STATE CP computes $B_j= \sum_k B_{kj}$ and $(\Vht)^j = B_j^{-1} \sum_k z_{kj}$ for $j=1, \cdots, d$ and communicates $(\Vht)_{\mathbf{c_k}}$ to server $k$.
\ENDFOR
\OUTPUT Server $k$ has $\Uhk^{(t+1)}$ and CP has $\Vht$. 
\end{algorithmic}
\end{algorithm}

Modern large-scale systems have to routinely compute PCA of data matrices with millions of data points embedded in similarly large number of dimensions. Now, even storing such matrices on a single machine is not possible and hence most industrial scale systems use distributed computing environment to handle such problems. However, performance of such systems depend not only on computation and storage complexity, but also on the required amount of communication between different servers. 

In particular, we consider the following distributed PCA setting: Let $M\in \R^{n\times d}$ be a given matrix (assume $n\geq d$ but $n\approx d$). Also, let $M$ be row partitioned among $s$ servers and let $\Mk\in \R^{n \times d}$ be the matrix with rows $\{r_k\}\subseteq [n]$ of $M$ and rest filled with zeros, stored on $k$-th server. Moreover, we assume that one of the servers act as Central Processor(CP) and in each round all servers communicate with the CP and the CP communicates back with all the servers. Now, the goal is to compute $\widehat{M}_r$, an estimate of $M_r$, such that the total communication (i.e. number of bits transferred) between CP and other servers is minimized. Note that, such a model is now standard for this problem and was most recently studied by \cite{kannan2014principal}.




Recently several interesting results~\cite{feldman2013turning, ghashami2014relative, liang2013distributed, kannan2014principal} have given algorithms to compute rank-$r$ approximation of $M$, $\tilde{M}_r$ in the above mentioned distributed setting.  In particular, \cite{kannan2014principal} proposed a method that for row-partitioned model  requires $O(\frac{dsr}{\eps} + \frac{ sr^2}{\eps^4})$ communication to obtain a relative Frobenius norm guarantee, $$||M-\tilde{M}_r||_F \leq (1 +\eps)||M-M_r||_F.$$

In contrast, a distributed setting extension of our LELA algorithm~\ref{alg:1} has linear communication complexity $O(ds + \frac{n r^5}{\eps^2})$ and computes rank-$r$ approximation $\widehat{M}_r$, with $||M-\widehat{M}_r|| \leq ||M-M_r|| +\eps||M-M_r||_F$. Now note that if $n\approx d$ and if $s$ scales with $n$ (which is a typical requirement), then our communication complexity can be significantly better than that of \cite{kannan2014principal}. Moreover, our method provides spectral norm bounds as compared to relatively weak Frobenius bounds mentioned above. \\

\noindent {\bf Algorithm:} The distributed version of our LELA algorithm depends crucially on the following observation: given $V$, each row of $U$ can be updated independently. Hence, servers need to communicate rows of $V$ only. There also, we can use the fact that each server requires only $O(nr/s\log n)$ rows of $V$ to update their corresponding $\Uk$. $\Uk$  denote restriction of ${U}$ to rows in set $\{ r_k \}$ and $0$ outside and similarly $\vhk, \yhk$ denote restriction of $V, \widehat{Y}$ to rows $\{c_k\}$.  

 We now describe the distributed version of each of the critical step of LELA algorithm. See Algorithm~\ref{alg:dpca} for a detailed pseudo-code. For simplicity, we dropped the use of different set of samples in each iteration. Correspondingly the algorithm will modify to distributing samples $\Omega^k$ into $2T+1$ buckets and using one in each iteration. This simplification doesn\rq{}t change the communication complexity.


{\em Sampling}: For sampling, we first compute column norms $\|M_j\|, \forall 1\leq j\leq d$ and communicate to each server. This operation would require $O(ds)$ communication. Next, each server (server $k$) samples elements from its rows $\{\mathbf{r_k}\}$ and stores $\Rok(\Mk)$ locally. Note that because of {\em independence over rows}, the servers don't need to transmit their samples to other servers. 

{\em Initialization}: In the initialization step, our algorithm computes top $r$ right singular vector of $\Ro(M)$ by iterations $\widehat{Y}^{(t+1)} =\Ro(M)^T\Ro(M) Y^t=\sum_k \Rok(M)^T\Rok(M)Y^t$. Now, note that computing $\Rok(M) Y^t$ requires server $k$ to access atmost $|\Omega^k|$ columns of $Y^t$. Hence, the total communication from the CP to all the servers in this round is $O(|\Omega|r)$. Similarly, each column of $\Rok(M)^T\Rok(M)Y^t$ is only $|\Omega^k|$ sparse. Hence, total communication from all the servers to CP in this round is $O(|\Omega|r)$. Now, we need constant many rounds to get a constant factor approximation to SVD of $\Ro(M)$, which is enough for good initialization in WAltMin procedure. Hence, total communication complexity of the initialization step would be $O(|\Omega|r)$. 

{\em Alternating Minimization Step}: For alternating minimization,  update to rows of $U$ is computed at the corresponding servers and the update to $V$ is computed at the CP. For updating $\widehat{U}^{(t+1)}_{\mathbf{r_k}}$ at server $k$, we use the following observation: updating $\widehat{U}^{(t+1)}_{\mathbf{r_k}}$ requires atmost $|\Omega^k|$ rows of $\widehat{V}^{(t)}$. Hence, the total communication from CP to all the servers in the $t$-th iteration is $O(|\Omega|r)$. Next, we make a critical observation that  update $\widehat{V}^{(t+1)}$  can be computed by adding certain messages from each server (see Algorithm~\ref{alg:dpca}  for more details). Message from server $k$ to CP is of size $O(|\Omega^k|r^2)$. Hence, total communication complexity in each round is $O(|\Omega|r^2)$ and total number of rounds is $O(\log(\|M\|_F/\zeta))$.

We now combine the above given observations to provide error bounds and communication complexity of our distributed PCA algorithm:
\begin{theorem}\label{thm:dpca}
Let the $n \times d$ matrix $M$ be distributed over $s$ servers according to the row-partition model. Let $m \geq \frac{C}{\gamma} \frac{n r^3}{\eps^2} \kappa^2 \log(n) \log^2(\frac{||M_r||}{\zeta}) $. Then, the algorithm~\ref{alg:dpca} on completion will leave matrices $\Uhk^{(t+1)}$ at server $k$ and $\Vht$ at CP such that the following holds (w.p. $\geq 1-\gamma$):  $||M- \widehat{U}^{(t+1)} (\Vht)^T || \leq  ||M-M_r|| + \eps ||M-M_r||_F +\zeta$, where $\widehat{U}^{(t+1)}= \sum_k \Uhk^{(t+1)}$. This algorithm has a communication complexity of $O(ds+|\Omega|r^2)=O(ds+ \frac{n r^5\kappa^2}{\eps^2} \log^2(\frac{||M_r||}{\zeta}))$ real numbers.
\end{theorem}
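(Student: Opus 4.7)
The plan is to show that Algorithm~\ref{alg:dpca} is a faithful distributed implementation of Algorithm~\ref{alg:1}: it produces exactly the same iterates $\widehat{U}^{(t)}, \widehat{V}^{(t)}$ that LELA would produce on $M$ with sample set $\Omega$ drawn from~\eqref{eq:prob}. Once this equivalence is established, the error bound follows immediately from Theorem~\ref{thm:main} applied with the stated value of $m$, and the communication bound reduces to counting the real numbers exchanged in each of the three stages (sampling, initialization, WAltMin).

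For sampling, the CP aggregates per-server column norms $\|(\Mk)_j\|^2$ and the scalars $\|\Mk\|_{1,1}$, forms $\|M_j\|^2$, $\|M\|_F^2$, $\|M\|_{1,1}$, and broadcasts them back; this uses $O(ds)$ real numbers. Each server then samples its own rows using the correct marginal of~\eqref{eq:prob} restricted to $\{r_k\}$, yielding a global sample set $\Omega = \cup_k \Omega^k$ distributed exactly as in Algorithm~\ref{alg:1}. No samples need to cross servers because each row lives on a single server.

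For initialization, the identity $R_\Omega(M)^T R_\Omega(M) Y = \sum_k \Rok(\Mk)^T \Rok(\Mk) Y$ lets us implement power iteration on $R_\Omega(M)^T R_\Omega(M)$ by having each server send its $d \times r$ partial product (supported on rows in $\{c_k\}$) to the CP, which sums and normalises. Lemma~\ref{lem:approx_init} requires only a constant-factor approximation to the top-$r$ right singular subspace of $R_\Omega(M)$, so $O(1)$ power iterations suffice; since only rows of $Y^{(t)}$ indexed by $\{c_k\}$ are ever needed at server $k$, each round costs $O(|\Omega| r)$ communication.

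For the alternating minimization, the $U$-update $(\widehat{U}^{(t+1)})^i = \arg\min_u \sum_{j:(i,j)\in\Omega^k} w_{ij}(M_{ij}-u^T (\widehat{V}^{(t)})^j)^2$ decouples row-wise and needs only the at most $|\Omega^k|$ rows of $\widehat{V}^{(t)}$ indexed by $\{c_k\}$, costing $O(|\Omega| r)$ in CP-to-server communication. The $V$-update likewise decouples column-wise: for each $j$ the closed form is $(\Vht)^j = B_j^{-1} \sum_k z_{kj}$ with $B_j = \sum_k B_{kj}$, and both $B_{kj}$ and $z_{kj}$ are purely local to server $k$. Each server thus transmits $O(|\Omega^k|(r^2+r))$ scalars per iteration; over $T = O(\log(\|M\|/\zeta))$ iterations the total is $O(|\Omega| r^2 \log(\|M\|/\zeta))$, which combined with the $O(ds)$ sampling cost yields the claimed bound. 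The main subtlety is verifying that the aggregated $V$-update reproduces the exact centralized minimizer rather than some per-server proxy; the additive decomposition of the normal equations is what makes this work, at the price of transmitting $r \times r$ matrices $B_{kj}$ rather than just $r$-vectors. Everything else reduces to matching rows/columns between Algorithm~\ref{alg:dpca} and Algorithm~\ref{alg:1} and invoking Theorem~\ref{thm:main} and Lemma~\ref{lem:approx_init} as black boxes.
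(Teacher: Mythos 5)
Your proposal is correct and takes essentially the same route as the paper: reduce to Theorem~\ref{thm:main} by arguing that the distributed algorithm reproduces the centralized WAltMin iterates exactly, then account for communication per stage (sampling $O(ds)$, initialization $O(|\Omega|r)$ over $O(1)$ power-iteration rounds, and WAltMin $O(|\Omega|r^2)$ per round), using $|\Omega|\leq 2m$ w.h.p. The only small slip is in your last count: you write the WAltMin total as $O(|\Omega|r^2\log(\|M\|/\zeta))$, but the paper's stated $O(|\Omega|r^2)$ comes from the fact that when $\Omega$ is bucketed into $2T+1$ disjoint subsets (one per half-iteration), each round only touches $|\Omega|/T$ samples, so the $T$ factor cancels; with the simplified same-samples variant your extra $\log$ factor would appear, so one should cite the bucketed version to hit the claimed bound exactly.
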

As discussed above,  each update to $\widehat{V}^{(t)}$ and  $\widehat{U}^{(t)}$ are computed exactly as given in the WAltMin procedure (Sub-routine~\ref{algo:2}). Hence, error bounds for the algorithm follows directly from Theorem~\ref{thm:main}. Communication complexity bounds follows by observing that  $|\Omega|\leq 2m$ w.h.p.\\

\noindent {\bf Remark:} The sampling step given above suggests another simple algorithm where we can compute $P_{\Omega}(M)$ in a distributed fashion and communicate the samples to CP. All the computation is performed at CP afterwards. Hence, the total communication complexity would be $O(ds+|\Omega|)=O(ds+\frac{n r^3\kappa^2}{\eps^2} \log(\frac{||M_r||}{\zeta})$, which is lesser than the communication complexity of Algorithm~\ref{alg:dpca}. However, such an algorithm is not desirable in practice, because it is completely reliant on one single server to perform all the computation. Hence it is slower and is  fault-prone. In contrast, our algorithm can  be implemented in a peer-peer scenario as well and is more fault-tolerant.




Also, the communication complexity bound of Theorem~\ref{thm:dpca} only bounds the total number of real numbers transferred. However, if each of the number requires several bits to communicate then the real communication can still be very large. Below, we bound each of the real number that we transfer, hence providing a bound on the number of bits transferred.\\ 

\noindent {\bf Bit complexity:} First we will bound $\wij M_{ij}$. Note that we need to bound this only for $(i, j) \in \Omega$. Now, $|\wij M_{ij}|\leq ||\Ro(M)||_{\infty} \leq ||\Ro(M)|| \leq ||M|| +\eps||M||_F \leq 2*ndM_{max}$, where the third inequality follows from Lemma~\ref{lem:approx_init}. Hence if the entries of the matrix $M$ are being represented using $b$ bits initially then the algorithm needs to use $O(b + \log(nd))$ bits.  By the same argument we get a bound of $O(b + \log(nd))$ bits for computing $||M^i||^2, \forall i; ||M_j||^2, \forall j; ||M||_F^2$ and $||M||_{1,1}$.

Further at any stage of the WAltMin iterations $||\Uht (\Vht)^T||_{\infty} \leq ||\Uht (\Vht)^T|| \leq 2||M||_F$. So this stage also needs $O(b + \log(n))$ bits for computation. Hence overall the bit complexity of each of the real numbers of the algorithm~\ref{alg:dpca} is $O(b + \log(nd))$ , if $b$ bits are needed for representing the matrix entries. That is, overall communication complexity of the algorithm is $O((b+\log (nd))\cdot (ds+\frac{n r^3\kappa^2}{\eps^2} \log(\frac{||M_r||}{\zeta}))$.



\section{Simulations}\label{sec:simulations}

\begin{figure*}[ht]
  \centering
  \begin{tabular}[ht]{ccc}
    \includegraphics[width=.5\textwidth]{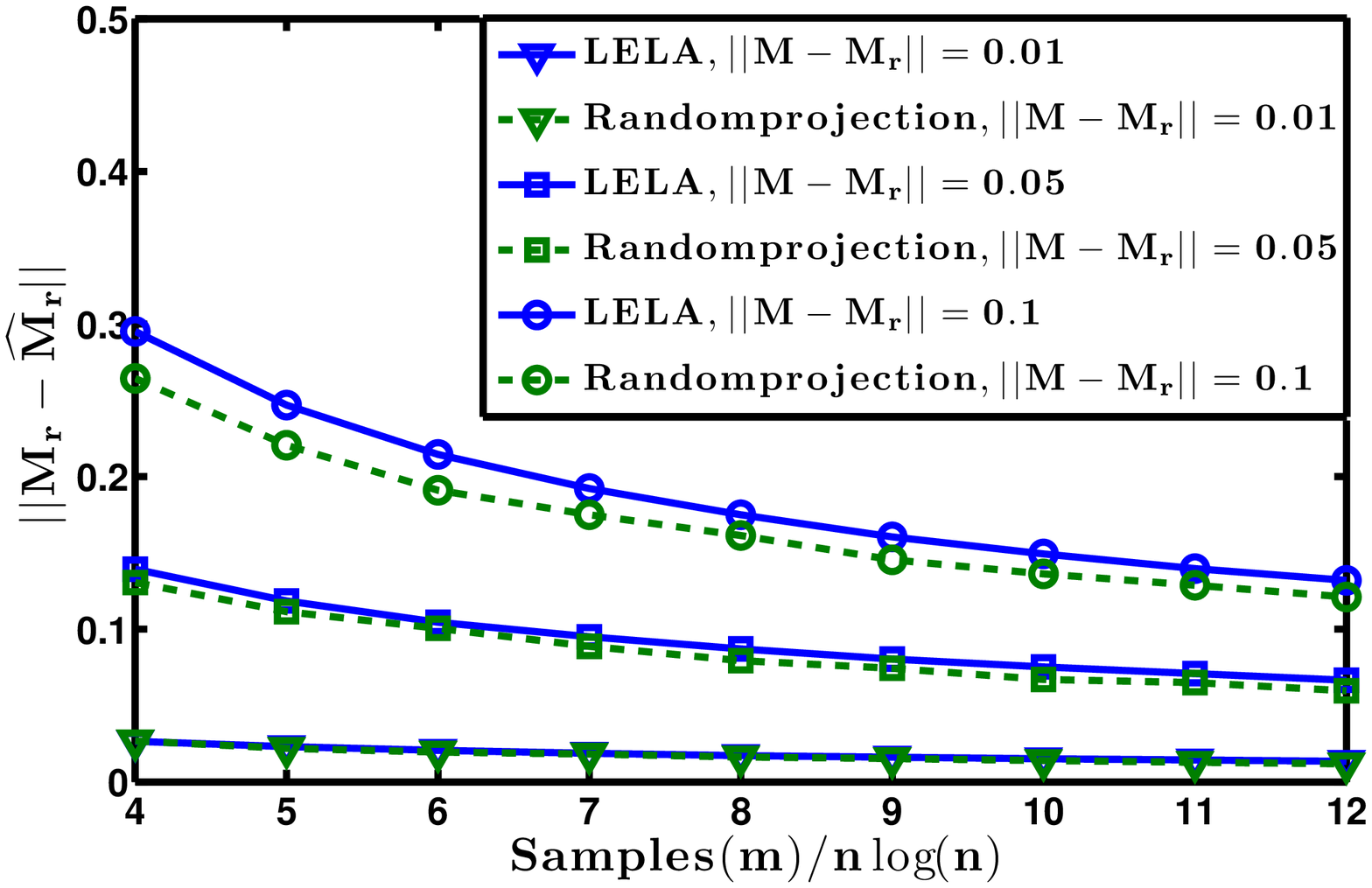}\hspace*{-0pt}&\includegraphics[width=.5\textwidth]{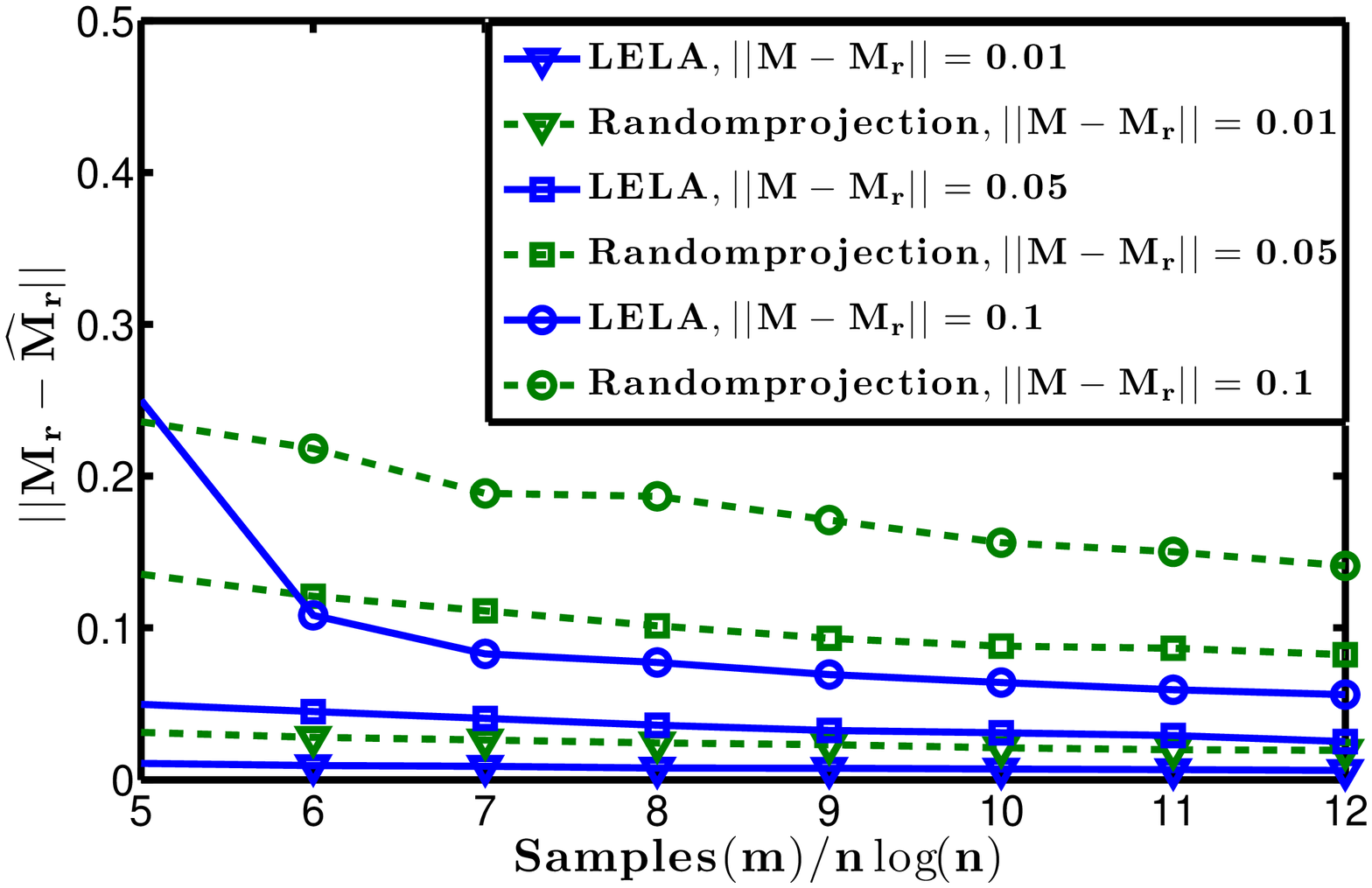}\hspace*{-15pt}\\
{\bf (a)}&{\bf (b)}
  \end{tabular}
  \caption{ Figure plots how the error $||M_r -\widehat{M}_r||$ decreases with increasing number of samples $m$ for different values of noise $||M-M_r||$, for incoherent and coherent matrices respectively. Algorithm LELA~\ref{alg:1} is run with $m$ samples and Gaussian projection algorithms is run with corresponding dimension of the projection $l=m/n$. Computationally LELA algorithms takes $O(nnz(M) +m\log(n) )$ time for computing samples and Gaussian projection algorithm takes $O(nm)$ time. {\bf(a)}:For same number of samples both algorithms have almost the same error for incoherent matrices. {\bf (b):} For coherent matrices clearly the error of LELA algorithm (solid lines) is much smaller than that of random projection (dotted lines).}
  \label{fig:plot12}
\end{figure*} 

\begin{figure*}[ht]
  \centering
  \begin{tabular}[ht]{ccc}
    \includegraphics[width=.5\textwidth]{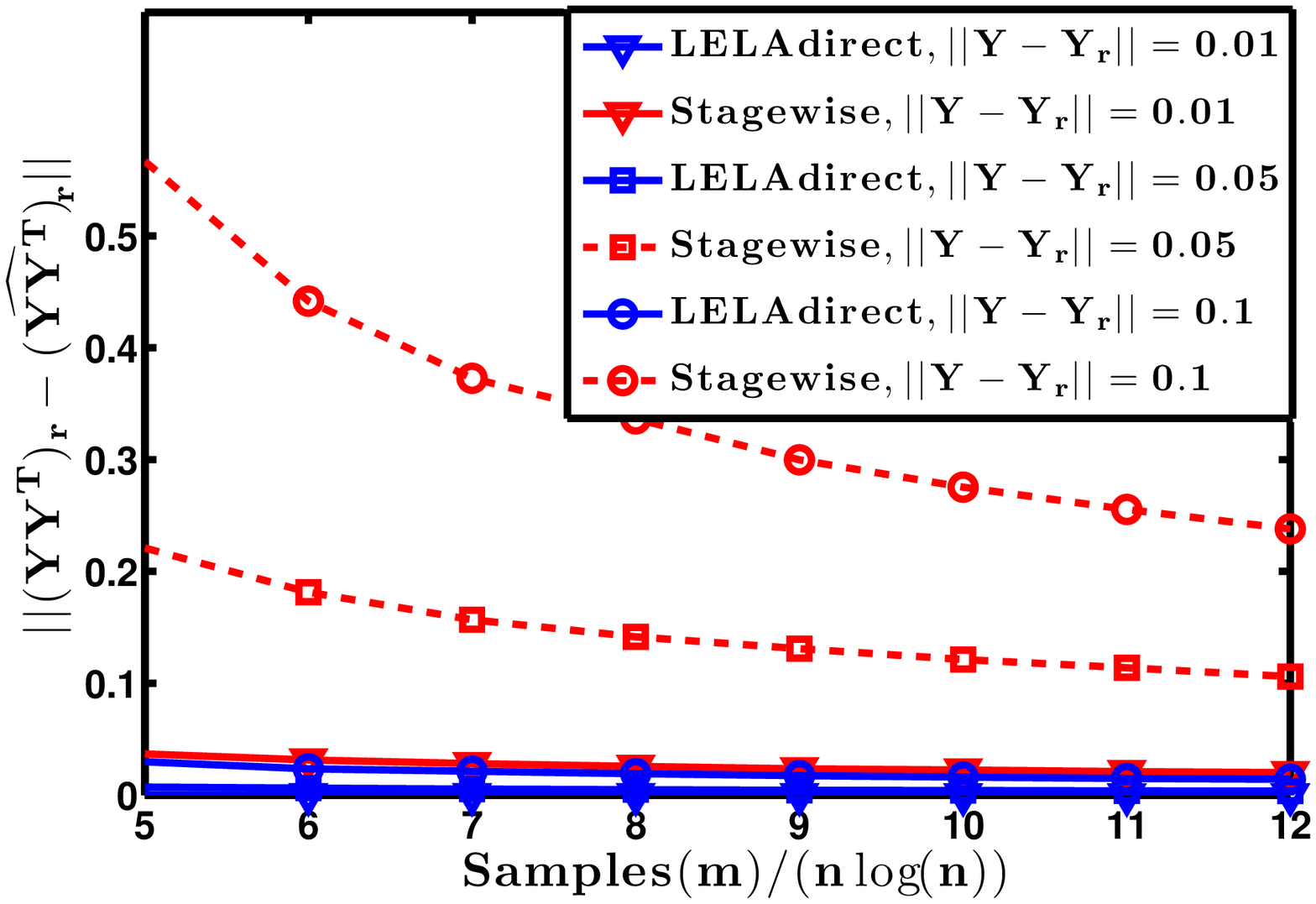}\hspace*{-0pt}&\includegraphics[width=.5\textwidth]{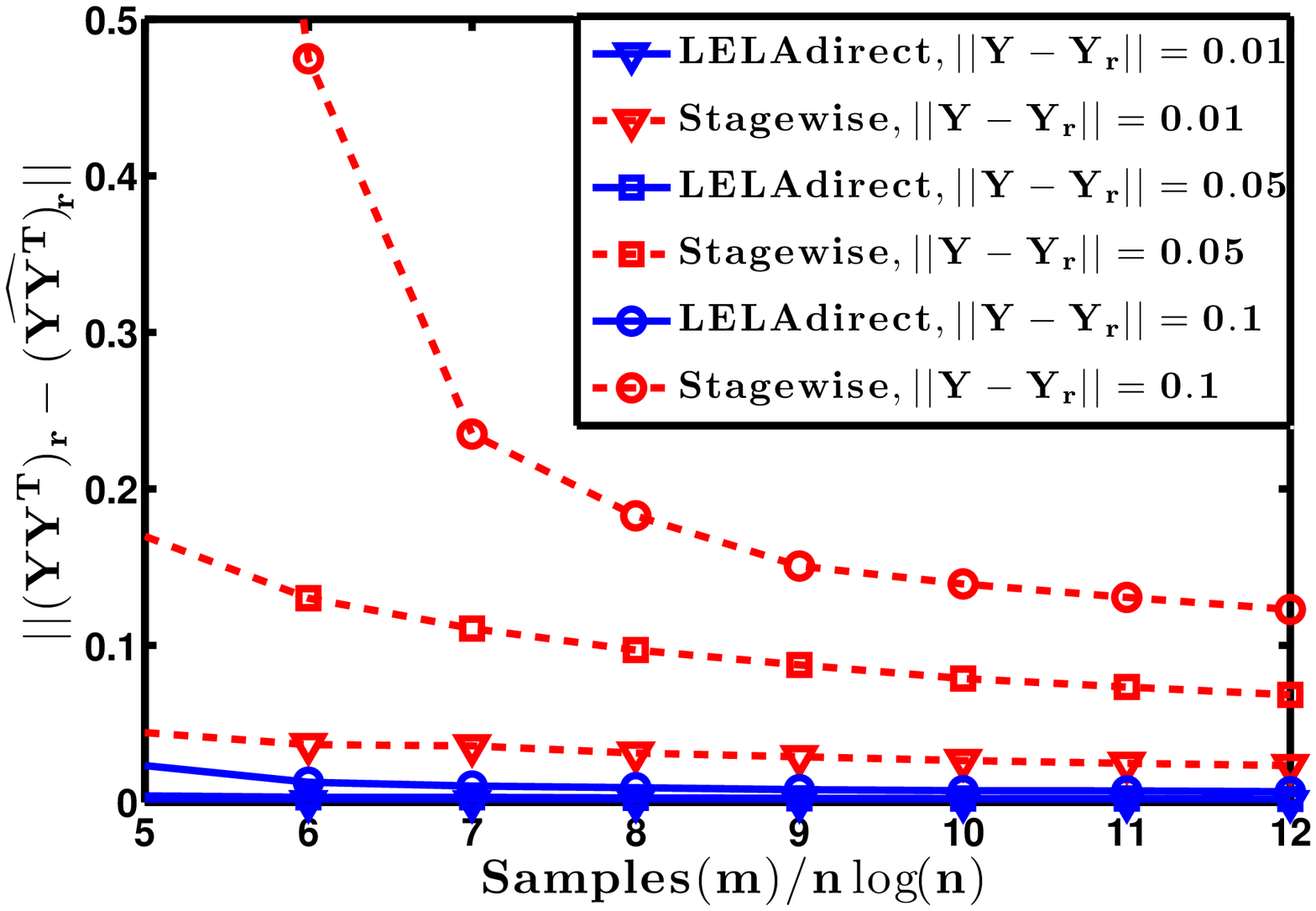}\hspace*{-15pt}\\
{\bf (a)}&{\bf (b)}
  \end{tabular}
  \caption{ Figure plots the error $||(YY^T)_r- \widehat{(YY^T)}_r||$ for LELA direct (Section~\ref{sec:covariance}) and Stagewise algorithm for {\bf(a)}:incoherent matrices and for {\bf (b):} coherent matrices. Stagewise algorithm is first computing rank-$r$ approximation $\widehat{Y}_r$ of $Y$ using algorithm~\ref{alg:1} and setting the low rank approximation $\widehat{(YY^T)}_r =\widehat{Y}_r \widehat{Y}_r^T.$ Clearly directly computing low rank approximation of $YY^T$ has smaller error.}
  \label{fig:plot34}
\end{figure*} 

\begin{figure*}[ht]
  \centering
  \begin{tabular}[ht]{ccc}
    \includegraphics[width=.5\textwidth]{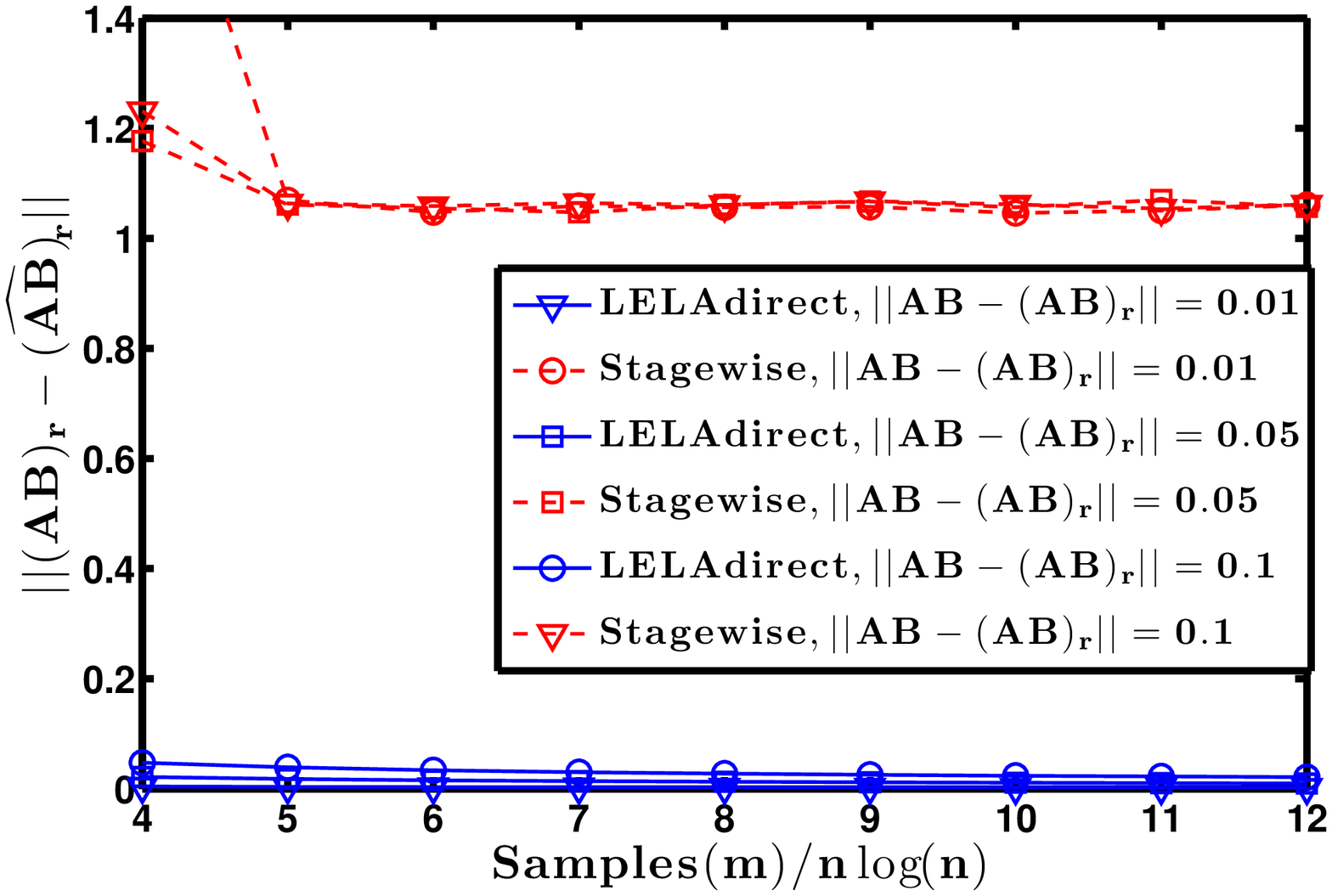}\hspace*{-0pt}&\includegraphics[width=.5\textwidth]{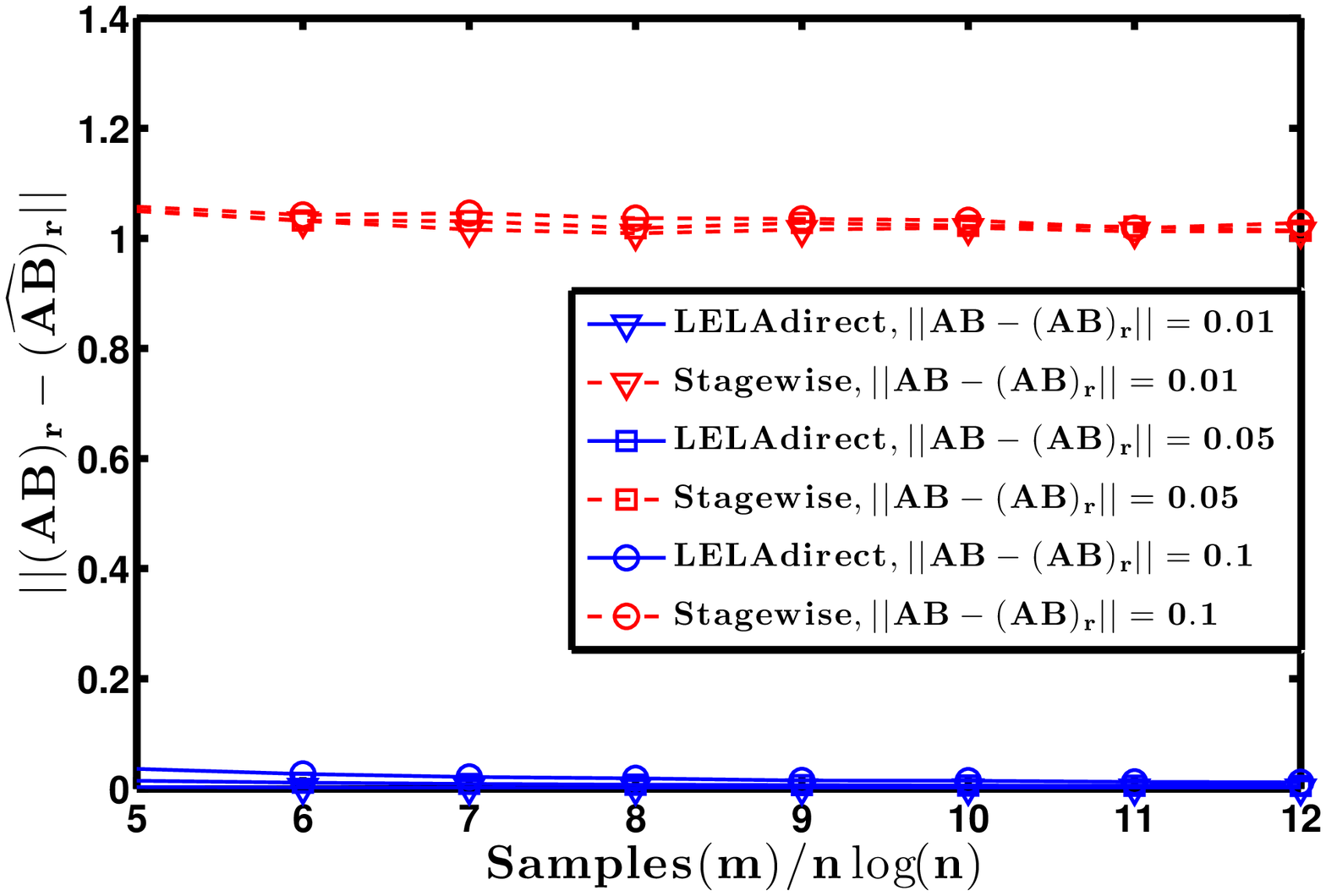}\hspace*{-15pt}\\
{\bf (a)}&{\bf (b)}
  \end{tabular}
  \caption{Figure plots the error $||(AB)_r- \widehat{(AB)}_r||$ for LELA direct (Section ~\ref{sec:covariance}) and Stagewise algorithm for {\bf(a)}:incoherent matrices and for {\bf (b):} coherent matrices. Stagewise algorithm is first computing rank-$r$ approximation $\widehat{A}_r, \widehat{B}_r$ of $A, B$ respectively using algorithm~\ref{alg:1} and setting the low rank approximation $\widehat{(AB)}_r =\widehat{A}_r \widehat{B}_r.$ Clearly directly computing low rank approximation of $AB$ has smaller error.}
  \label{fig:plot56}
\end{figure*}

In this section we present some simulation results on synthetic data to show the error performance of the algorithm~\ref{alg:1}. First we consider the setting of finding low rank approximation of a given matrix $M$. Later we consider the setting of computing low rank approximation of $A \cdot B$, given $A$ and $B$ without computing the product.

For simulations we consider random matrices of size 1000 by 1000 and rank-5. $M_r$ is a rank 5 matrix with all singular values 1. We consider two cases, one in which $M_r$ is incoherent and other in which $M_r$ is coherent. Recall that a $n \times d$ rank-$r$ matrix $M_r$ is an incoherent matrix if $\|(\Uo)^i\|^2 \leq \frac{\mu_0 r}{n}, \forall i$ and $\|(\Vo)^j\|^2 \leq \frac{\mu_0 r}{d}, \forall j$, where SVD of $M_r$ is $\Uo \So (\Vo)^T$. Intuitively incoherent matrices have mass spread over almost all entries whereas coherent matrices have mass concentrated on only few entries.

To generate matrices with varying incoherence parameter $\mu_0$, we use the power law matrices model ~\cite{chen2014coherent}. $M_r =DUV^T D$, where $U$ and $V$ are random $n \times r$ orthonormal matrices and $D$ is a diagonal matrix with $D_{ii} \propto \frac{1}{i^{\alpha}}$. For $\alpha=0$  $M_r$ is an incoherent matrix with $\mu_0=O(1)$ and for $\alpha=1$ $M_r$ is a coherent matrix with $\mu_0 =O(n)$.

The input to algorithms is the matrix $M=M_r + Z$, where $Z$ is a Gaussian noise matrix with $||Z||=0.01, 0.05$ and $0.1$. Correspondingly Frobenius norm of $Z$ is $||Z||*\sqrt{1000}/2$, which is $0.16, 0.79$ and $1.6$ respectively.

Each simulation is averaged over 20 different runs. We run the WAltMin step of the algorithm for 15 iterations. Note that using different set of samples in each iteration of WAltMin subroutine~\ref{algo:2} is generally observed to be not required in practice. Hence we use the same set of samples for all iterations.  

In the first plot we compare the error $||M_r -\widehat{M}_r||$ of our algorithm LELA~\ref{alg:1} with the random projection based algorithm~\cite{halko2011finding, boutsidis2013improved}. We use the matrix with each entry an independent Gaussian random variable as the sketching matrix, for the random projection algorithm. Other choices are Walsh-Hadamard based transform~\cite{woolfe2008fast} and sparse embedding matrices~\cite{clarkson2013low}. 

We compare the error of both algorithms as we vary number of samples $m$ for algorithm~\ref{alg:1}, equivalently varying the dimension of random projection $l=m/n$ for the random projection algorithm. 
In figure \ref{fig:plot12} we plot the error $||M_r -\widehat{M}_r||$ with varying number of samples $m$ for both the algorithms. For incoherent matrices we see that LELA algorithm has almost the same error as the random projection algorithm~\ref{fig:plot12}(a). But for coherent matrices we notice that in figure \ref{fig:plot12}(b) LELA has significantly smaller error. 

Now we consider the setting of computing low rank approximation of $YY^T$ given $Y$ using algorithm LELA direct discussed in section \ref{sec:covariance} with sampling \eqref{eq:prob_mp}. In figure~\ref{fig:plot34} we compare this algorithm with a stagewise algorithm, which computes low rank approximation $\widehat{Y}_r$ from $Y$ first and sets the rank-$r$ approximation of $YY^T$ as $\widehat{Y}_r \widehat{Y}^T_r$. As discussed in section~\ref{sec:covariance} direct approximation of $YY^T$ has less error than that of computing $\widehat{Y}_r \widehat{Y}^T_r$. Again plot~\ref{fig:plot34}(a) is for incoherent matrices and plot~\ref{fig:plot34}(b) is for coherent matrices.

Finally in figure~\ref{fig:plot56} we consider the case where $A$ and $B$ are two rank $2r$ matrices with $AB$ being a rank $r$ matrix. Here the top $r$ dimensional row space of $A$ is orthogonal to the top $r$ dimensional column space of $B$. Hence simple algorithms that compute rank $r$ approximation of $A$ and $B$ first and then multiply will have high error as compared to that of LELA direct.



\appendix
\def\cab{C_{AB}}
\section{Concentration Inequalities}
In this section we will review couple of concentration inequalities we use in the proofs.

\begin{lemma}[Bernstein's Inequality]\label{lem:bernstein}
Let $X_1,...X_n$ be independent scalar random variables. Let $|X_i| \leq L, \forall i ~w.p. ~1$. Then,
\begin{equation}
\prob{\lv\sum_{i=1}^n X_i -\sum_{i=1}^n \expec{X_i}\rv \geq t} \leq 2\exp\left(\frac{-t^2/2}{\sum_{i=1}^n \Var(X_i)+ Lt/3}\right). 
\end{equation}
\end{lemma}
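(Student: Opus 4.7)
The plan is to apply the classical Chernoff technique to the centered sum $S = \sum_i (X_i - \E[X_i])$. By Markov's inequality applied to $e^{\lambda S}$ for $\lambda > 0$,
\[
\Pr[S \geq t] \;\leq\; e^{-\lambda t}\, \E[e^{\lambda S}] \;=\; e^{-\lambda t} \prod_{i=1}^n \E\!\left[e^{\lambda(X_i - \E X_i)}\right],
\]
using independence in the last step. The core of the proof is then to bound the individual moment generating functions of the centered, bounded variables $Y_i := X_i - \E X_i$, which satisfy $|Y_i| \leq 2L$ and $\E[Y_i^2] = \Var(X_i)$.

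The key MGF estimate is obtained from a Taylor expansion: write $e^{\lambda Y_i} = 1 + \lambda Y_i + \sum_{k \geq 2} (\lambda Y_i)^k / k!$, take expectations so that the linear term vanishes, and bound the tail by $\E[Y_i^k] \leq \Var(X_i)\cdot(2L)^{k-2}$. Using the elementary inequality $k! \geq 2\cdot 3^{k-2}$ for $k \geq 2$ (or the standard sharpening that replaces $2L$ by $L$ for bounded centered variables) one obtains, for all $0 < \lambda < 3/L$,
\[
\E\!\left[e^{\lambda Y_i}\right] \;\leq\; \exp\!\left(\frac{\lambda^2 \Var(X_i)/2}{1 - \lambda L/3}\right).
\]
Multiplying these bounds over $i$ and writing $\sigma^2 := \sum_i \Var(X_i)$ yields
\[
\Pr[S \geq t] \;\leq\; \exp\!\left(-\lambda t + \frac{\lambda^2 \sigma^2 / 2}{1 - \lambda L / 3}\right).
\]

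The final step is to optimize the free parameter $\lambda$. Choosing $\lambda = t / (\sigma^2 + Lt/3)$ (which lies in the admissible range $(0, 3/L)$) gives the exponent $-t^2 / (2(\sigma^2 + Lt/3))$, matching the stated bound on $\Pr[S \geq t]$. Applying the identical argument to $-S$ handles the lower tail, and a union bound combines the two one-sided estimates into the two-sided inequality with the factor of $2$ in front.

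The only nontrivial ingredient is the MGF bound above; the rest is routine Chernoff-style optimization. That MGF bound is a standard calculation whose only subtlety is squeezing the constant $1/3$ out of the Taylor tail via $k! \geq 2\cdot 3^{k-2}$; aside from this bookkeeping, no further work is required.
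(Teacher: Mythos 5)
The paper never actually proves this lemma: it is stated in the appendix as a standard concentration inequality and invoked directly in later arguments, so there is no internal proof to compare against. Your Chernoff--Cram\'er argument is the canonical textbook route, and the outline --- Markov's inequality on $e^{\lambda S}$, factor the MGF by independence, bound each factor via a Taylor expansion plus $k! \geq 2\cdot 3^{k-2}$, optimize $\lambda$, and union the two tails --- is exactly right.

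There is, however, a real gap in the constant. Carrying out your own arithmetic honestly, $\E\bigl[|Y_i|^k\bigr] \leq \Var(X_i)\,(2L)^{k-2}$ together with $k! \geq 2\cdot 3^{k-2}$ gives
\[
\E\bigl[e^{\lambda Y_i}\bigr] \;\leq\; \exp\!\left(\frac{\lambda^2 \Var(X_i)/2}{\,1 - 2\lambda L/3\,}\right),
\]
and after optimizing $\lambda$ the denominator becomes $\sum_i \Var(X_i) + \tfrac{2Lt}{3}$, not $\sum_i \Var(X_i) + \tfrac{Lt}{3}$. Your parenthetical appeal to a ``standard sharpening that replaces $2L$ by $L$'' is doing all the work and is not justified: the usual $L/3$ form of Bernstein assumes the \emph{centered} variables satisfy $|X_i - \E X_i| \leq L$, whereas the lemma as stated only assumes $|X_i| \leq L$, which in general yields only $|X_i - \E X_i| \leq 2L$. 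So under the stated hypotheses your calculation delivers the $2L/3$ constant, and it is not clear the $L/3$ constant can be recovered at all. In practice this is harmless: every application of the lemma in the paper's appendix is to random variables that are already mean zero (of the form $(\delta_{ij}-\hat q_{ij})\cdot(\cdot)$), for which centered and uncentered $L$-boundedness coincide and the $L/3$ form is valid; moreover the constant is absorbed into the unspecified global constants in the sample-complexity bounds. You should either prove the $2L/3$ version (which suffices for everything downstream) or explicitly add the hypothesis $\E X_i = 0$, matching how the lemma is actually used. Aside from this, your proposal is sound.
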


\begin{lemma}[Matrix Bernstein's Inequality~\cite{tropp2012user}]\label{lem:mbernstein}
Let $X_1,...X_p$ be independent random matrices in $\rnn$. Assume each matrix has bounded deviation from its mean: \[ \|X_i - \expec{X_i}\| \leq L, \forall i ~w.p. ~1.\] Also let the variance be \[ \sigma^2 =\max \left\{ \lV \expec{\sum_{i=1}^p (X_i-\expec{X_i}) (X_i- \expec{X_i})^T \rV}, \lV \expec{\sum_{i=1}^p (X_i - \expec{X_i})^T (X_i - \expec{X_i})}\rV \right\}.\] Then,
\begin{equation}
\prob{\lV\sum_{i=1}^n \left(X_i - \expec{X_i}\right)\rV \geq t} \leq 2n\exp\left(\frac{-t^2/2}{\sigma^2 + Lt/3}\right). 
\end{equation}
\end{lemma}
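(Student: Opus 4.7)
The plan is to follow the matrix Laplace transform method due to Ahlswede--Winter and refined by Tropp. Since the $X_i$ are only assumed square (not Hermitian), the first move is to reduce to the Hermitian case via the dilation $\tilde Z_i = \bigl(\begin{smallmatrix} 0 & Z_i \\ Z_i^T & 0 \end{smallmatrix}\bigr)$, where $Z_i = X_i - \expec{X_i}$. Then $\tilde Z_i$ is $2n \times 2n$ Hermitian, $\|\tilde Z_i\|=\|Z_i\|\le L$, and $\|\sum_i Z_i\|=\lambda_{\max}(\sum_i \tilde Z_i)$, so it suffices to bound $\prob{\lambda_{\max}(\sum_i \tilde Z_i)\ge t}$. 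Combining Markov's inequality with the trace bound $e^{\theta\lambda_{\max}(Y)}\le \Tr e^{\theta Y}$ for Hermitian $Y$ yields, for any $\theta>0$, the master inequality $\prob{\lambda_{\max}(\sum_i \tilde Z_i)\ge t}\le e^{-\theta t}\,\expec{\Tr\exp(\theta\sum_i \tilde Z_i)}$.

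The next and central step is to dominate the trace-exponential of the sum. Because the $\tilde Z_i$ do not commute in general, the scalar trick of factoring MGFs fails. I would invoke Lieb's concavity theorem --- the map $A\mapsto \Tr\exp(H+\log A)$ is concave on positive definite $A$ for any Hermitian $H$ --- and iterate Jensen's inequality over the independent summands to obtain the subadditivity bound $\expec{\Tr\exp(\theta\sum_i \tilde Z_i)}\le \Tr\exp\bigl(\sum_i \log\expec{e^{\theta \tilde Z_i}}\bigr)$. This reduces the problem to understanding a single matrix MGF.

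To bound each $\expec{e^{\theta \tilde Z_i}}$, use the scalar Bernstein-type estimate $e^{\theta x}\le 1+\theta x+\frac{\theta^2 x^2/2}{1-\theta L/3}$, valid for $|x|\le L$ and $0<\theta<3/L$, and lift it to the matrix setting via the functional calculus to get $\expec{e^{\theta \tilde Z_i}}\preceq \exp\!\bigl(\tfrac{\theta^2/2}{1-\theta L/3}\expec{\tilde Z_i^2}\bigr)$, using $\expec{\tilde Z_i}=0$. Taking the matrix logarithm (which is operator monotone on positive definite matrices) and summing, one arrives at $\expec{\Tr\exp(\theta\sum_i \tilde Z_i)}\le 2n\cdot \exp\!\bigl(\tfrac{\theta^2 \sigma^2/2}{1-\theta L/3}\bigr)$ since $\|\sum_i \expec{\tilde Z_i^2}\|=\sigma^2$ (the dilation reshuffles the two one-sided variances onto the diagonal). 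Plugging this into the master inequality and optimizing the free parameter at $\theta=t/(\sigma^2+Lt/3)$ recovers the claimed tail bound, with the $2n$ prefactor tracking the dilated dimension.

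The principal obstacle is the noncommutative subadditivity step: Lieb's concavity theorem is genuinely nontrivial and a fully self-contained proof would require several pages of operator-theoretic machinery. A simpler fallback is the Golden--Thompson inequality $\Tr e^{A+B}\le \Tr(e^A e^B)$, which yields the weaker Ahlswede--Winter bound with $\sigma^2$ effectively replaced by the much larger $\sum_i \|\expec{\tilde Z_i^2}\|$; Lieb's theorem is exactly what is needed to replace this crude sum-of-norms by the spectral norm of the sum, giving the sharp Bernstein form stated in the lemma.
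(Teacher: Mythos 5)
The paper states this lemma as a citation to Tropp (2012) and provides no proof of its own, so there is no internal argument to compare against. Your sketch is a faithful and correct reconstruction of Tropp's proof: the Hermitian dilation, the Laplace transform master inequality via $e^{\theta\lambda_{\max}(Y)}\le\Tr e^{\theta Y}$, the Lieb-concavity/Jensen subadditivity of matrix cumulants, the scalar Bernstein MGF bound lifted by functional calculus, the observation that $\bigl\|\sum_i\expec{\tilde Z_i^2}\bigr\|=\sigma^2$ because the dilation places the two one-sided variance operators on the block diagonal, and the optimization $\theta=t/(\sigma^2+Lt/3)$, which indeed collapses $-\theta t+\frac{\theta^2\sigma^2/2}{1-\theta L/3}$ to $-\frac{t^2/2}{\sigma^2+Lt/3}$. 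Your remark that Golden--Thompson would only yield the weaker Ahlswede--Winter variance $\sum_i\|\expec{\tilde Z_i^2}\|$ is also accurate and is precisely the reason Lieb's theorem is the load-bearing ingredient. (Minor note: the paper's statement has an index typo, summing $\sum_{i=1}^n$ rather than $\sum_{i=1}^p$; the $2n$ prefactor correctly tracks the dilated dimension of the $n\times n$ matrices, not the number of summands.)
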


\noindent Recall the Shatten-$p$ norm of a matrix $X$ is \begin{equation*}\|X\|_p = \left(\sum_{i=1}^n \sigma_i(X)^p \right)^{1/p}. \end{equation*} $ \sigma_i(X)$ is the $i$th singular value of $X$. In particular for $p=2$, Shatten-$2$ norm is the Frobenius norm of the matrix.
\begin{lemma}\label{lem:mchebyshev}[Matrix Chebyshev Inequality~\cite{mackey2012matrix}]
Let $X$ be a random matrix. For all $t >0$,
\begin{equation}
\prob{\lV X \rV \geq t} \leq \inf_{p \geq 1} t^{-p} \expec{\lV X \rV_p^p}.
\end{equation}
\end{lemma}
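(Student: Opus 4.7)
The plan is to reduce the matrix Chebyshev inequality to the ordinary scalar Markov inequality by exploiting the comparison between the operator norm and the Shatten-$p$ norm. For any matrix $X\in\rnn$ with singular values $\sigma_1(X)\geq\sigma_2(X)\geq\cdots\geq\sigma_n(X)\geq 0$, one has $\|X\| = \sigma_1(X) \leq \bigl(\sum_{i=1}^n \sigma_i(X)^p\bigr)^{1/p} = \|X\|_p$ for every $p\geq 1$, simply because each term in the sum is nonnegative and the first term alone equals $\sigma_1(X)^p$. This pointwise bound on $X$ is the only structural fact about matrices I need; everything else is scalar probability.

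Next, I would fix an arbitrary $p\geq 1$ and apply Markov's inequality to the nonnegative scalar random variable $\|X\|_p^p$. Since $\|X\|\leq \|X\|_p$ pointwise, the event $\{\|X\|\geq t\}$ is contained in $\{\|X\|_p\geq t\}$, which is the same as $\{\|X\|_p^p\geq t^p\}$. Markov then gives
\[
\pr\bigl[\|X\|\geq t\bigr]\ \leq\ \pr\bigl[\|X\|_p^p\geq t^p\bigr]\ \leq\ t^{-p}\,\expec{\|X\|_p^p}.
\]
Since this holds for every $p\geq 1$, taking the infimum over $p\geq 1$ on the right-hand side yields the claimed bound $\pr[\|X\|\geq t]\leq \inf_{p\geq 1} t^{-p}\,\expec{\|X\|_p^p}$.

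There is essentially no obstacle here: the only non-routine ingredient is the elementary observation $\|X\|\leq \|X\|_p$, which is immediate from the definition of $\|X\|_p$ as a $p$-th root of a sum of nonnegative numbers whose largest term is $\|X\|^p$. If I wanted to be careful about measurability when taking the infimum, I would note that the bound holds for each fixed $p$ deterministically, so passing to the infimum on the deterministic upper bound is harmless and does not require any interchange with expectation. Thus the whole proof is a one-line reduction to scalar Markov via the norm comparison.
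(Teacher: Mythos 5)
Your proof is correct and is the standard argument: since every Shatten-$p$ norm dominates the operator norm pointwise (the sum $\sum_i \sigma_i(X)^p$ includes $\|X\|^p$ as its largest term), the event $\{\|X\|\ge t\}$ is contained in $\{\|X\|_p^p\ge t^p\}$, and Markov's inequality on the nonnegative scalar $\|X\|_p^p$ followed by an infimum over the free parameter $p\ge 1$ gives exactly the stated bound. The paper does not reproduce a proof of this lemma, citing~\cite{mackey2012matrix} instead, but that reference proves it by precisely this one-line reduction to scalar Markov, so your argument matches the intended one.
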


\section{Proofs of section~\ref{sec:lela}}\label{sec:pfnoisy}
In this section we will present proof for Theorem~\ref{thm:main}. For simplicity we will only discuss proofs for the case when matrix is square. Rectangular case is a simple extension. We will provide proofs of the supporting lemmas first.

We will recall some of the notation now. $$q_{ij}=m\cdot \left(0.5\frac{\|M^i\|^2+\|M_j\|^2}{2n \|M\|_F^2}+0.5\frac{|M_{ij}|}{\|M\|_{1,1}}\right).$$ Let $\qhij =\min\{\qij, 1\}.$ This is to make sure the probabilities are all less than 1. Recall the definition of weights $\wij =1/\qhij$ when $\qhij >0$ and $0$ else. Note that $\sum_{ij}\qhij \leq m$. Also let $m \geq \beta nr\log(n)$. 

Let $\{\dij\}$ be the indicator random variables and $\dij=1$ with probability $\qhij$. Define $\Omega$ to be the sampling operator with $\Omega_{ij} =\dij$. Define the weighted sampling operator $\Ro$ such that, $\Ro(M)_{ij} = \dij w_{ij} M_{ij}$. 

Throughout the proofs we will drop the subscript of $\Omega$ that denotes different sampling sets at each iteration of WAltMin.

First we will abstract out the properties of the sampling distribution~\eqref{eq:prob} that we use in the rest of the proof. 

\begin{lemma}\label{lem:approx_supp1}
For $\Omega$ generated according to~\eqref{eq:prob} and under the assumptions of Lemma~\ref{lem:approx_init} the following holds, for all $(i ,j)$ such that $ \qij \leq 1$.
\begin{equation}\label{eq:prop_a1} \frac{M_{ij}}{\qhij} \leq \frac{2n}{m}\|M\|_F,\end{equation} 
\begin{equation}\label{eq:prop_a2} \sum_{ \{j:  \qhij =\qij \}} \frac{M_{ij}^2}{\qhij} \leq \frac{4n}{m} \|M\|_F^2, \end{equation}
\begin{equation}\label{eq:prop_a3} \frac{\|(\Uo)^i\|^2}{\qhij} \leq \frac{8nr\kappa^2}{m}, \end{equation} and
\begin{equation}\label{eq:prop_a4} \frac{\|(\Uo)^i\| \|(\Vo)^j\|}{\qhij}  \leq \frac{8nr\kappa^2}{m}. \end{equation}  
\end{lemma}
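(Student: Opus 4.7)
All four inequalities exploit the two-part form
\[
\qhij \;=\; \qij \;=\; m\left(\tfrac{\|M^i\|^2+\|M_j\|^2}{4n\|M\|_F^2}\right) \,+\, m\left(\tfrac{|M_{ij}|}{2\|M\|_{1,1}}\right)
\]
available in the regime $\qij \leq 1$. My strategy is: for each of the four target inequalities, discard the summands of $\qij$ whose shape does not match the numerator, and keep the one that cancels it. Bounds (\ref{eq:prop_a1}) and (\ref{eq:prop_a2}) are then essentially algebraic; the real work lies in (\ref{eq:prop_a3})--(\ref{eq:prop_a4}), where I also need to convert leverage scores to row/column norms using the closeness of $M$ to $M_r$ guaranteed by the hypothesis of Lemma~\ref{lem:approx_init}.

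For (\ref{eq:prop_a1}), retain only the $L_1$ term: $\qhij \geq m|M_{ij}|/(2\|M\|_{1,1})$ gives $|M_{ij}|/\qhij \leq 2\|M\|_{1,1}/m$, and Cauchy--Schwarz on the $n^2$ summands $\sum_{ij}|M_{ij}|\cdot 1 \leq n\|M\|_F$ closes the inequality. For (\ref{eq:prop_a2}), retain only the row-norm term: $\qhij \geq m\|M^i\|^2/(4n\|M\|_F^2)$ gives $M_{ij}^2/\qhij \leq 4n\|M\|_F^2 M_{ij}^2/(m\|M^i\|^2)$; summing over $j$ and using $\sum_j M_{ij}^2 \leq \|M^i\|^2$ telescopes the row norm in the denominator against the one in the numerator, yielding exactly $4n\|M\|_F^2/m$.

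For (\ref{eq:prop_a3}), after the same initial step the task reduces to a pointwise lower bound $\|M^i\|^2 \gtrsim (\sigma_r^*)^2 \|(\Uo)^i\|^2$ (up to constants), once I have also used $\|M\|_F^2 \leq 2\|M_r\|_F^2 \leq 2r(\sigma_1^*)^2$, which follows from the initialization hypothesis. I plan to combine (a) the noiseless identity $\|M_r^i\| \geq \sigma_r^* \|(\Uo)^i\|$, which holds because $M_r^i = (\Uo)^i \Sigma^*(\Vo)^T$ with $\Vo$ orthonormal, (b) the triangle inequality $\|M^i\| \geq \|M_r^i\| - \|(M-M_r)^i\|$, and (c) the bound $\|(M-M_r)^i\| \leq \|M-M_r\|_F \leq \sigma_r^*/(576r)$ that drops out of the initialization hypothesis. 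A short case split on whether $\|(\Uo)^i\|$ is above or below the threshold $\Theta(1/r)$ finishes: in the above-threshold case the triangle inequality alone suffices, and in the below-threshold case $\|(\Uo)^i\|^2$ is itself so small that the bound follows from the alternative lower bound $\qhij \geq m\|M_j\|^2/(4n\|M\|_F^2)$ (or from the $L_1$ term). I expect this below-threshold case---obtaining a useful lower bound on $\qhij$ precisely when the row $M^i$ itself is nearly zero---to be the main technical obstacle.

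For (\ref{eq:prop_a4}), the mixed numerator decomposes by AM--GM, $\|(\Uo)^i\|\|(\Vo)^j\| \leq \tfrac{1}{2}(\|(\Uo)^i\|^2 + \|(\Vo)^j\|^2)$; I then bound the first half by (\ref{eq:prop_a3}) against the row-norm part of $\qhij$, and the second half by its symmetric column analogue against the column-norm part of $\qhij$. Since both halves have already absorbed the factor $\kappa^2$ in (\ref{eq:prop_a3}), the same $8nr\kappa^2/m$ ceiling carries over.
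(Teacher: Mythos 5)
Your arguments for \eqref{eq:prop_a1} and \eqref{eq:prop_a2} are correct and match the intended (essentially one-line) derivations: keep the $L_1$ summand of $\qij$ for the first and the row-norm summand for the second, then use $\|M\|_{1,1}\leq n\|M\|_F$ and $\sum_j M_{ij}^2 = \|M^i\|^2$ respectively. The AM--GM reduction of \eqref{eq:prop_a4} to \eqref{eq:prop_a3} and its column analogue is also fine. The gap is in \eqref{eq:prop_a3}, and you correctly anticipated where it would be.

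The triangle inequality $\|M^i\|\geq\|M_r^i\|-\|(M-M_r)^i\|$ is the wrong tool here: even in your above-threshold case it only yields $\|M^i\|\geq \tfrac12\so_r\|(\Uo)^i\|$ (or $(1-1/t)\so_r\|(\Uo)^i\|$ for a general threshold $t$), which after squaring leaves you a constant factor short of the stated $8nr\kappa^2/m$. In the below-threshold case the approach genuinely breaks: when $\|M^i\|$, $\|M_j\|$ and $|M_{ij}|$ are all simultaneously small, none of the three summands of $\qij$ gives a lower bound on $\qhij$ that is comparable to $\|(\Uo)^i\|^2$, because the column and $L_1$ summands carry no information about $\|(\Uo)^i\|$. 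There is no patch to this case within your framework.

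The fix is to replace the triangle inequality with Pythagoras. Since $M_r$ and $M-M_r$ come from disjoint blocks of the SVD of $M$, their row spaces are orthogonal, so for every $i$
\begin{equation*}
\|M^i\|^2 \;=\; \|M_r^i\|^2 + \|(M-M_r)^i\|^2 \;\geq\; \|M_r^i\|^2 \;=\; \|(\Uo)^i\So\|^2 \;\geq\; (\so_r)^2\,\|(\Uo)^i\|^2,
\end{equation*}
with no error term, no threshold, and no case split. Plugging this into the row-norm summand of $\qij$ gives
\begin{equation*}
\frac{\|(\Uo)^i\|^2}{\qhij} \;\leq\; \frac{4n\|M\|_F^2}{m(\so_r)^2} \;\leq\; \frac{8nr\kappa^2}{m},
\end{equation*}
where the last step uses $\|M\|_F^2 = \|M_r\|_F^2 + \|M-M_r\|_F^2 \leq 2\|M_r\|_F^2 \leq 2r(\so_1)^2$ under the hypothesis of Lemma~\ref{lem:approx_init}, exactly as you intended. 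The symmetric statement $\|M_j\|\geq\so_r\|(\Vo)^j\|$ then gives the column analogue, and your AM--GM step closes \eqref{eq:prop_a4}. In short: you identified the correct target inequality $\|M^i\|\gtrsim\so_r\|(\Uo)^i\|$, but you were trying to prove a perturbative version of what is in fact an exact identity.
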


The proof of the lemma~\ref{lem:approx_supp1} is straightforward from the definition of $\qij$.

 \subsection{Initialization}

Now we will provide proof of the initialization lemma~\ref{lem:approx_init}.\\
{\bf Proof of lemma~\ref{lem:approx_init}:}
\begin{proof}
The proof of this lemma has two parts. \\
1) We show that $$\lV \Ro(M) -M \rV \leq \delta \lV M \rV_F$$\\
2) We show that the trimming step of algorithm~\ref{algo:2} gives the required row norm bounds on $\widehat{U}^{(0)}$. $$\|(\widehat{U}^{(0)})^i\| \leq 8\sqrt{r} \sqrt{ \|M^i\|^2/\|M\|_F^2} ~\text{ and }~ dist(\widehat{U}^{(0)}, \Uo) \leq \frac{1}{2},$$

{\it Proof of the first step:}
We prove the  proof of the first part using the matrix Bernstein inequality. Note that the $L1$ term in the sampling distribution will help in getting good bounds on absolute magnitude of random variables $X_{ij}$ in this proof. 

Let $X_{ij} = (\dij -\qhij) \wij M_{ij} e_i e_j^T$. Note that $\{ X_{ij} \}_{i,j=1}^n$ are independent zero mean random matrices. Also $\Ro(M) -\expec{\Ro(M)} = \sum_{ij}X_{ij}$.

First we will bound $\|X_{ij}\|$. When $\qij \geq 1$, $\qhij=1$ and $\dij=1$, and $X_{ij} =0$ with probability 1. Hence we only need to consider cases when $\qhij=\qij \leq 1$. We will assume this in all the proofs without explicitly mentioning it any more.
\begin{align} \|X_{ij}\| = \max \{ \lv(1- \qhij) \wij M_{ij} \rv, \lv \qhij \wij M_{ij}\rv \}. \end{align}  Recall $\wij =1/\qhij$. Hence
\begin{align*}
  \lv(1- \qhij) \wij M_{ij}\rv  &=  \lv(\frac{1}{\qhij}- 1)  M_{ij}\rv  \leq \lv \frac{M_{ij}}{\qhij} \rv  \stackrel{\zeta_1}{\leq} \frac{2n}{m}\|M\|_F.
\end{align*}
$\zeta_1$ follows from~\eqref{eq:prop_a1}.
\begin{align*}
 \lv \qhij \wij M_{ij}\rv  = \lv M_{ij} \rv \stackrel{\zeta_1}{\leq} \lv\frac{M_{ij}}{\qhij} \rv \leq \frac{2n}{m}\|M\|_F.
\end{align*}
$\zeta_1$ follows from $\qhij \leq 1$.

Hence, $\|X_{ij}\| $ is bounded by $L = \frac{2n}{m}\|M\|_F$. Now we will bound the variance.

\begin{align*}
\lV \expec{ \sum_{ij} X_{ij} X_{ij}^T} \rV &= \lV \expec{ \sum_{ij} (\dij -\qhij)^2 \wij^2 M_{ij}^2 e_i e_i^T} \rV =\lV  \sum_{ij}  \qhij(1- \qhij) \wij^2 M_{ij}^2 e_i e_i^T \rV \\
&= \max_i \lv  \sum_{j} \qhij(1- \qhij) \wij^2 M_{ij}^2  \rv.
\end{align*}
Now,
\begin{align*}
 \sum_{j} \qhij(1- \qhij) \wij^2 M_{ij}^2 =\sum_j (\frac{1}{\qhij}-1) M_{ij}^2 \leq \sum_j \frac{ M_{ij}^2}{(\qhij)}  \stackrel{\zeta_1}{\leq} \frac{4n}{m}\|M\|_F^2.
\end{align*}
$\zeta_1$ follows from~\eqref{eq:prop_a2}.
Hence \begin{align*} \lV \expec{ \sum_{ij} X_{ij} X_{ij}^T} \rV = \max_i \lv  \sum_{j} \qhij(1- \qhij) \wij^2 M_{ij}^2  \rv \leq \max_i  \frac{4n}{m}\|M\|_F^2 = \frac{4n}{m}\|M\|_F^2. \end{align*} We can prove the same bound for the $\lV \expec{ \sum_{ij} X_{ij}^T X_{ij}} \rV$. Hence $\sigma^2 = \frac{4n}{m}\|M\|_F^2$. Now using matrix Bernstein inequality with $t=\delta \|M\|_F$ gives, with probability $ \geq 1-\frac{2}{n^{2}}$,
\begin{equation*}
\lV \Ro(M) -\expec{\Ro(M)}\rV =\lV \Ro(M) -M\rV \leq  \delta \lV M \rV_F.
\end{equation*}

Hence we get $\|M-\mathcal{P}_r(\Ro(M))\| \leq \|M-\Ro(M)\|+\|\Ro(M)-\mathcal{P}_r(\Ro(M))\|\leq \|M-M_r\|+2\delta \|M\|_F$, which implies $\|M_r -\mathcal{P}_r(\Ro(M))\| \leq 2\|M-M_r\|+2\delta \|M\|_F$.
 
Let SVD of $\mathcal{P}_r(\Ro(M))$ be $U^{(0)} \Sigma^{(0)} (V^{(0)})^T $. Hence,
 \begin{align*}
 \|\mathcal{P}_r(\Ro(M)) -M_r\|^2 &=\| U^{(0)} \Sigma^{(0)} (V^{(0)})^T -  \Uo \So (\Vo)^T\|^2 \\
 &=\| U^{(0)} \Sigma^{(0)} (V^{(0)})^T - U^{(0)}(U^{(0)})^T \Uo \So (\Vo)^T-(I - U^{(0)}(U^{(0)})^T) \Uo \So (\Vo)^T\|^2 \\
 &\geq \|(I - U^{(0)}(U^{(0)})^T) \Uo \So (\Vo)^T\|^2\\
 &\geq (\so_r)^2 \| (U^{(0)}_{\perp})^T \Uo\|^2.
 \end{align*}
 This implies  $dist(U^{(0)}, \Uo) \leq \frac{2\|M-M_r\|+2\delta \|M\|_F}{\so_r} \leq  \frac{1}{144r}.$ This follows from the assumption $\|M-M_r\|_F \leq \frac{1}{576 \kappa r^{1.5}} \|M_r\|_F$ and $\delta \leq \frac{1}{576\kappa r^{1.5}}$. $\kappa =\frac{\so_1}{\so_r}$ is the the condition number of $M_r$. 

{\it Proof of the trimming step:}
From previous step we know that $\|\Ro(M)- M\| \leq  \delta \lV M \rV_F$ and consequently $dist(U^{(0)}, \Uo) \leq \delta_2$. Let, \begin{align*} l_i = \sqrt{4 \|M^i\|^2/\|M\|_F^2} , \end{align*} be the estimates for the left leverages scores of the matrix $M$. Since $\|M-M_r\|_F \leq  \|M_r\|_F$, $ l_i^2 \geq \frac{\sum_{k=1}^r (\so_k)^2 (\Uo_{ik})^2}{\sum_{k=1}^r (\so_k)^2}$.
 
Set the elements bigger than $2l_i$ in the $ith$ row of $U^0$ to 0 and let $\tilde{U}$ be the new initialization matrix obtained. Also since $dist(U^{(0)} , \Uo) \leq \delta_2$,  for every $j=1,..,r$ there exists a vector $\bar{u}_j$ in $\rn$, such that $\ip{U^{(0)}_j}{\bar{u}_j} \geq \sqrt{1-\delta_2^2}$, $\|\bar{u}_j\| =1$ and $|(\bar{u}_j)_i|^2 \leq \frac{\sum_{k=1}^r (\so_k)^2 (\Uo_{ik})^2}{\sum_{k=1}^r (\so_k)^2}$ for all $i$. Now $\tilde{U}_j$ is the $jth$ column of $\tilde{U}$ obtained by setting the entries of the $jth$ column of $U^{(0)}$ to $0$ whenever the $ith$ entry of $U^{(0)}_j$ is bigger than $2l_i$. For such $i$,
\begin{align} 
\lv (\tilde{U}_j)_i - (\bar{u}_j)_i \rv \leq  \sqrt{\frac{\sum_{k=1}^r (\so_k)^2 (\Uo_{ik})^2}{\sum_{k=1}^r (\so_k)^2}} \leq  \lv (U^0_j)_i -  (\bar{u}_j)_i \rv,
\end{align}
since $\lv (U^{(0)}_j)_i -  (\bar{u}_j)_i \rv \geq 2l_i -l_i  = \sqrt{\frac{\sum_{k=1}^r (\so_k)^2 (\Uo_{ik})^2}{\sum_{k=1}^r (\so_k)^2}} $.

For the rest of the coordinates, $(\tilde{U}_j)_i = (U^{(0)}_j)_i$. Hence, 
\begin{align*} 
\lV \tilde{U}_j -\bar{u}_j \rV \leq  \lV U^{(0)}_j - \bar{u}_j \rV = \sqrt{1 +1 -2\ip{ U^{(0)}_j}{\bar{u}_j }} \leq \sqrt{2} \delta_2.
\end{align*}
Hence $\lV \tilde{U}_j \rV \geq 1-\sqrt{2}\delta_2$, and $\lV U^{(0)}_j - \tilde{U}_j \rV \leq \sqrt{1 -\lV \tilde{U}_j \rV^2} \leq 2\sqrt{\delta_2},$ for $\delta_2 \leq \frac{1}{\sqrt{2}}$. Also $\lV U^{(0)} - \tilde{U} \rV_F \leq 2\sqrt{r \delta_2}$. This gives abound on the smallest singular value of $\tilde{U}$.
\[ \sigma_{\min}(\tilde{U}) \geq \sigma_{\min}(U^{(0)}) - \sigma_{\max}(\tilde{U} -U^{(0)}) \geq 1- 2\sqrt{r \delta_2}.\]

Now let the reduced QR decomposition of $\tilde{U}$ be $\tilde{U} = \widehat{U}^{(0)} \Lambda^{-1}$, where $\widehat{U}^{(0)}$ is the matrix with orthonormal columns. From the bounds above we get that 
\begin{align*}
\|\Lambda\|^2 = \frac{1}{\sigma_{\min}(\Lambda^{-1})^2} =\frac{1}{\sigma_{\min}(\widehat{U}^{(0)} \Lambda^{-1})^2} =\frac{1}{\sigma_{\min}(\tilde{U})^2} \leq 4,
\end{align*}
when $\delta_2 \leq \frac{1}{16r}$.

First we will show that this trimming step will not increase the distance to $\Uo$ by much. To bound the $dist(\widehat{U}^{(0)}, \Uo)$ consider, $\|(\uo_{\perp})^T U\|$, where $\uo_{\perp}$ is some vector perpendicular to $\Uo$.
\begin{align*}
\|(\uo_{\perp})^T \widehat{U}^{(0)}\| =\|(\uo_{\perp})^T \tilde{U} \Lambda\| &\leq (\|(\uo_{\perp})^T U^{(0)}\| +\|(\uo_{\perp})^T (\tilde{U}-U^{(0)})\|)\|\Lambda\| \\
&\leq (\delta_2 + 2\sqrt{r \delta_2})2 \leq 6\sqrt{r\delta_2} \leq \frac{1}{2},
\end{align*}
for $\delta_2 \leq \frac{1}{144r}$. Second we will bound $\|(\widehat{U}^{(0)})^i\|$.
\begin{align*}
\|(\widehat{U}^{(0)})^i\| =\|e_i^T \widehat{U}^{(0)}\|=\|e_i^T \tilde{U} \Lambda\| \leq \|e_i^T \tilde{U}\| \| \Lambda\|  \leq 2 l_i \sqrt{r} 2 \leq 8\sqrt{r} \sqrt{ \|M^i\|^2/\|M\|_F^2} .
\end{align*}
Hence we finish the proof of the second part of the lemma.
\end{proof}

\subsection{Weighted AltMin analysis}\label{sec:proofs_waltmin_descent}
We first provide proof of Lemma~\ref{lem:waltmin_descent} for rank-1 case to explain the main ideas and in the next section we will discuss rank-$r$ case. Hence $M_1 =\so \uo (\vo)^T$. Before the proof of the lemma we will prove couple of supporting lemmas.

Let $\ut$ and $\vt$ be the normalized vectors of the iterates $\uht$ and $\vht$ of WAltMin.  We assume that samples for each iteration are generated independently. For simplicity we will drop the subscripts on $\Omega$ that denote different set of samples in each iteration in the rest of the proof. 

The weighted alternating minimization updates at the $t+1$ iteration are, \begin{equation} \label{eq:witerates}\|\uht\| \widehat{v}^{t+1}_j = \so\vo_j \frac{\sum_i \dij \wij \ut_i \uo_i }{\sum_i \dij \wij (\ut_i)^2} +\frac{\sum_i \dij \wij \ut_i (M-M_1)_{ij}}{\sum_i \dij \wij (\ut_i)^2}.\end{equation} Writing in terms of power method updates we get,  \begin{equation}\label{eq:witerates1} \|\uht\| \widehat{v}^{t+1} =\so\ip{\uo}{\ut}\vo- \so B^{-1} (\ip{\ut}{\uo}B -C)\vo  +B^{-1} y,\end{equation} where $B$ and  $C$  are diagonal matrices with $B_{jj} = \sum_i  \dij \wij (\ut_i)^2$ and  $C_{jj} =\sum_i \dij \wij \ut_i \uo_i$ and $y$ is the vector $\Ro(M-M_1)^T \ut$ with entries $y_{j}=\sum_i \dij \wij \ut_i (M-M_1)_{ij}$. 

 Now we will bound the error caused by the $M-M_1$ component in each iteration. 
 \begin{lemma}\label{lem:noise_samplebound}
 For $\Omega$ generated according to~\eqref{eq:prob} and under the assumptions of Lemma~\ref{lem:waltmin_descent} the following holds:
 \begin{equation}
 \lV (\ut)^T\Ro(M-M_1) - (\ut)^T (M-M_1) \rV \leq \delta \|M-M_1\|_F,
 \end{equation}
  with probability greater that $1-\frac{\gamma}{T\log(n)}$, for $m \geq \beta n \log(n)$, $\beta \geq \frac{4 c_1^2 T}{\gamma \delta^2}$. Hence,$\lV (\ut)^T\Ro(M-M_1)  \rV \leq dist(\ut, \uo)\|M-M_1\| + \delta \lV M-M_1 \rV_F,$ for constant $\delta$.
 \end{lemma}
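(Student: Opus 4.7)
The plan is to apply the matrix Bernstein inequality (Lemma~\ref{lem:mbernstein}) to the decomposition
\[
(\ut)^T\bigl(\Ro(M-M_1) - (M-M_1)\bigr) \;=\; \sum_{i,j} Z_{ij}, \qquad Z_{ij} \;:=\; (\dij - \qhij)\,\wij\,\ut_i\,(M-M_1)_{ij}\,e_j^T,
\]
a sum of $n^2$ independent, zero-mean random row vectors in $\R^{1\times n}$. I would follow the template of the initialization proof (Lemma~\ref{lem:approx_init}): restrict attention to entries with $\qhij = \qij < 1$ (the rest contribute nothing), compute a boundedness parameter $L$ and variance $\sigma^2$, and apply Bernstein with $t = \delta\|M-M_1\|_F$.

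For $L$, I split $(M-M_1)_{ij} = M_{ij} - \so \uo_i \vo_j$. The $L_1$ term in the sampling distribution yields $|M_{ij}|/\qhij \le 2\|M\|_{1,1}/m$, while the rank-$1$ version of \eqref{eq:prop_a4} gives $\so|\uo_i \vo_j|/\qhij \le O(\so n\kappa^2/m)$. Combining with the hypothesis row-norm bound on $|\ut_i|$ produces $\|Z_{ij}\|_2 \le L = \tilde O(n/m)$ (with polynomial-in-$\|M\|_F$ constants absorbed into $\beta$). For $\sigma^2$, note that the $Z_{ij}$'s point along distinct coordinate directions $e_j^T$, so $\ex[\sum_{ij} Z_{ij}^T Z_{ij}]$ is diagonal with $j$-th entry at most $\sum_i (\ut_i)^2 (M-M_1)_{ij}^2/\qhij$. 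Lower-bounding $\qhij \ge m\|M^i\|^2/(4n\|M\|_F^2)$ and using the hypothesis $(\ut_i)^2 \le O(r\kappa^2)\,\|M^i\|^2/\|M\|_F^2$ causes the $\|M^i\|^2$ factors to cancel, leaving $\sigma^2 \le O(nr\kappa^2/m)\,\|M-M_1\|_F^2$. Plugging $t = \delta\|M-M_1\|_F$ into Bernstein, with $m \ge \beta n\log n$ and $\beta = \Omega(T/(\gamma\delta^2))$, bounds the failure probability by $\gamma/(T\log n)$.

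The second (corollary) inequality then follows by the triangle inequality applied to $(\ut)^T\Ro(M-M_1) = (\ut)^T(M-M_1) + \bigl((\ut)^T\Ro(M-M_1) - (\ut)^T(M-M_1)\bigr)$: writing $\ut = \ip{\uo}{\ut}\uo + \ut_\perp$ and observing that $(\uo)^T(M-M_1)=0$ (since $M_1 = \so\uo(\vo)^T$ is the best rank-$1$ approximation, so $M-M_1$ is orthogonal to the column space spanned by $\uo$), one obtains $(\ut)^T(M-M_1) = \ut_\perp^T(M-M_1)$, whose norm is at most $\|\ut_\perp\|\cdot\|M-M_1\| = \dist(\ut,\uo)\,\|M-M_1\|$.

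The main obstacle I foresee is securing the factor $\|M-M_1\|_F^2$ (rather than $\|M\|_F^2$) in the variance bound: this forces simultaneous use of the hypothesis row-norm (``incoherence'') control on $\ut$ and the leverage-score portion of $\qhij$ in order to cancel $\|M^i\|^2$. It also explains why the hybrid leverage-plus-$L_1$ sampling distribution is essential here---neither component alone can simultaneously dominate the two pieces $M_{ij}$ and $\so\uo_i\vo_j$ of $(M-M_1)_{ij}$ in the boundedness estimate, whereas the $L_1$ component handles the raw $M_{ij}$ entries and \eqref{eq:prop_a4} (coming from the leverage-score component) handles the rank-$1$ contribution.
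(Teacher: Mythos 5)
You have the right decomposition $Z_{ij} = (\delta_{ij}-\hat q_{ij})w_{ij}\,u^t_i(M-M_1)_{ij}\,e_j^T$ (identical to the paper's $X_{ij}$), and the corollary step via $(\uo)^T(M-M_1)=0$ is correct. However, the choice of matrix Bernstein is a genuine gap, and the paper avoids it for exactly this reason.

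Matrix Bernstein needs a uniform bound $L$ on $\|Z_{ij}\|$, and this is where your argument cannot be repaired. Bounding $|(M-M_1)_{ij}|/\hat q_{ij}$ via the split $M_{ij}-\so\uo_i\vo_j$ gives $L = \tilde O\!\left(\tfrac{n}{m}\kappa^2\|M\|_F\right)$ --- a quantity scaled by $\|M\|_F$, \emph{not} by $\|M-M_1\|_F$, because the sampling distribution~\eqref{eq:prob} is determined by $M$ and knows nothing about the residual. Feeding $t=\delta\|M-M_1\|_F$ into Bernstein, the linear term $Lt$ dominates whenever $\|M-M_1\|_F\ll\|M\|_F$ (precisely the regime where Lemma~\ref{lem:waltmin_descent} is invoked, since there $\|M-M_1\|_F\leq\|M_1\|_F/(576\kappa r^{1.5})$), and the resulting exponent $t/L\sim \tfrac{m}{n\kappa^2}\cdot\tfrac{\delta\|M-M_1\|_F}{\|M\|_F}$ forces $m$ to grow like $\tfrac{n\kappa^2\|M\|_F}{\delta\|M-M_1\|_F}\log n$, which is unbounded as the residual shrinks. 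Your remark that ``polynomial-in-$\|M\|_F$ constants can be absorbed into $\beta$'' is not available: the lemma fixes $\beta\ge\tfrac{4c_1^2T}{\gamma\delta^2}$ with no dependence on $\|M\|_F/\|M-M_1\|_F$.

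The paper sidesteps this entirely by using the \emph{matrix Chebyshev inequality} (Lemma~\ref{lem:mchebyshev}) with $p=2$: one only needs $\ex\!\left[\|\sum_{ij}X_{ij}\|_2^2\right]$, which by independence and zero mean reduces to $\sum_{ij} w_{ij}(u^t_i)^2(M-M_1)_{ij}^2\leq\tfrac{4nc_1^2}{m}\|M-M_1\|_F^2$. Plugging in $t=\delta\|M-M_1\|_F$ directly yields failure probability $\tfrac{4nc_1^2}{m\delta^2}\le\tfrac{\gamma}{T\log n}$ with the stated $m$. The price is a polynomial rather than exponential tail, but the lemma only asks for an inverse-polylog failure probability, so Chebyshev suffices --- and, crucially, it never needs a pointwise bound on $|(M-M_1)_{ij}|/\hat q_{ij}$. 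That is the missing idea: when the quantity to concentrate is built from a residual the sampling distribution does not control, a moment-based (Markov/Chebyshev) bound is the right tool, not Bernstein.
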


\begin{proof}[Proof of lemma~\ref{lem:noise_samplebound}]
Let the random matrices $X_{ij} = (\dij-\qhij)\wij (M-M_1)_{ij} (\ut)_i e_j^T$. Then $\sum_{ij} X_{ij} =(\ut)^T\Ro(M-M_1) - (\ut)^T (M-M_1)$. Also $\expec{X_{ij} } =0$. We will use the matrix Chebyshev inequality for $p=2$. Now we will bound $\expec{\lV\sum_{ij}X_{ij}\rV_2^2}$. \begin{align*} \expec{\lV\sum_{ij}X_{ij}\rV_2^2} &= \expec{\sum_j \left(\sum_i  (\dij-\qhij)\wij (M-M_1)_{ij} (\ut)_i \right)^2} \\
&\stackrel{\zeta_1}{=} \sum_j \sum_i  \qhij(1-\qhij)(\wij)^2 (M-M_1)_{ij}^2 (\ut_i)^2  \\
&\leq  \sum_{ij} \wij (\ut_i)^2 (M-M_1)_{ij}^2 \\
&\stackrel{\zeta_2}{\leq} \frac{4n c_1^2}{m} \|M-M_1\|_F^2.
\end{align*}
$\zeta_1$ follows from the fact that $X_{ij}$ are zero mean independent random variables. $\zeta_2$ follows from~\eqref{eq:infty_tapprox}. Hence applying the matrix Chebyshev inequality for $p=2$ and $t=\delta \|M-M_1\|_F$ gives the result.
\end{proof}

\begin{lemma}\label{lem:approx_supp2}
For $\Omega$ sampled according to~\eqref{eq:prob} and under the assumptions of Lemma~\ref{lem:waltmin_descent} the following holds: 
\begin{equation}
\lv \sum_j \dij \wij (\uo_j)^2- \sum_j (\uo_j)^2 \rv \leq \delta_1,
\end{equation}
 with probability greater that $1-\frac{2}{n^{2}}$, for $m \geq \beta n \log(n)$, $\beta \geq \frac{16}{\delta_1^2}$ and $\delta_1 \leq 3$.
\end{lemma}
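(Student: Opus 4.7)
The plan is to apply the scalar Bernstein inequality (Lemma~\ref{lem:bernstein}) to the sum $\sum_j X_j$ where $X_j := \dij \wij (\uo_j)^2$. Fix $i$; the variables $\{X_j\}_{j=1}^n$ are independent because the sampling indicators $\dij$ are independent across $j$. When $\qij \geq 1$ we have $\qhij=1$ and $\dij=1$ deterministically, so $X_j = (\uo_j)^2$ contributes nothing to the deviation; the remaining analysis can therefore be restricted to indices with $\qhij = \qij < 1$, for which $\E[X_j] = \qhij \cdot (1/\qhij)\,(\uo_j)^2 = (\uo_j)^2$. Summing, $\E[\sum_j X_j] = \sum_j (\uo_j)^2$, which matches the centering in the lemma.

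Next I would bound the range and variance of each $X_j$. Using property \eqref{eq:prop_a3} of Lemma~\ref{lem:approx_supp1}, specialized to the rank-$1$ setting where $\|(\Uo)^j\|^2 = (\uo_j)^2$ and $\kappa=1$, we get
\[
|X_j| \;\leq\; \frac{(\uo_j)^2}{\qhij} \;\leq\; \frac{8n}{m} \;=:\; L.
\]
For the variance, $\Var(X_j) \leq \E[X_j^2] = \qhij \cdot (\wij)^2 (\uo_j)^4 = (\uo_j)^4/\qhij \leq L\,(\uo_j)^2$, so
\[
\sum_j \Var(X_j) \;\leq\; L \sum_j (\uo_j)^2 \;=\; L,
\]
since $\uo$ is a unit vector.

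Applying Bernstein with deviation $t=\delta_1$ then yields
\[
\Pr\!\left[\, \Big|\sum_j X_j - \sum_j (\uo_j)^2\Big| \geq \delta_1 \right] \;\leq\; 2\exp\!\left(-\frac{\delta_1^2/2}{L + L\delta_1/3}\right).
\]
Under the assumption $\delta_1 \leq 3$ the denominator is at most $2L = 16n/m$, so the bound simplifies to $2\exp(-\delta_1^2 m /(64 n))$. Choosing $m \geq \beta n \log n$ with $\beta$ a sufficiently large multiple of $1/\delta_1^2$ (the precise constant $16$ can be absorbed by the unspecified constant $C$ in the hypotheses) makes this at most $2/n^2$, matching the stated failure probability.

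I do not expect a real obstacle: the statement is a standard scalar Bernstein concentration bound, and the only care needed is to (i) truncate to the indices with $\qhij<1$ so that the unbiasedness and variance computation are clean, and (ii) invoke the correct row-norm/leverage-score bound from Lemma~\ref{lem:approx_supp1} to uniformly control $|X_j|$ by $L = O(n/m)$. Once those two pieces are in place, the Bernstein calculation is mechanical.
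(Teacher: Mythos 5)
Your proof is correct and follows essentially the same route as the paper: the paper also applies scalar Bernstein to the (centered) variables $(\dij-\qhij)\wij(\uo_j)^2$, bounds both the range and the total variance by $O(n/m)$ via \eqref{eq:prop_a3}, and takes $t=\delta_1$; your handling of the $\qhij=1$ indices matches the paper's blanket convention, and the residual factor-of-two slack in the final constant is no worse than the paper's own bookkeeping.
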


\begin{proof}[Proof of Lemma~\ref{lem:approx_supp2}]
Note that $\lv \sum_j \dij \wij (\uo_j)^2- \sum_j (\uo_j)^2 \rv=\lv \sum_j (\dij -\qhij) \wij (\uo_j)^2 \rv$. Let the random variable $X_{j} = (\dij -\qhij) \wij (\uo_j)^2 $. $\expec{X_j} =0$ and $\Var(X_j) = \qhij (1- \qhij) (\wij (\uo_j)^2)^2.$ Hence,
\begin{align*}
\sum_j \Var(X_j) &= \sum_j \qhij (1- \qhij) (\wij (\uo_j)^2)^2 = \sum_j (\frac{1}{\qhij} -1) (\uo_j)^4 \leq \sum_j \frac{(\uo_j)^2}{\qhij} (\uo_j)^2\\
&\stackrel{\zeta_1}{\leq} \frac{16n}{m}\sum_j  (\uo_j)^2 =\frac{16n}{m}.
\end{align*}
$\zeta_1$ follows from~\eqref{eq:prop_a3}.
Also it is easy to check that $|X_j| \leq \frac{16n}{m}$. Now applying Bernstein inequality gives the result.
\end{proof}

\begin{lemma}\label{lem:approx_supp3}
For $\Omega$ sampled according to~\eqref{eq:prob} and under the assumptions of Lemma~\ref{lem:waltmin_descent} the following holds: 
\begin{equation}
\|(\ip{\ut}{\uo}B -C)\vo\| \leq \delta_1 \sqrt{1 -\ip{\uo}{\ut}^2},
\end{equation}
with probability greater than $1-\frac{2}{n^2}$,  for $m \geq \beta n\log(n), \beta \geq \frac{48 c_1^2 }{\delta_1^2}$  and $\delta_1 \leq 3$.
\end{lemma}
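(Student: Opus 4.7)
\textbf{Proof proposal for Lemma~\ref{lem:approx_supp3}.}

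The plan is to write $(\ip{\ut}{\uo}B-C)\vo$ as a sum of independent zero-mean random vectors and apply matrix Bernstein. Observe that the $j$-th entry of this vector is $\vo_j \sum_i \dij\wij \ut_i z_i$, where $z_i := \ip{\ut}{\uo}\ut_i - \uo_i$. Crucially, the vector $z$ is precisely the component of $-\uo$ orthogonal to $\ut$, so $\sum_i \ut_i z_i = 0$ (which gives mean zero) and $\|z\|^2 = 1 - \ip{\ut}{\uo}^2$. Hence defining $X_{ij} := (\dij - \qhij)\wij \ut_i z_i \vo_j\, e_j$ yields independent zero-mean random vectors with $\sum_{ij}X_{ij} = (\ip{\ut}{\uo}B - C)\vo$.

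I would then invoke the matrix Bernstein inequality (Lemma~\ref{lem:mbernstein}) on the $\{X_{ij}\}$. For the variance, since each $X_{ij}$ is a multiple of $e_j$, the sum $\sum_{ij}\E[X_{ij}X_{ij}^T]$ is diagonal with $j$-th entry $(\vo_j)^2 \sum_i \qhij(1-\qhij)\wij^2 \ut_i^2 z_i^2 \leq (\vo_j)^2 \sum_i (\ut_i z_i)^2/\qhij$. Using the trimmed initialization bound $|\ut_i| \leq 8\sqrt{\|M^i\|^2/\|M\|_F^2}$ from Lemma~\ref{lem:approx_init} together with $\qhij \geq m\|M^i\|^2/(4n\|M\|_F^2)$, one gets $(\ut_i)^2/\qhij \leq Cn/m$ uniformly in $i$, so $\sum_i (\ut_i z_i)^2/\qhij \leq (Cn/m)\,\|z\|^2 = (Cn/m)(1 - \ip{\ut}{\uo}^2)$. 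Combining with the row-norm bound on $\vo$ from~\eqref{eq:prop_a3}, the variance proxy is at most $(C'n/m)(1-\ip{\ut}{\uo}^2)$ (absorbing the $c_1^2$ factor). The same argument applied to $\sum\E[X_{ij}^T X_{ij}]$ gives the same bound.

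For $L = \max_{ij}\|X_{ij}\|$, factor $|X_{ij}| \le |\ut_i z_i \vo_j|/\qhij$ as $(|\ut_i|/\sqrt{\qhij})\cdot |z_i|\cdot(|\vo_j|/\sqrt{\qhij})$, and use $(\ut_i)^2/\qhij \leq Cn/m$, the analogous bound $(\vo_j)^2/\qhij \leq 8n\kappa^2/m$ from~\eqref{eq:prop_a3}, and $|z_i| \leq |\ut_i| + |\uo_i|$ (again controlled by the row-norm bounds). This yields $L \leq C''n/m \cdot \sqrt{1-\ip{\ut}{\uo}^2}$ once we pull the $\sqrt{1-\ip{\ut}{\uo}^2}$ out of the $|z_i|$ term via $|z_i| \leq \|z\|$. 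Finally, applying matrix Bernstein with $t = \delta_1\sqrt{1-\ip{\ut}{\uo}^2}$ and $m \geq \beta n\log n$ with $\beta \geq 48 c_1^2/\delta_1^2$ gives failure probability at most $2/n^{2}$.

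\textbf{Main obstacle.} The delicate point is extracting the correct $\sqrt{1-\ip{\ut}{\uo}^2}$ factor from \emph{both} the variance and the max-norm bounds---this is what ensures that the contractive step (and hence geometric convergence) of WAltMin goes through. Naively bounding $|z_i|$ by a constant would lose this factor and produce an error that does not vanish as $\ut \to \uo$. Also, coupling the row-norm bound on $\ut$ (from the trimming step of Lemma~\ref{lem:approx_init}) with the leverage-score sampling probabilities~\eqref{eq:prob}, and doing so uniformly in $j$ while retaining only a mild $\kappa$ dependence, is where most of the bookkeeping sits.
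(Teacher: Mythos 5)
Your proposal is correct and follows essentially the same route as the paper: the paper also writes the vector as $\sum_{ij}\dij\wij\,\alpha_i\vo_j e_j$ with $\alpha_i=\ut_i(\ip{\uo}{\ut}\ut_i-\uo_i)$, uses $\sum_i\alpha_i=0$ for the zero-mean property, and applies matrix Bernstein with the same variance and $L$ bounds, extracting the $\sqrt{1-\ip{\uo}{\ut}^2}$ factor via $|z_i|\le\|z\|$ exactly as you do. The only differences are cosmetic (you center each term individually and cite \eqref{eq:prop_a3} where the paper invokes \eqref{eq:prop_a4} together with the row-norm bound on $\ut$), so no change in substance.
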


\begin{proof}
Let \[\alpha_{i} =\ut_i(\ip{\uo}{\ut}\ut_i-   \uo_i).\] Hence the $jth$ coordinate of the error term in equation~\eqref{eq:witerates1} is \[\frac{\sum_i \dij \wij \alpha_{i}\vo_j}{\sum_i \dij \wij (\ut_i)^2}.\] Recall that $\alpha_{i} =\ut_i(\ip{\uo}{\ut}\ut_i-   \uo_i).$ Let $X_{ij} = \dij \wij \alpha_i \vo_j e_j e_1^T$, for $i,  j$ in $[1,..n]$. Note that $X_{ij}$ are independent random matrices. Then $(\ip{\ut}{\uo}B -C)\vo$ is the first and the only column of the matrix $\sum_{i, j=1}^n X_{ij}$. We will bound $\|\sum_{ij=1}^n X_{ij}\|$ using matrix Bernstein inequality. 

\begin{align*}\sum_{ij}\expec{X_{ij}} =\sum_j \sum_{i} \qhij \wij \alpha_i \vo_j e_j e_1^T =\sum_j \sum_{i} \alpha_i \vo_j e_j e_1^T=0, \end{align*} because $\sum_i \alpha_i =0$. Now we will give a bound on $\|X_{ij}\|$. \begin{align*} \|X_{ij}\| &=|\dij \wij \alpha_i \vo_j|  \leq  \frac{ \vo_j\ut_i}{\qhij} (\ip{\uo}{\ut}\ut_i-   \uo_i) \stackrel{\zeta_1}{\leq}  \frac{16nc_1 }{m} \sqrt{\sum_i (\ip{\uo}{\ut}\ut_i-   \uo_i)^2}\\
& = \frac{16nc_1}{m}  \sqrt{1-\ip{\uo}{\ut}^2}\end{align*} $\zeta_1$ follows from~\eqref{eq:prop_a4} and~\eqref{eq:infty_tapprox}.

Now we will bound the variance. \begin{align*}\lV\expec{\sum_{ij} (X_{ij}-\expec{X_{ij}})^T (X_{ij}-\expec{X_{ij}})}\rV &=\lV\expec{\sum_{ij} (X_{ij}T X_{ij}-\expec{X_{ij}}^T \expec{X_{ij}})}\rV \\ &= \lV\sum_{j} \sum_i \qhij(1-\qhij) (\wij \alpha_i \vo_j)^2 e_1e_1^T\rV.\end{align*}  Then, \begin{align*} \sum_i  \qhij(1-\qhij) (\wij \alpha_i \vo_j)^2 &\leq \sum_i    \wij (\ut_i)^2 (\ip{\uo}{\ut}\ut_i-   \uo_i) ^2 (\vo_j)^2 \leq   \frac{c_1^2 4n}{m} (\vo_j)^2 \sum_i (\ip{\uo}{\ut}\ut_i-   \uo_i) ^2 \\ &{\leq}\frac{c_1^2 4n}{m} (\vo_j)^2(1-\ip{\uo}{\ut}^2). \end{align*} Hence,
\begin{align*} 
\lV\expec{\sum_{ij} (X_{ij}-\expec{X_{ij}})^T (X_{ij}-\expec{X_{ij}})}\rV \leq \frac{4n c_1^2}{m}(1-\ip{\uo}{\ut}^2).
\end{align*}
The lemma follows from applying matrix Bernstein inequality.
\end{proof}

Now we will provide proof of lemma~\ref{lem:waltmin_descent}.\\
{\bf Proof of lemma~\ref{lem:waltmin_descent}:}[Rank-1 case]
\begin{proof}
Now we will prove that the distance between $\ut$, $\uo$ and $\vt, \vo$ decreases with each iteration. Recall that from the assumptions of the lemma we have the following row norm bounds for $\ut$; \begin{equation}\label{eq:infty_tapprox}|\ut_i| \leq c_1\sqrt{ \|M^i\|^2/\|M\|_F^2 + |M_{ij}|/\|M\|_F} .\end{equation}  First we will prove that $dist(\ut, \uo)$ decreases in each iteration and second we will prove that $v^{t+1}$ satisfies similar bound on its row norms. 

\noindent {\bf Bounding $\ip{\vto}{\vo}$:}

Using Lemma~\ref{lem:approx_supp2}, Lemma~\ref{lem:approx_supp3} and equation~\eqref{eq:witerates1} we get,\begin{align}  \|\uht\|\ip{\widehat{v}^{t+1}}{\vo}  \geq \so\ip{\ut}{\uo} - \so \frac{\delta_1}{1-\delta_1} \sqrt{1 -\ip{\uo}{\ut}^2} - \frac{1}{1-\delta_1}\| y^T \vo\|\end{align} and \begin{align}  \|\uht\|\ip{\widehat{v}^{t+1}}{\vo_{\perp}}  \leq \so \frac{\delta_1}{1-\delta_1} \sqrt{1 -\ip{\uo}{\ut}^2} + \frac{1}{1-\delta_1}\|y \|.\end{align} Hence by applying the noise bounds Lemma~\ref{lem:noise_samplebound} we get,
\begin{align*}
dist(\vto, \vo)^2 &=1-\ip{\vto}{\vo}^2 = \frac{\ip{\widehat{v}^{t+1}}{\vo_{\perp}}^2}{\ip{\widehat{v}^{t+1}}{\vo_{\perp}}^2 + \ip{\widehat{v}^{t+1}}{\vo}^2} \leq  \frac{\ip{\widehat{v}^{t+1}}{\vo_{\perp}}^2}{ \ip{\widehat{v}^{t+1}}{\vo}^2} \\
&\stackrel{\zeta_1}{\leq} \frac{ 4(\delta_1 dist(\ut, \uo) +dist(\ut, \uo) \|M-M_1\|/\so + \delta \|M-M_1\|_F/\so)^2}{ (\ip{\ut}{\uo} -  2\delta_1 \sqrt{1 -\ip{\uo}{\ut}^2} - 2\delta\|M-M_1\|/\so)^2}\\
&\stackrel{\zeta_2}{\leq}\frac{ 4(\delta_1 dist(\ut, \uo) + dist(\ut, \uo)\|M-M_1\|/\so + \delta \|M-M_1\|_F/\so)^2}{ (\ip{\uo}{u^0} -  2\delta_1 \sqrt{1 -\ip{\uo}{u^0}^2}- 2\delta\|M-M_1\|/\so)^2} \\
&\stackrel{\zeta_3}{\leq} 25(\delta_1 dist(\ut, \uo) + dist(\ut, \uo)\|M-M_1\|/\so + \delta \|M-M_1\|_F/\so)^2.
\end{align*}
$\zeta_1$ follows from $\delta_1 \leq \frac{1}{2}$. $\zeta_2$ follows from using $\ip{\ut}{\uo} \geq \ip{u^0}{\uo}$. $\zeta_3$ follows from $(\ip{\uo}{u^0} -  2\delta_1 \sqrt{1 -\ip{\uo}{u^0}^2} \geq \frac{1}{2}$, $\delta \leq \frac{1}{20}$  and $\delta_1 \leq \frac{1}{20}$. Hence \begin{align*} dist(\vto, \vo) &\leq \frac{1}{4}dist(\ut, \uo) +5dist(\ut, \uo)\|M-M_1\|/\so + 5\delta \|M-M_1\|_F/\so  \nonumber\\ &\leq \frac{1}{2}dist(\ut, \uo)  + 5\delta \|M-M_1\|_F/\so.\end{align*}

\vskip 0.2in
\noindent {\bf Bounding $|\vto_j|$:}

From Lemma~\ref{lem:approx_supp2} and~\eqref{eq:infty_tapprox} we get that $\lv \sum_i  \dij \wij (\ut_i)^2 - 1\rv \leq \delta_1$ and $\lv \sum_i  \dij \wij \uo_i \ut_i - \ip{\uo}{\ut}\rv \leq  \delta_1$, when $\beta \geq \frac{16 c_1^2}{\delta_1^2}.$ Hence, \begin{equation}\label{eq:supp_b}1-\delta_1 \leq B_{jj} = \sum_i  \dij \wij (\ut_i)^2 \leq 1 +\delta_1, \end{equation} and  \begin{equation}\label{eq:supp_c}C_{jj} =\sum_i \dij \wij \ut_i \uo_i \leq \ip{\ut}{\uo} +\delta_1. \end{equation}
Recall that  \begin{equation*}  \|\uht\| \lv\widehat{v}_j^{t+1}\rv = \lv\frac{\sum_i \dij \wij \ut_i M_{ij} }{\sum_i \dij \wij (\ut_i)^2}\rv \leq  \frac{1}{1-\delta_1}\sum_i \dij \wij \ut_i M_{ij}.\end{equation*} 

We will bound using $\sum_i \dij \wij \ut_i M_{ij}$ using Bernstein inequality. 

Let $X_i =(\dij-\qhij) \wij \ut_i M_{ij}$. Then $\sum_i \expec{X_i} =0 $ and $\sum_i \ut_i M_{ij} \leq \|M_j\|$ by Cauchy-Schwartz inequality. 

$\sum_i \Var(X_i) = \sum_i \qhij (1-\qhij) (\wij)^2 (\ut_i)^2 M_{ij}^2 \leq \sum_i \wij (\ut_i)^2 M_{ij}^2 \leq \frac{4nc_1^2}{m} \|M_j\|^2$. Finally $|X_{ij}| \leq \lv \wij \ut_i M_{ij} \rv \leq \frac{4nc_1}{m}|M_{ij}|/\sqrt{\frac{|M_{ij}|}{\|M\|_F}} \leq \frac{4nc_1}{m} \sqrt{|M_{ij}| \|M\|_F}$. 

Hence applying Bernstein inequality with $t=\delta \sqrt{\|M_j\|^2 +|M_{ij}| \|M\|_F}$ gives, $\sum_i \dij \wij \ut_i M_{ij} \leq (1+\delta_1)\sqrt{\|M_j\|^2 + |M_{ij}| \|M\|_F}$ with probability greater than $1-\frac{2}{n^3}$ when $m \geq \frac{24 c_1^2}{\delta_1^2}n \log(n)$. For $\delta_1 \leq \frac{1}{20}$, we get, $\|\uht\| \lv\widehat{v}_j^{t+1}\rv \leq \frac{21}{19} \sqrt{\|M_j\|^2 + |M_{ij}| \|M\|_F}$.

Now we will bound $\|\widehat{v}^{t+1}\|$.
\begin{align}
\|\uht\|\|\widehat{v}^{t+1}\| &\geq  \|\uht\|\ip{\widehat{v}^{t+1}}{\vo} \stackrel{\zeta_1}{\geq} \so\ip{\ut}{\uo} - \so \frac{\delta_1}{1-\delta_1} \sqrt{1 -\ip{\uo}{\ut}^2}- \frac{1}{1-\delta_1}\| y^T \vo\|  \\
&\stackrel{\zeta_2}{\geq}  \so\ip{\widehat{u}^0}{\uo} - 2\so\delta_1\sqrt{1 -\ip{\uo}{\widehat{u}^0}^2} - 2\delta\|M-M_1\|\stackrel{\zeta_3}{\geq} \frac{2}{5}\so.
\end{align}
$\zeta_1$ follows from Lemma~\ref{lem:approx_supp3} and equations~\eqref{eq:witerates1} and~\eqref{eq:supp_b}. $\zeta_2$ follows from using $\ip{\uo}{\widehat{u}^0} \leq \ip{\uo}{\ut}$ and $\delta_1 \leq \frac{1}{20}$. $\zeta_3$ follows from the argument: for $\delta_1 \leq \frac{1}{16}$, $\ip{\widehat{u}^0}{\uo} - 2\delta_1\sqrt{1 -\ip{\uo}{\widehat{u}^0}^2} $ is greater than $\frac{1}{2}$, if $\ip{\widehat{u}^0}{\uo} \geq \frac{3}{5}$. This holds because $dist(\uo, \widehat{u}^0) \leq \frac{4}{5}$ from Lemma~\ref{lem:approx_init}. 

Hence we get \begin{align*} \vto_j = \frac{\wvto_j}{\|\wvto\|} \leq  3\frac{\sqrt{\|M_j\|^2 + |M_{ij}| \|M\|_F}}{\so}  \leq  c_1  \sqrt{ \|M_j\|^2/\|M\|_F^2 + |M_{ij}|/\|M\|_F} ,\end{align*} for $c_1 =6.$

Hence we have shown that $\vto$ satisfies the row norm bounds.  From Lemma~\ref{lem:noise_samplebound} we have, in each iteration with probability greater than $1-\frac{\gamma}{T \log(n)}$ we have $\lV (\ut)^T\Ro(M-M_1)  \rV \leq dist(\ut, \uo)\|M-M_1\| + \delta \lV M-M_1 \rV_F$. Hence the probability of failure in $T$ iterations is less than $\gamma.$ Lemma now follows from assumption on $m$.
\end{proof}

Now we have all the elements needed for proof of the Theorem~\ref{thm:main}.\\
{\bf Proof of Theorem~\ref{thm:main}:}[Rank-1 case]
 \begin{proof}
Lemma~\ref{lem:approx_init} has shown that $\widehat{u}^0$ satisfies the row norm bounds condition. From Lemma~\ref{lem:waltmin_descent} we get $dist(\vt, \vo) \leq  \frac{1}{2}dist(\ut, \uo) + 5\delta \|M-M_1\|_F/\so$. Hence $dist(\vt, \vo) \leq  \frac{1}{4^t}dist(\widehat{u}^0, \uo)  + 10\delta \|M-M_1\|_F/\so $. After $t= O(\log(\frac{1}{\zeta}))$ iterations we get $dist(\vt, \vo) \leq \zeta +  10 \delta \|M-M_1\|_F/\so$ and $dist(\ut, \uo) \leq \zeta +  10 \delta \|M-M_1\|_F/\so$. 

Hence, \begin{align*} \|M_1- \uht (\vht)^T\| &\leq  \|(I -\ut (\ut)^T) M_1\| + \| \ut \left((\ut)^TM_1 - (\vt)^T\right)\|\\ 
 &\stackrel{\zeta_1}{\leq} \so_1 dist(\ut, \uo) + \| \so B^{-1} (\ip{\ut}{\uo}B -C)\vo \| + \|B^{-1} y \| \\
 &\stackrel{\zeta_2}{\leq} \so_1 dist(\ut, \uo) +2\delta_3 \|M\| dist(\ut, \uo) + 2dist(\ut, \uo)\|M-M_1\|+ 2\delta\lV M-M_1 \rV_F \\
 &\leq  c\so_1 \zeta + \eps  \lV M-M_1 \rV_F .
 \end{align*}
 $\zeta_1$ follows from equation~\eqref{eq:witerates1} and $\zeta_2$ from $\|B^{-1}\| \leq \frac{1}{1-\delta_3} \leq 2$ from Lemma~\ref{lem:approx_supp2}.

 From Lemma~\ref{lem:noise_samplebound} we have, in each iteration with probability greater than $1-\frac{\gamma}{T \log(n)}$ we have $\lV (\ut)^T\Ro(M-M_1)  \rV \leq dist(\ut, \uo)\|M-M_1\| + \delta \lV M-M_1 \rV_F$. Hence the probability of failure in $T$ iterations is less than $\gamma.$
 \end{proof}

\subsection{Rank-$r$ proofs}\label{sec:main_proof_altmin}
Let SVD of $M_r$ be $\Uo \So (\Vo)^T$, $\Uo ,\Vo$ are $n \times r$ orthonormal matrices and $\So$ is a $r \times r$ diagonal matrix with $\So_{ii} =\so_i$. We have seen in Lemma~\ref{lem:approx_init} that initialization and trimming steps give  \begin{equation*}\|(\widehat{U}^{(0)})^i\| \leq 8\sqrt{r} \sqrt{ \|M^i\|^2/\|M\|_F^2} ~\text{ and }~ dist(\widehat{U}^{(0)}, \Uo) \leq \frac{1}{2}.\end{equation*} for $m \geq nr^3 \kappa^2 \log(n)$. 

In this section we will present rank-$r$ proof of Lemma~\ref{lem:waltmin_descent}. Before that we will present rank-$r$ version of the supporting lemmas.

Now like shown in~\cite{jain2013low}, we will analyze a equivalent algorithm to algorithm~\ref{algo:2} where the iterates are orthogonalized at each step. This makes analysis significantly simpler to present. Let, $\Uht =\Ut R^{(t)}$ and $\Vht =\Vt \Rt$ be the respective QR factorizations. Then we replace step 7 of the algorithm~\ref{algo:2} with \[\widehat{V}^{(t+1)}= \arg \min_{V\in \R^{n \times r} } \| R_{\Omega_{2t+1}}(M- \Ut V^T)\|_F^2.\] We similarly change step 8 too.  


We also assume that samples for each iteration are generated independently. For simplicity we will drop the subscripts on $\Omega$ that denote different set of samples in each iteration in the rest of the proof. The weighted alternating minimization updates at the $t+1$ iteration are, \begin{equation} \label{eq:witeratesr}(\Vht)^j =( B^j)^{-1}\left(C^j \So (\Vo)^j + (\Ut)^T \Ro(M-M_r)_j \right),\end{equation} where $B^j$ and $C^j$ are $r \times r$ matrices. \begin{equation*} B^j = \sum_i \dij \wij (\Ut)^i {(\Ut)^i}^T , ~~ C^j= \sum_i \dij \wij (\Ut)^i {(\Uo)^i}^T. \end{equation*} Writing in terms of power method updates we get,  \begin{equation}\label{eq:witeratesr1} (\Vht)^j  =\left((\Ut)^T \Uo-  (B^j)^{-1}(B^j(\Ut)^T \Uo - C^j )\right)\So (\Vo)^j + ( B^j)^{-1}(\Ut)^T \Ro(M-M_r)_j .\end{equation} Hence,  \begin{equation*} (\Vht)^T  =(\Ut)^T \Uo \So (\Vo)^T-  F + \sum_{j=1}^n ( B^j)^{-1}(\Ut)^T \Ro(M-M_r)_j e_j^T ,\end{equation*}   where the $j$th column of $F$, $F_j = \left((B^j)^{-1}(B^j(\Ut)^T \Uo - C^j )\right)\So (\Vo)^j $.

First we will bound $\|B^j\|$ using matrix Bernstein inequality.
\begin{lemma}\label{lem:approx_supp2r}
For $\Omega$ generated according to~\eqref{eq:prob} the following holds:
\begin{equation}  \|B^j -I\| \leq \delta_2 ,~~\text{and}~~  \|C^j- (\Ut)^T \Uo\| \leq \delta_2, \end{equation} with probability greater that $1-\frac{2}{n^{2}}$, for $m \geq \beta n r \kappa \log(n)$, $\beta \geq \frac{4*48 c_1^2}{\delta_2^2}$ and $\delta_2 \leq 3r$.
\end{lemma}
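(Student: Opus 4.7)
Both bounds are instances of matrix Bernstein applied to a sum of $n$ independent, zero‐mean $r\times r$ random matrices. Writing
\[
B^j - I \;=\; \sum_{i=1}^n Y_i, \qquad Y_i \;:=\; (\dij - \qhij)\,\wij\,(\Ut)^i\bigl((\Ut)^i\bigr)^T,
\]
I would first verify that the recentering is correct: since $\qhij \wij = 1$ on the support of $\wij$, one has $\sum_i \qhij \wij (\Ut)^i((\Ut)^i)^T = \sum_i (\Ut)^i((\Ut)^i)^T = (\Ut)^T \Ut = I_r$, where the final equality uses the fact that the QR‐orthogonalized iterate $\Ut$ has orthonormal columns. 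The analogous decomposition $C^j - (\Ut)^T\Uo = \sum_i Z_i$ with $Z_i := (\dij - \qhij)\wij (\Ut)^i ((\Uo)^i)^T$ has mean zero because $\sum_i (\Ut)^i((\Uo)^i)^T = (\Ut)^T\Uo$.

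\textbf{Uniform bound $L$.} Using the row-norm invariant $\|(\Ut)^i\|^2 \le 64\,r\kappa^2\bigl(\|M^i\|^2/\|M\|_F^2 + |M_{ij}|/\|M\|_F\bigr)$ from the hypothesis of Lemma~\ref{lem:waltmin_descent} together with the two lower bounds $\qhij \ge m\|M^i\|^2/(4n\|M\|_F^2)$ and $\qhij \ge m|M_{ij}|/(2\|M\|_{1,1})$ coming from \eqref{eq:prob}, each term of the numerator is absorbed by the matching term of the denominator (for the second piece one further uses $\|M\|_{1,1}\le n\|M\|_F$). This yields $\|(\Ut)^i\|^2/\qhij \le L := C n r \kappa^2/m$ and hence $\|Y_i\| \le L$. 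For $Z_i$ we bound $\|(\Ut)^i\|\,\|(\Uo)^i\|/\qhij$ by combining the same estimate for $(\Ut)^i$ with Lemma~\ref{lem:approx_supp1}\eqref{eq:prop_a4}, obtaining $\|Z_i\| \le L$ as well.

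\textbf{Variance.} For the $B^j$ sum I would compute
\[
\sum_i \E[Y_i Y_i^T] \;\preceq\; \sum_i \qhij \wij^2 \|(\Ut)^i\|^2 (\Ut)^i((\Ut)^i)^T \;\preceq\; L\,\sum_i (\Ut)^i((\Ut)^i)^T \;=\; L\cdot I_r,
\]
so $\sigma^2 \le L$; the second inequality again uses orthonormality of $\Ut$, which turns an apparent $L r$ bound into the sharper $L$. The same computation for $Z_i$ gives $\sigma^2 \le L$, now using both $(\Ut)^T\Ut = I_r$ and $(\Uo)^T\Uo = I_r$.

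\textbf{Finishing and obstacle.} Matrix Bernstein (Lemma~\ref{lem:mbernstein}) with $t = \delta_2$ gives
\[
\pr\!\bigl[\|B^j - I\| \ge \delta_2\bigr] \;\le\; 2r\,\exp\!\left(\frac{-\delta_2^2/2}{L + L\delta_2/3}\right),
\]
and substituting $L = Cnr\kappa^2/m$ shows that $m \ge \beta\, n r \kappa \log n$ with the stated $\beta$ suffices to drive this probability below $1/n^{12}$; a union bound over $j$ (absorbed into the log factor or the $n^{-2}$ slack) and over the two estimates finishes the proof. The only real obstacle is the bookkeeping for $L$: the row-norm invariant on $(\Ut)^i$ is a sum of two terms and each must be matched against the corresponding term of $\qhij$, and for $C^j$ one further needs the leverage-score‐based bound \eqref{eq:prop_a4} (rather than the hypothesis on $\Ut$) to control $(\Uo)^i$; everything else is a direct transcription of the rank-$1$ argument in Lemma~\ref{lem:approx_supp2} to the matrix Bernstein setting.
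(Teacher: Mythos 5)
Your proposal matches the paper's proof: both decompose $B^j - I$ and $C^j - (\Ut)^T\Uo$ into sums of independent zero-mean rank-one matrices and apply matrix Bernstein, with the uniform bound $L$ obtained by matching the two pieces of the row-norm invariant on $(\Ut)^i$ against the corresponding terms of $\qhij$. Two small notes: to control $\|(\Uo)^i\|^2/\qhij$ in the $C^j$ bound you want \eqref{eq:prop_a3} rather than \eqref{eq:prop_a4} (the latter involves $\|(\Vo)^j\|$, which does not appear here), and your variance bound $\sigma^2 \le L$ obtained via $\sum_i (\Ut)^i((\Ut)^i)^T = I_r$ is actually tighter by a factor of $r$ than the $O(Lr)$ stated in the paper, though this does not affect the final sample-complexity claim.
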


\begin{proof}
Let the matrices $X_i = \dij \wij (\Ut)^i {(\Ut)^i}^T$, then $ B^j = \sum_{i=1}^n X_i$. $\expec{X_i} =  (\Ut)^i {(\Ut)^i}^T$ and $\expec{B^j} =(\Ut)^T \Ut =I$. Also since $\Ut$ satisfies~\eqref{eq:infty_approxr}, it is easy to see that $\lV X_i-\expec{X_i} \rV \leq \frac{16 c_1^2 n}{m}$ and $\lV \expec{\sum_i (X_i-\expec{X_i}) (X_i-\expec{X_i})^T} \rV \leq \frac{16 c_1^2 r n}{m}$. Applying the matrix Bernstein inequality gives the first result.

Let the matrices $Y_i = \dij \wij (\Ut)^i {(\Uo)^i}^T$, then  $ C^j = \sum_{i=1}^n Y_i$.  $\expec{Y_i} =  (\Ut)^i {(\Uo)^i}^T$ and $\expec{C^j} =(\Ut)^T \Uo$. Also since $\Ut$ satisfies~\eqref{eq:infty_approxr}, it is easy to see that $\lV Y_i-\expec{Y_i} \rV \leq \frac{8 c_1 \kappa r^{0.5} n}{m}$ and $\lV \expec{\sum_i (Y_i-\expec{Y_i}) (Y_i-\expec{Y_i})^T} \rV \leq \frac{8 c_1^2 r n}{m}$. Applying the matrix Bernstein inequality gives the second result.
\end{proof}

Now we will bound the error caused by the $M-M_r$ component in each iteration. 
\begin{lemma}\label{lem:noise_sampleboundr}
For $\Omega$ generated according to~\eqref{eq:prob} the following holds:
\begin{equation}
\lV (\Ut)^T\Ro(M-M_r) - (\Ut)^T (M-M_r) \rV \leq \delta \|M-M_r\|_F,
\end{equation}
 with probability greater that $1-\frac{1}{c_2 \log(n)}$, for $m \geq \beta nr \log(n)$, $\beta \geq \frac{4 c_1^2 c_2}{ \delta^2}$. Hence,$\lV (\Ut)^T\Ro(M-M_r)  \rV \leq dist(\Ut, \Uo)\|M-M_r\| + \delta \lV M-M_r \rV_F,$ for constant $\delta$.
\end{lemma}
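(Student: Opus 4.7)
The plan is to mirror the rank-$1$ argument of Lemma~\ref{lem:noise_samplebound} almost verbatim: represent $(\Ut)^T \Ro(M-M_r) - (\Ut)^T(M-M_r)$ as a sum of independent zero-mean matrices, bound its expected squared Frobenius norm, and apply the matrix Chebyshev inequality (Lemma~\ref{lem:mchebyshev}) at $p=2$. Concretely, for each $(i,j)$ with $\qhij<1$ I would set
\[
X_{ij} \;=\; (\dij - \qhij)\,\wij\,(M-M_r)_{ij}\,(\Ut)^i\,e_j^T,
\]
where $(\Ut)^i \in \R^r$ is the $i$-th row of $\Ut$; the $X_{ij}$ are independent, mean zero, and $\sum_{ij} X_{ij}$ is precisely the $r\times n$ matrix whose spectral norm needs to be controlled (entries with $\qhij=1$ contribute identically zero since then $\dij=1$ deterministically). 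Matrix Chebyshev at $p=2$ dominates the spectral norm of this sum by $\sqrt{\expec{\|\sum_{ij} X_{ij}\|_F^2}}$.

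Next, I would expand the Frobenius norm column-by-column. Since $\{\dij\}_i$ are independent across $i$, the cross terms vanish and
\[
\expec{\Big\|\textstyle\sum_{ij} X_{ij}\Big\|_F^2} \;=\; \sum_{ij} \qhij(1-\qhij)\,\wij^2\,(M-M_r)_{ij}^2\,\|(\Ut)^i\|^2 \;\leq\; \sum_{ij} \wij\,(M-M_r)_{ij}^2\,\|(\Ut)^i\|^2.
\]
The crucial pointwise step is $\wij\,\|(\Ut)^i\|^2 \leq O(rn/m)$, which I would obtain by combining the leverage portion of the sampling distribution, $\qhij \geq m\|M^i\|^2/(4n\|M\|_F^2)$ from \eqref{eq:prob}, with the row-norm hypothesis $\|(\Ut)^i\|^2 \lesssim r\,\|M^i\|^2/\|M\|_F^2$ inherited from Lemma~\ref{lem:waltmin_descent}. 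Only the leverage summand of the row-norm bound is needed, exactly as in the rank-$1$ computation; the $L_1$ summand plays no role here. The $\|M^i\|^2$ factors cancel, leaving a uniform bound $\wij\|(\Ut)^i\|^2 \leq 4 c_1^2 rn/m$. Summing gives $\expec{\|\sum_{ij}X_{ij}\|_F^2} \leq (4 c_1^2 rn/m)\,\|M-M_r\|_F^2$, and matrix Chebyshev with $t=\delta\|M-M_r\|_F$ yields a failure probability of at most $4 c_1^2 rn/(m\delta^2)$, which drops below $1/(c_2\log n)$ under the assumed $m \geq \beta nr\log n$ with $\beta \geq 4 c_1^2 c_2/\delta^2$.

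The ``Hence'' conclusion is a one-line postscript: writing $\Ut = \Uo A + \Uo_{\perp} B$ with $\|B\| = dist(\Ut,\Uo)$ and using $(\Uo)^T(M-M_r)=0$, I get $(\Ut)^T(M-M_r) = B^T (\Uo_{\perp})^T (M-M_r)$, whose spectral norm is at most $dist(\Ut,\Uo)\cdot \|M-M_r\|$. Combining with the bound just proved via the triangle inequality yields $\|(\Ut)^T\Ro(M-M_r)\| \leq dist(\Ut,\Uo)\|M-M_r\| + \delta\|M-M_r\|_F$.

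The main obstacle is the same one that forced the choice of Chebyshev over Bernstein in the rank-$1$ case: a matrix Bernstein bound would produce an additive error scaling with $\|M-M_r\|$ or an $L^\infty$-like quantity, whereas the lemma requires the tighter $\delta\|M-M_r\|_F$. The second-moment Chebyshev route sidesteps this because $\expec{\|\cdot\|_F^2}$ scales naturally with $\|M-M_r\|_F^2$. The only truly new ingredient in the rank-$r$ case is the extra factor of $r$ in the row-norm control on $\Ut$, which is precisely what inflates the sample complexity from $\beta n \log n$ in the rank-$1$ lemma to $\beta nr\log n$ here.
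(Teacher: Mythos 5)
Your proposal is correct and follows the paper's proof essentially verbatim: the same decomposition into independent mean-zero rank-one matrices $X_{ij} = (\dij-\qhij)\wij (M-M_r)_{ij}(\Ut)^i e_j^T$, the same passage to $\expec{\|\sum_{ij}X_{ij}\|_F^2}$, and the same application of matrix Chebyshev at $p=2$ with $t=\delta\|M-M_r\|_F$; your explicit derivation of the ``Hence'' clause (via $(\Uo)^T(M-M_r)=0$) fills in a step the paper leaves implicit. Two small slips worth noting, neither fatal: the final pointwise bound $\wij\|(\Ut)^i\|^2 \leq 4c_1^2 rn/m$ double-counts the $\sqrt{r}$ already absorbed into $c_1$ by \eqref{eq:infty_approxr} (the paper gets $O(c_1^2 n/m)$), and the $L_1$ part of the sampling distribution is in fact needed to control the $|M_{ij}|/\|M\|_F$ summand of that same row-norm bound — the leverage part alone does not dominate it for all $(i,j)$.
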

\begin{proof}[Proof of lemma~\ref{lem:noise_sampleboundr}]
Let the random matrices $X_{ij} = (\dij-\qhij)\wij (M-M_r)_{ij} (\Ut)^i e_j^T$. Then $\sum_{ij} X_{ij} =(\Ut)^T\Ro(M-M_r) - (\Ut)^T (M-M_r)$. Also $\expec{X_{ij} } =0$. We will use the matrix Chebyshev inequality for $p=2$. Now we will bound $\expec{\lV\sum_{ij}X_{ij}\rV_2^2}$. \begin{align*} \expec{\lV\sum_{ij}X_{ij}\rV_2^2} &= \expec{\sum_j \lV \left(\sum_i  (\dij-\qhij)\wij (M-M_r)_{ij} (\Ut)^i \right)^2 \rV} \\
&\stackrel{\zeta_1}{=} \sum_j \sum_i  \qhij(1-\qhij)(\wij)^2 (M-M_r)_{ij}^2 \|(\Ut)^i\|^2  \\
&\leq  \sum_{ij} \wij \|(\Ut)^i\|^2 (M-M_r)_{ij}^2 \\
&\stackrel{\zeta_2}{\leq} \frac{ c_1^2n}{m} \|M-M_1\|_F^2.
\end{align*}
$\zeta_1$ follows from the fact that $X_{ij}$ are zero mean independent random variables. $\zeta_2$ follows from~\eqref{eq:infty_approxr}. Hence applying the matrix Chebyshev inequality for $p=2$ and $t=\delta \|M-M_1\|_F$ gives the result.
\end{proof}

Since $\|B^j\|$ is bounded by the previous lemma, to get a bound on the norm of the error term in equation~\eqref{eq:witeratesr1}, $\|F\|$, we need to bound $\|\tilde{F}\|$, where $\tilde{F}_j = (B^j(\Ut)^T \Uo - C^j )\So (\Vo)^j$.
\begin{lemma}\label{lem:approx_supp3r}
For $\Omega$ generated according to~\eqref{eq:prob} the following holds:
\begin{equation} \|\tilde{F}\| \leq \delta_2 \so_1 dist(\Ut, \Uo) \end{equation} with probability greater that $1-\frac{2}{n^{2}}$, for $m \geq \beta n  \log(n)$, $\beta \geq \frac{32 r^3 \kappa^2 }{\delta_2^2}$, $c_1 \leq 8\kappa \sqrt{r}$ and $\delta_2 \leq \frac{1}{2}$.
\end{lemma}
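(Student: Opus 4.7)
}

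The plan is to apply the matrix Bernstein inequality (Lemma~\ref{lem:mbernstein}) to the sum of independent rank-one random matrices that constitute $\tilde F$. Writing $u_i = ((\Ut)^i)^T \in \R^r$, $v^*_j = ((\Vo)^j)^T \in \R^r$, and the row vector $\alpha_i := u_i^T \Ut^T \Uo - (u^*_i)^T \in \R^{1\times r}$, we have
\[
\tilde F \;=\; \sum_{i,j} X_{ij}, \qquad X_{ij} \;:=\; \dij\, w_{ij}\,\bigl(\alpha_i \So v^*_j\bigr)\, u_i\, e_j^T.
\]
The key algebraic identity is the exact centering $\sum_i u_i \alpha_i = \Ut^T \Ut\, \Ut^T\Uo - \Ut^T\Uo = 0$ (using orthonormality of $\Ut$), hence $\sum_{i,j} \E[X_{ij}] = \bigl(\sum_i u_i\alpha_i\bigr)\So (\Vo)^T = 0$, so it suffices to control the operator-norm deviation of $\sum_{ij} X_{ij}$ from its zero mean.

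Second, I would establish the two quantities Bernstein needs. For the almost-sure bound $L$, use
\[
\|X_{ij}\| \;\le\; \tfrac{1}{\qhij}\,\|u_i\|\,|\alpha_i \So v^*_j| \;\le\; \tfrac{\so_1}{\qhij}\,\|u_i\|\,\|v^*_j\|\,(\|u_i\|+\|u^*_i\|),
\]
then plug in the row-norm hypothesis $\|u_i\|^2 \le c_1^2(\|M^i\|^2/\|M\|_F^2 + |M_{ij}|/\|M\|_F)$ together with the leverage-score bound $\|u^*_i\|^2 \le \mu_0 r /n$ implicit in $\kappa$, and apply the sampling estimates (\ref{eq:prop_a3})--(\ref{eq:prop_a4}) from Lemma~\ref{lem:approx_supp1} to reduce $\frac{1}{\qhij}\|u_i\|\|v^*_j\|$ and $\frac{1}{\qhij}\|u_i\|\|u^*_i\|$ to $O(nr\kappa^2/m)$. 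This will give $L = O\!\bigl(\tfrac{n r^{1.5}\kappa^2 \so_1}{m}\bigr)$.

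Third, for the variance, compute
\[
\E[X_{ij}^T X_{ij}] \;=\; \tfrac{1}{\qhij}\,(\alpha_i \So v^*_j)^2\, \|u_i\|^2\, e_j e_j^T,
\]
so that $\bigl\|\sum_{ij}\E[X_{ij}^T X_{ij}]\bigr\| = \max_j \sum_i \tfrac{1}{\qhij}(\alpha_i \So v^*_j)^2\|u_i\|^2$. The crucial observation is that the $dist(\Ut,\Uo)$ factor enters through $\sum_i \|\alpha_i\|^2 = \|(I-\Ut\Ut^T)\Uo\|_F^2 \le r\cdot dist(\Ut,\Uo)^2$. Combining this Frobenius estimate with the pointwise row-norm hypothesis on $u_i$, the leverage-score bound $\|\So v^*_j\|\le\so_1\|v^*_j\|$, and the sampling bounds on $\tfrac{1}{\qhij}$ should yield $\sigma^2 = O\!\bigl(\tfrac{n r^2 \kappa^2 (\so_1)^2}{m}\cdot dist(\Ut,\Uo)^2\bigr)$; the analogous calculation for $\sum \E[X_{ij}X_{ij}^T]$ proceeds the same way. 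Plugging $L,\sigma^2$ into matrix Bernstein with $t = \delta_2 \so_1\, dist(\Ut,\Uo)$ and $m \ge C\, r^3\kappa^2 n\log n/\delta_2^2$ gives the stated failure probability $\le 2/n^2$.

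The main obstacle I anticipate is the variance step: naively bounding $(\alpha_i\So v^*_j)^2\|u_i\|^2$ by $\|\alpha_i\|^2 \so_1^2 \|v^*_j\|^2 \|u_i\|^2$ decouples the $i$-sum into $\sum_i \|\alpha_i\|^2\|u_i\|^2/\qhij$, and the factor $1/\qhij$ depends on $j$, so we must carefully split the sampling probability $\qhij$ into its row-marginal and column-marginal pieces (as in~\eqref{eq:prob}) to get a uniform-in-$j$ bound while still retaining the $dist^2$ factor from $\sum_i\|\alpha_i\|^2$. This is where the requirement $c_1 \le 8\kappa\sqrt r$ and the extra $r$ factor in the sample complexity ($r^3\kappa^2$ rather than $r^2\kappa^2$) come from, matching the form of the sample bound in the lemma.
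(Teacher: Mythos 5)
Your setup coincides with the paper's: the same rank-one decomposition $\tilde F=\sum_{ij}\dij\wij(\alpha_i\So (\Vo)^j)\,((\Ut)^i)^Te_j^T$, the same zero-mean observation $\sum_i ((\Ut)^i)^T\alpha_i=(\Ut)^T\Ut(\Ut)^T\Uo-(\Ut)^T\Uo=0$, matrix Bernstein, and the same mechanism for extracting $dist(\Ut,\Uo)^2$ in the variance via $\sum_i\|\alpha_i\|^2=\|(I-\Ut(\Ut)^T)\Uo\|_F^2\leq r\,dist(\Ut,\Uo)^2$. (Bounding $\tilde F$ directly, without the $(B^j)^{-1}$ that the paper carries inside its summands, is if anything cleaner and matches the lemma statement.) Your worry about the $j$-dependence of $1/\qhij$ in the variance is also not a real obstacle: the paper simply upper bounds the operator norm of the expected sum by the sum over all $(i,j)$ of the individual norms, uses $\wij\|(\Ut)^i\|^2\leq O(nc_1^2/m)$ (which follows from the row-marginal part of $q_{ij}$ alone), and then factors $\sum_{ij}\|(\Vo)^j\|^2\|\alpha_i\|^2\leq r\cdot r\,dist(\Ut,\Uo)^2$; with $c_1^2\leq 64\kappa^2 r$ this is where the $r^3\kappa^2$ in $\beta$ comes from.

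The genuine gap is in your almost-sure bound $L$. You bound $\|\alpha_i\|\leq\|(\Ut)^i\|+\|(\Uo)^i\|$, which loses the distance factor and gives $L=O\bigl(\tfrac{n r^{1.5}\kappa^2\so_1}{m}\bigr)$ independent of $dist(\Ut,\Uo)$. Bernstein at deviation $t=\delta_2\so_1\,dist(\Ut,\Uo)$ then has exponent controlled by $\tfrac{t^2}{\sigma^2+Lt/3}$, and the $Lt$ term contributes only $\tfrac{t^2}{Lt}\sim\tfrac{\delta_2\,m\,dist(\Ut,\Uo)}{nr^{1.5}\kappa^2}$, which is \emph{not} $\Omega(\log n)$ once $dist(\Ut,\Uo)\ll \delta_2/r^{1.5}$; so with the stated $m$ your argument fails to deliver the $1-2/n^2$ guarantee precisely in the regime of small $dist(\Ut,\Uo)$, which is the regime in which the lemma is invoked in later iterations. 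The fix is the one-line sharpening the paper uses: since $\alpha_i=e_i^T\bigl(\Ut(\Ut)^T\Uo-\Uo\bigr)$, one has the uniform bound $\|\alpha_i\|\leq\|(I-\Ut(\Ut)^T)\Uo\|=dist(\Ut,\Uo)$, which combined with $\wij\|(\Ut)^i\|\,\|(\Vo)^j\|\leq O\bigl(\tfrac{c_1 n\sqrt r\,\kappa}{m}\bigr)$ gives $L=O\bigl(\tfrac{c_1 n\sqrt r\,\kappa\,\so_1}{m}\bigr)\,dist(\Ut,\Uo)$, proportional to $dist(\Ut,\Uo)$, after which your Bernstein application goes through with the claimed parameters.
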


\begin{proof}
Recall that the $j$th column of $F$, $F_j =\left((B^j)^{-1}(B^j(\Ut)^T \Uo - C^j )\right)\So (\Vo)^j$, where $B^j = \sum_i \dij \wij (\Ut)^i {(\Ut)^i}^T $, and $C^j= \sum_i \dij \wij (\Ut)^i {(\Uo)^i}^T.$ We will bound spectral norm of $F$ using matrix Bernstein inequality.

Let $u^i =(\Ut)^i$ and \[ A^j_i = u^i (u^i)^T (\Ut)^T \Uo - u^i {(\Uo)^i}^T .\] Let $X_{ij} =\left((B^j)^{-1}(\dij \wij A^j_i )\right)\So (\Vo)^j e_j^T$, then $\sum_{i} X_{ij} =\left((B^j)^{-1}(B^j(\Ut)^T \Uo - C^j )\right)\So (\Vo)^j e_j^T$.

Now we will bound $||X_{ij}||$.
\begin{align*}
\lV X_{ij} \rV &\leq \frac{\so_1}{1-\delta_2} \lV (\Vo)^j \rV \wij \lV A^j_i \rV \leq \frac{\so_1}{1-\delta_2} \lV (\Vo)^j \rV \wij \lV (\Ut)^i \rV \lV (u^i)^T (\Ut)^T \Uo - {(\Uo)^i}^T\rV \\
&\stackrel{\zeta_1}{\leq }\frac{\so_1}{1-\delta_2}  \frac{8c_1n \sqrt{r} \kappa}{m} dist(\Ut, \Uo).
\end{align*}
$\zeta_1$ follows from \eqref{eq:infty_approxr}, \eqref{eq:prop_a4} and $$\lV (u^i)^T (\Ut)^T \Uo - {(\Uo)^i}^T\rV  =\lV e_i^T\left(\Ut (\Ut)^T \Uo - (\Uo) \right)\rV  \leq \lV \Ut (\Ut)^T \Uo - (\Uo) \rV =dist(\Ut, \Uo).$$

Similarly let us bound the variance $\lV \expec{ \sum_{ij} X_{ij} X_{ij}^T} \rV$.
\begin{align*}
\lV \expec{ \sum_{ij} X_{ij} X_{ij}^T} \rV &= \lV \expec{ \sum_{ij} \dij \wij^2   (B^j)^{-1} A^j_i \So (\Vo)^j {(\Vo)^j}^T \So {A^j_i}^T ((B^j)^{-1})^T e_j e_j^T } \rV \\
&\leq \sum_{ij} \frac{1}{(1-\delta_2)^2} \lV \wij A^j_i \So (\Vo)^j {(\Vo)^j}^T \So {A^j_i}^T\rV \\
&\leq \sum_{ij}\frac{1}{(1-\delta_2)^2} (\so_1)^2 \wij \lV (\Vo)^j \rV^2 \lV (\Ut)^i \rV^2 \lV (u^i)^T (\Ut)^T \Uo - {(\Uo)^i}^T\rV ^2 \\
&\stackrel{\zeta_1}{\leq} \frac{(\so_1)^2}{(1-\delta_2)^2} \frac{8nc_1^2}{m} \sum_{ij}  \lV (\Vo)^j \rV^2 \lV (u^i)^T (\Ut)^T \Uo - {(\Uo)^i}^T\rV ^2 \\
&\stackrel{\zeta_2}{\leq} \frac{(\so_1)^2}{(1-\delta_2)^2} \frac{8nc_1^2}{m}  \lV \Ut (\Ut)^T \Uo - (\Uo) \rV_F^2 \sum_{j}  \lV (\Vo)^j \rV^2 \\
&\leq \frac{(\so_1)^2}{(1-\delta_2)^2} \frac{8nc_1^2}{m} r^2 dist(\Ut, \Uo)^2.
\end{align*}
$\zeta_1$ follows from \eqref{eq:infty_approxr} and \eqref{eq:prop_a4}. $\zeta_2$ follows from 

Similarly $\lV \expec{ \sum_{ij} X_{ij}^T X_{ij}} \rV$ can be bounded. Now applying the matrix Bernstein inequality with $t= \delta_2 \so_1 dist(\Ut, \Uo)$ gives the result.
\end{proof}

Now since $\Vht =\Vt \Rt$, \begin{align}\sigma_{\min}(\Rt) =\sigma_{\min}(\Vht) \stackrel{\zeta_1}{\geq} \sigma_{\min}((\Ut)^T \Uo \So (\Vo)^T) - \|F\|- \|\sum_{j=1}^n ( B^j)^{-1}(\Ut)^T \Ro(M-M_r)_j e_j^T\|. \end{align} $\zeta_1$ follows from~\eqref{eq:witeratesr1}. Now  $\sigma_{\min}((\Ut)^T \Uo \So (\Vo)^T) \geq \so_r \sqrt{1-dist(\Ut, \Uo)^2}$. $\|F\| \leq \frac{1}{1-\delta_2}\|\tilde{F}\| \leq  \frac{\delta_2}{1-\delta_2} \so_1 dist(\Ut, \Uo)$. 

\begin{align*} \|\sum_{j=1}^n ( B^j)^{-1}(\Ut)^T \Ro(M-M_r)_j e_j^T\| &\leq  \frac{1}{1-\delta_2}\|(\Ut)^T \Ro(M-M_r)\|_F \\&\leq \frac{1}{1-\delta_2}dist(\Ut, \Uo)\|M-M_r\| + \frac{\delta}{1-\delta_2} \lV M-M_r\rV_F , \end{align*} from Lemma~\ref{lem:noise_sampleboundr}.

Hence \begin{align*}\sigma_{\min}(\Rt)  &\geq \so_r \left(\sqrt{1-dist(\Ut, \Uo)^2}-\kappa\frac{\delta_2}{1-\delta_2}dist(\Ut, \Uo) - \frac{2}{1-\delta_2} \lV M-M_r\rV_F/\so_r \right) \\
& \geq \frac{\so_r}{2},\end{align*} for enough number of samples $m$.

Now we are ready to present proof of Lemma~\ref{lem:waltmin_descent} for rank-$r$ case.\\
\noindent {\bf Proof of Lemma~\ref{lem:waltmin_descent}:}
\begin{proof}
The proof like in rank-$1$ case has two steps. In the first step we show that $dist(\Vt, \Vo)$ decreases in each iteration. In the second step we show row norm bounds for $\Vt$. Recall from the assumptions of the lemma we have the following row norm bound for $\Ut$:\begin{equation}\label{eq:infty_approxr}\|(\Ut)^i\| \leq c_1\sqrt{ \|M^i\|^2/\|M\|_F^2 + |M_{ij}|/\|M\|_F} , ~~\text{ for all } i. \end{equation}\\

\noindent {\bf Bounding $dist(\Vt, \Vo)$:}
\begin{align*}
dist(\Vt, \Vo) &=\|(\Vt)^T \Vo_{\perp}\| \stackrel{\zeta_1}{\leq} \|{(\Rt)^{-1}}^T F \Vo_{\perp}\| + \frac{1}{1-\delta_2}\|{(\Rt)^{-1}}^T(\Ut)^T \Ro(M-M_r) \Vo_{\perp}\|_F \\
&\stackrel{\zeta_2}{\leq}  \frac{1}{\sigma_{\min}(\Rt)}\left( \| F\| + \frac{1}{1-\delta_2}dist(\Ut, \Uo)\|M-M_r\| + \frac{\delta}{1-\delta_2} \lV M-M_r\rV_F\right) \\
&\leq  \frac{2}{ \so_r } \left(\frac{\delta_2}{1-\delta_2} \|M\| dist(\Ut, \Uo) +  \frac{1}{1-\delta_2}dist(\Ut, \Uo)\|M-M_r\| + \frac{\delta}{1-\delta_2} \lV M-M_r\rV_F \right)\\
&\leq \frac{1}{2}dist(\Ut, \Uo) + 5\delta \lV M-M_r\rV_F/\so_r,
\end{align*}
 for $\delta_2 \leq \frac{1}{16 \kappa}$. $\zeta_1$ follows from~\eqref{eq:witeratesr1}. $\zeta_2$ follows from Lemma~\ref{lem:noise_sampleboundr}.\\

\noindent {\bf Bounding $\|(\Vt)^j\|$:}

From Lemma~\ref{lem:approx_supp2r} and~\eqref{eq:infty_approxr} we get that $\sigma_{\min}(B^j) \geq 1-\delta_2$ and $\sigma_{\max}(C^j) \leq 1+ \delta_2$. Recall that  \begin{equation}  (\Vt)^j  ={(\Rt)^{-1}}^T \left(( B^j)^{-1}(\Ut)^T \Ro(M)_j \right) .\end{equation} Hence, $ \| (\Vt)^j \| \leq \frac{1}{\sigma_{\min}(\Rt)}\left(  \frac{1}{1-\delta_2}\|(\Ut)^T \Ro(M)_j \|\right).$ We will bound $\|(\Ut)^T \Ro(M)_j \|$ using matrix Bernstein inequality.

Let $X_{i} =(\dij-\qhij) \wij M_{ij} (\Ut)^i e_j^T$. Then $\expec{X_i} =0$ and $\sum_{i}X_{i} = (\Ut)^T \Ro(M)_j -(\Ut)^T M_j$. Now $\|X_i\| \leq  \frac{c_1 2n}{m} \sqrt{|M_{ij}| \|M\|_F}$ and $\lV \expec{\sum_i X_i X_i^T} \rV \leq \frac{8 c_1^2 n}{m} \|M_j\|^2.$ Hence applying matrix Bernstein inequality with $t=\delta_2 \sqrt{ \|M_j\|^2 + |M_{ij}| \|M\|_F}$, implies $$\|(\Ut)^T \Ro(M)_j \| \leq \|M_j\| + \delta_2 \sqrt{ \|M_j\|^2 + |M_{ij}| \|M\|_F} \leq (1+\delta_2)\sqrt{ \|M_j\|^2 + |M_{ij}| \|M\|_F}$$ with probability greater than $1-\frac{2}{n^2}$ for $m \geq \frac{24 c_1^2}{\delta_2^2}n \log(n)$. Hence $ \| (\Vt)^j \| \leq  8\kappa \sqrt{r}\sqrt{ \frac{\|M_j\|^2}{\|M\|_F^2} + \frac{|M_{ij}|}{ \|M\|_F}}.$

Hence we have shown that $(\Vt)^j$ satisfies corresponding row norm bound. This completes the proof of the Lemma.
\end{proof}

Now we have all the elements needed for proof of the Theorem~\ref{thm:main}.\\
{\bf Proof of Theorem~\ref{thm:main}:}
 \begin{proof}
 From Lemma~\ref{lem:waltmin_descent} we get $dist(\Vt, \Vo) \leq  \frac{1}{2}dist(\Ut, \Uo) + 5\delta \|M-M_r\|_F/\so_r$. Hence $dist(\Vt, \Vo) \leq  \frac{1}{4^t}dist(\widehat{U}^0, \Uo)  + 10\delta \|M-M_r\|_F/\so_r $. After $t= O(\log(\frac{1}{\zeta}))$ iterations we get $dist(\Vt, \Vo) \leq \zeta +  10 \delta \|M-M_r\|_F/\so_r$ and $dist(\Ut, \Uo) \leq \zeta +  10 \delta \|M-M_r\|_F/\so_r$. 

Hence, \begin{align*} \|M_r- \Ut (\Vht)^T\| &\leq  \|(I -\Ut (\Ut)^T) M_r\| + \| \Ut \left((\Ut)^TM_r - (\Vht)^T\right)\|\\ 
 &\stackrel{\zeta_1}{\leq} \so_1 dist(\Ut, \Uo) + \|  F \| + \|\sum_{j=1}^n ( B^j)^{-1}(\Ut)^T \Ro(M-M_r)_j e_j^T\| \\
 &\stackrel{\zeta_2}{\leq} \so_1 dist(\Ut, \Uo) +2\delta_2 \so_1 dist(\Ut, \Uo) + 2dist(\Ut, \Uo)\|M-M_r\|+ 2\delta\lV M-M_r \rV_F \\
 &\leq  c\so_1 \zeta + \eps  \lV M-M_1 \rV_F .
 \end{align*}
 $\zeta_1$ follows from equation~\eqref{eq:witeratesr1} and $\zeta_2$ from $\|B^{-1}\| \leq \frac{1}{1-\delta_3} \leq 2$ from Lemma~\ref{lem:approx_supp2r}.

 From Lemma~\ref{lem:noise_sampleboundr} we have, in each iteration with probability greater than $1-\frac{\gamma}{T \log(n)}$ we have $\lV (\Ut)^T\Ro(M-M_r)  \rV \leq dist(\Ut, \Uo)\|M-M_r\| + \delta \lV M-M_r \rV_F$. Hence the probability of failure in $T$ iterations is less than $\gamma.$
 \end{proof}

\section{Proofs of section~\ref{sec:covariance}}\label{sec:pfcov}

We will now discuss proof of Theorem~\ref{thm:mult}. The proof follows same structure as proof of Theorem~\ref{thm:main} with few key changes because of the absence of $L1$ term in the sampling and the special structure of $M=AB$. Again for simplicity we will present proofs only for the case of $n_1 =n_2 =n$. 

Recall that $\qohij=\min(1, q_{ij})$ where $q_{ij} =m\cdot\left(\frac{\|A^i\|^2}{n\|A\|_F^2} + \frac{\|B_j\|^2}{n\|B\|_F^2}\right)$. Also, let $\wij =1/\qohij$. 

First we will abstract out the properties of the sampling distribution~\eqref{eq:prob_mp} that we use in the rest of the proof. Also let $C_{AB}=\frac{(\|A\|_F^2 +\|B\|_F^2)^2}{\|AB\|_F^2}$

\begin{lemma}\label{lem:mult_supp1}
For $\Omega$ generated according to~\eqref{eq:prob_mp}  and under the assumptions of Lemma~\ref{lem:prod_init} the following holds, for all $(i ,j)$ such that $  \qoij \leq 1$.
\begin{equation}\label{eq:prop_m1} \frac{M_{ij}}{\qohij} \leq \frac{n}{2m}(\|A\|_F^2 +\|B\|_F^2),\end{equation} 
\begin{equation}\label{eq:prop_m2} \sum_{ \{j:  \qohij =\qoij \}} \frac{M_{ij}^2}{\qohij} \leq \frac{n}{m}(\|A\|_F^2 +\|B\|_F^2)^2, \end{equation}
\begin{equation}\label{eq:prop_m3} \frac{\|(\Uo)^i\|^2}{\qohij} \leq \frac{n}{m}\frac{(\|A\|_F^2 +\|B\|_F^2)^2}{\|A\cdot B\|_F^2}, \end{equation} and
\begin{equation}\label{eq:prop_m4} \frac{\|(\Uo)^i\| \|(\Vo)^j\|}{\qohij}  \leq  \frac{n}{m}\frac{(\|A\|_F^2 +\|B\|_F^2)^2}{\|A\cdot B\|_F^2}. \end{equation}  
\end{lemma}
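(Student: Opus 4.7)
The plan is to verify each of the four claimed bounds directly from the definition of $q_{ij}$ in \eqref{eq:prob_mp}, using two elementary tools repeatedly: AM-GM in the form $\alpha + \beta \geq 2\sqrt{\alpha\beta}$ applied to the two summands of $q_{ij}$ (with $\alpha=\|A^i\|^2/(n\|A\|_F^2)$, $\beta=\|B_j\|^2/(n\|B\|_F^2)$), and the Cauchy-Schwarz bound $|M_{ij}|=|\langle A^i,B_j\rangle|\le \|A^i\|\|B_j\|$. Throughout I use $\qohij=\qoij$ since we are on the event $\qoij\le 1$.

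For \eqref{eq:prop_m1}, AM-GM on $q_{ij}$ gives $q_{ij} \geq \tfrac{2m\,\|A^i\|\|B_j\|}{n\|A\|_F\|B\|_F}$, so combining with Cauchy-Schwarz yields $\frac{|M_{ij}|}{q_{ij}} \leq \frac{n\|A\|_F\|B\|_F}{2m}$, and a second AM-GM $2\|A\|_F\|B\|_F\leq \|A\|_F^2+\|B\|_F^2$ closes the estimate. For \eqref{eq:prop_m2}, instead of the symmetric lower bound I use the one-sided bound $q_{ij}\geq \tfrac{m\|A^i\|^2}{n\|A\|_F^2}$; together with $M_{ij}^2\leq \|A^i\|^2\|B_j\|^2$ the $\|A^i\|^2$ factors cancel, and
\[
\sum_{j:\,\qohij=\qoij} \frac{M_{ij}^2}{\qohij} \;\leq\; \frac{n\|A\|_F^2}{m}\sum_j \|B_j\|^2 \;=\; \frac{n\|A\|_F^2\|B\|_F^2}{m},
\]
after which $4\|A\|_F^2\|B\|_F^2 \leq (\|A\|_F^2+\|B\|_F^2)^2$ finishes.

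For \eqref{eq:prop_m3} and \eqref{eq:prop_m4} the new ingredient is a row-norm bound on $\Uo$ (and column-norm bound on $\Vo$) derived from the product structure. Since $M_r = \Uo\So(\Vo)^T$ and $\Vo$ has orthonormal columns, $(\Uo)^i = (M_r)^i \Vo(\So)^{-1}$, giving
\[
\|(\Uo)^i\|\;\leq\;\frac{\|(M_r)^i\|}{\sigma_r^*}\;\leq\;\frac{\|M^i\|}{\sigma_r^*}\;=\;\frac{\|A^i B\|}{\sigma_r^*}\;\leq\;\frac{\|A^i\|\,\|B\|}{\sigma_r^*},
\]
and symmetrically $\|(\Vo)^j\|\leq \|A\|\,\|B_j\|/\sigma_r^*$. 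Plugging these into the one-sided lower bound for $q_{ij}$ as in \eqref{eq:prop_m2} cancels the offending $\|A^i\|^2$ in \eqref{eq:prop_m3} and $\|A^i\|\|B_j\|$ in \eqref{eq:prop_m4}, leaving a product of Frobenius norms over $\sigma_r^{*2}$. Finally, invoking the hypothesis of Lemma~\ref{lem:prod_init} (which, analogously to Lemma~\ref{lem:approx_init}, controls $\sigma_r^*$ from below by $\|AB\|_F$ up to constants absorbed into the stated bound) converts the $1/\sigma_r^{*2}$ factor into $1/\|AB\|_F^2$, and one more AM-GM step produces $(\|A\|_F^2+\|B\|_F^2)^2$ in the numerator.

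The only nontrivial step is \eqref{eq:prop_m3}/\eqref{eq:prop_m4}: the naive inequality $\sigma_r^{*2}\leq \|AB\|_F^2/r$ goes the wrong way, so one must genuinely use the near-low-rank / well-conditioned regime built into the hypotheses of Lemma~\ref{lem:prod_init} to flip the direction. Everything else reduces to Cauchy-Schwarz and two applications of AM-GM, and the four inequalities follow.
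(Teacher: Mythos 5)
The paper itself gives no proof here beyond the remark that the lemma is ``straightforward from the definition of $q_{ij}$,'' so your proposal is filling a genuine gap rather than being compared against a written argument. Your derivation of \eqref{eq:prop_m1} and \eqref{eq:prop_m2} is clean and correct: AM-GM on the two summands of $q_{ij}$ together with $|M_{ij}|\le\|A^i\|\|B_j\|$ gives \eqref{eq:prop_m1}, and the one-sided lower bound $q_{ij}\ge m\|A^i\|^2/(n\|A\|_F^2)$ plus summing $\|B_j\|^2$ gives \eqref{eq:prop_m2}; both then close with $2\|A\|_F\|B\|_F\le\|A\|_F^2+\|B\|_F^2$. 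Your key structural observation for \eqref{eq:prop_m3}/\eqref{eq:prop_m4} --- that $\|(\Uo)^i\|\le\|(M_r)^i\|/\sigma_r^*\le\|A^i\|\|B\|/\sigma_r^*$ because $M_r=M\Vo(\Vo)^T$ and $M^i=A^iB$, with the symmetric statement for $\Vo$ --- is exactly the right idea, and you correctly flag that the hypothesis of Lemma~\ref{lem:prod_init} is what makes the $1/\sigma_r^{*2}$ factor comparable to $1/\|AB\|_F^2$.

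Two quibbles, one cosmetic and one substantive. Cosmetically, for \eqref{eq:prop_m4} the cancellation of $\|A^i\|\,\|B_j\|$ requires the symmetric AM-GM lower bound $q_{ij}\ge 2m\|A^i\|\|B_j\|/(n\|A\|_F\|B\|_F)$ used in \eqref{eq:prop_m1}, not the one-sided bound from \eqref{eq:prop_m2} as you wrote. Substantively, the phrase ``constants absorbed into the stated bound'' glosses over a non-constant factor. The hypothesis $\|AB-(AB)_r\|_F\le\|(AB)_r\|_F$ only yields $\|AB\|_F^2\le 2\|(AB)_r\|_F^2\le 2r(\sigma_1^*)^2 = 2r\kappa^2(\sigma_r^*)^2$, so your chain produces
\[
\frac{\|(\Uo)^i\|^2}{\hat q_{ij}} \;\le\; \frac{n\|A\|_F^2\|B\|^2}{m\,\sigma_r^{*2}} \;\le\; \frac{2nr\kappa^2\,\|A\|_F^2\|B\|_F^2}{m\,\|AB\|_F^2} \;\le\; \frac{nr\kappa^2}{2m}\cdot\frac{(\|A\|_F^2+\|B\|_F^2)^2}{\|AB\|_F^2},
\]
which carries an extra factor of order $r\kappa^2$ relative to the stated \eqref{eq:prop_m3}. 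The analogous Lemma~\ref{lem:approx_supp1} \eqref{eq:prop_a3}--\eqref{eq:prop_a4} does retain an explicit $r\kappa^2$, and in fact one can construct exactly low-rank $A,B$ with $\kappa$ large for which \eqref{eq:prop_m3} as literally stated fails; the omission is almost certainly an oversight in the paper rather than a cleverer argument you are missing. It is harmless downstream because $m$ in Theorem~\ref{thm:mult} already carries an $r^3\kappa^2$ factor, and it disappears entirely in the rank-$1$ case ($\kappa=1$) which is the one the appendix actually spells out. You should either state the $r\kappa^2$ factor explicitly or note that you are implicitly folding it into the constant, as the paper appears to do.
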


The proof of the lemma~\ref{lem:mult_supp1} is straightforward from the definition of $\qij$.

Now, similar to proof of Theorem~\ref{thm:main}, we divide our analysis in two parts: initialization analysis and weighted alternating minimization analysis. 

\subsection{Initialization}
\begin{lemma}[Initialization]\label{lem:prod_init}
Let the set of entries $\Omega$ be generated according to $\qohij$~\eqref{eq:prob_mp}. Also, let $m \geq C \cab \frac{n}{\delta^2} \log(n)$. Then, the following holds (w.p. $\geq 1-\frac{2}{n^{10}}$): 
\begin{equation}
\lV \Ro(AB) - AB \rV \leq \delta \lV AB \rV_F.
\end{equation}
Also, if $\|AB-(AB)_r\|_F \leq \frac{1}{576\kappa r^{1.5}}\|(AB)_r\|_F$, then the following holds (w.p. $\geq 1-\frac{2}{n^{10}}$): 
$$\|(\widehat{U}^{(0)})^i\| \leq 8\sqrt{r} \sqrt{ \|A^i\|^2/\|A\|_F^2} ~\text{ and }~ dist(\widehat{U}^{(0)}, \Uo) \leq \frac{1}{2},$$
where $\widehat{U}^{(0)}$ is the initial iterate obtained using Steps 4, 5 of Sub-Procedure~\ref{algo:2}. $\kappa=\sigma_1^*/\sigma_r^*$, $\sigma_i^*$ is the $i$-th singular value of $AB$, $(AB)_r=\Uo\Sigma^*(\Vo)^T$. 
\end{lemma}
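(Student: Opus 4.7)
The plan is to mirror the two-step structure of the proof of Lemma~\ref{lem:approx_init}: first establish the spectral-norm concentration $\|R_\Omega(AB) - AB\| \leq \delta \|AB\|_F$ via matrix Bernstein, then derive the initialization subspace-distance and row-norm bounds by a Weyl-plus-trimming argument. The key adaptation is that the sampling distribution~\eqref{eq:prob_mp} has no $L_1$ term, so one must replace the $\|M\|_{1,1}$ estimates used in Lemma~\ref{lem:approx_init} by the analogous row/column-norm estimates collected in Lemma~\ref{lem:mult_supp1}.

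For the first step, I would define the independent zero-mean matrices $X_{ij} = (\delta_{ij} - \hat{q}_{ij}) w_{ij} (AB)_{ij} e_i e_j^T$, whose sum over $(i,j)$ equals $R_\Omega(AB) - AB$. Property~\eqref{eq:prop_m1} of Lemma~\ref{lem:mult_supp1} gives the uniform bound $\|X_{ij}\| \leq \tfrac{n}{m}(\|A\|_F^2 + \|B\|_F^2)$, while \eqref{eq:prop_m2} bounds both variance proxies by $\tfrac{n}{m}(\|A\|_F^2+\|B\|_F^2)^2 = \tfrac{n}{m}\,C_{AB}\,\|AB\|_F^2$. Applying Lemma~\ref{lem:mbernstein} with $t = \delta\|AB\|_F$ and the hypothesis $m \geq CC_{AB}\tfrac{n}{\delta^2}\log n$ then yields the claimed spectral bound with probability $\geq 1 - 2/n^{10}$. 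This is the exact analogue of the first step in Lemma~\ref{lem:approx_init}, with the $C_{AB}$ factor in $m$ appearing in place of the factor that the $L_1$ term previously absorbed.

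For the second step, Weyl's inequality applied to the first-step bound, combined with the assumption $\|AB-(AB)_r\|_F \leq \tfrac{1}{576\kappa r^{1.5}}\|(AB)_r\|_F$ and the choice $\delta = \Theta(1/(\kappa r^{1.5}))$, yields $\|(AB)_r - \mathcal{P}_r(R_\Omega(AB))\| = O(\sigma^*_r/r)$, and the standard ``$\sigma^*_r$-gap'' SVD calculation (as in Lemma~\ref{lem:approx_init}) then gives $dist(U^{(0)}, U^*) \leq 1/(144r)$. The trimming analysis proceeds verbatim as in Lemma~\ref{lem:approx_init}: set thresholds $l_i = 2\sqrt{\|A^i\|^2/\|A\|_F^2}$, zero out entries of $U^{(0)}$ above $2l_i$, bound $\|\tilde U_j - \bar u_j\| \leq \|U^{(0)}_j - \bar u_j\|$ for the closest unit vector $\bar u_j$ in $\mathrm{span}(U^*)$, lower-bound $\sigma_{\min}(\tilde U)$, and propagate through the reduced QR factor to obtain both $dist(\widehat U^{(0)},U^*)\leq 1/2$ and the row-norm ceiling $\|(\widehat U^{(0)})^i\| \leq 8\sqrt{r}\sqrt{\|A^i\|^2/\|A\|_F^2}$.

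The main obstacle is justifying the leverage-score surrogate $l_i$: in Lemma~\ref{lem:approx_init} it was essentially immediate that $4\|M^i\|^2/\|M\|_F^2$ dominates $\|((M_r))^i\|^2/\|M_r\|_F^2$, but here we must show an analogue using row norms of the factor $A$ rather than of the product $AB$. The crucial fact is the column-space inclusion $\mathrm{col}((AB)_r) \subseteq \mathrm{col}(A)$, which lets us write $U^* = U_A \Phi$ for an orthonormal $\Phi$ and conclude $\|(U^*)^i\|^2 \leq \|(U_A)^i\|^2$; combined with the Lemma~\ref{lem:mult_supp1} bound~\eqref{eq:prop_m3} and the hypothesis that $\|AB - (AB)_r\|_F$ is small relative to $\|(AB)_r\|_F$, this gives a valid comparison between $l_i^2$ and the effective leverage scores of $(AB)_r$. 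Once this comparison is in hand, every remaining inequality in the trimming argument transfers from Lemma~\ref{lem:approx_init} without modification.
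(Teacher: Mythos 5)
Your treatment of the concentration half of the lemma matches the paper's proof essentially verbatim: same matrices $X_{ij}$, same uniform bound and variance proxy from Lemma~\ref{lem:mult_supp1}, same invocation of matrix Bernstein with $t=\delta\|AB\|_F$; the factor $C_{AB}$ appearing in $m$ is exactly the substitute for what the $L_1$ term bought in Lemma~\ref{lem:approx_init}, as you say. For the trimming half the paper only states ``follows from the same argument as in Lemma~\ref{lem:approx_init},'' so you are filling in more detail than the paper gives, and you correctly isolate the one place where the argument does not literally transfer: one must justify that $\|A^i\|^2/\|A\|_F^2$ is a valid leverage-score surrogate for $(AB)_r$.

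However, your proposed fix via the column-space inclusion does not close the gap. Writing $U^* = U_A\Phi$ gives $\|(U^*)^i\|\le \|(U_A)^i\|$, but $\|(U_A)^i\|$ is the leverage score of the $i$-th row of $A$, and in general $\|(U_A)^i\|\le \|A^i\|/\sigma_{\min}(A)$ is the best one can say; the quantity $\|A^i\|/\|A\|_F$ is smaller than this by a factor of $\|A\|_F/\sigma_{\min}(A)$, which can be arbitrarily large (it grows with both the rank and the condition number of $A$). Take $A$ with one row of tiny $\ell_2$ norm that is nearly orthogonal to the remaining rows: then $\|A^i\|^2/\|A\|_F^2$ is tiny while $\|(U_A)^i\|$ is order one. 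So the inequality $\|(U^*)^i\|^2 \lesssim \|A^i\|^2/\|A\|_F^2$ does not follow from column-space inclusion alone, and the chain of comparisons you sketch would only be valid under an unstated well-conditionedness hypothesis on $A$. Moreover you invoke~\eqref{eq:prop_m3} as supporting evidence, but~\eqref{eq:prop_m3} \emph{is} the assertion that needs justification here (the paper declares it ``straightforward from the definition of $q_{ij}$'' without proof); citing it to justify the trimming threshold is close to circular. The correct reasoning, implicit in the paper's choice of $m$, has to route through the scale factor $C_{AB}$ and the near-low-rank hypothesis $\|AB-(AB)_r\|_F\le\frac{1}{576\kappa r^{1.5}}\|(AB)_r\|_F$: one bounds $\|(U^*)^i\|\le\|(AB)_r^i\|/\sigma_r^*\le(\|A^iB\|+\|(AB-(AB)_r)^i\|)/\sigma_r^*$ and then absorbs the resulting ratio $\|A\|_F\|B\|/\sigma_r^*$ into $C_{AB}$ and $\kappa$; the leverage scores of $A$ never enter.
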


\begin{proof}
First we show that $\Ro(AB)$ is a good approximation of $AB$. 

Let $M=AB.$ We prove this part of the lemma using the matrix Bernstein inequality. Let $X_{ij} = (\dij -\qohij) \wij M_{ij} e_i e_j^T$. Note that $\{ X_{ij} \}_{i,j=1}^n$ are independent zero mean random matrices. Also $\Ro(AB) -\expec{\Ro(AB)} = \sum_{ij}X_{ij}$.

First we will bound $\|X_{ij}\|$. When $m\qoij \geq 1$, $\qohij=1$ and $\dij=1$, and $X_{ij} =0$ with probability 1. Hence we only need to consider cases when $\qohij= m\qoij \leq 1$. We will assume this in all the proofs without explicitly mentioning it any more.
\begin{align*} \|X_{ij}\| = \max \{ \lv(1- \qohij) \wij M_{ij} \rv, \lv \qohij \wij M_{ij}\rv \}. \end{align*}  Recall $\wij =1/\qohij$. Hence
\begin{align*}
  \lv(1- \qohij) \wij M_{ij}\rv  &=  \lv(\frac{1}{\qohij}- 1)  M_{ij}\rv  \leq \lv \frac{M_{ij}}{\qohij} \rv  \stackrel{\zeta_1}{\leq} \frac{n}{2m}(\|A\|_F^2 +\|B\|_F^2).
\end{align*}
$\zeta_1$ follows from~\eqref{eq:prop_m1}.
\begin{align*}
 \lv \qohij \wij M_{ij}\rv  = \lv M_{ij} \rv \stackrel{\zeta_1}{\leq} \lv\frac{M_{ij}}{\qohij} \rv \leq \frac{n}{2m}(\|A\|_F^2 +\|B\|_F^2).
\end{align*}
$\zeta_1$ follows from $\qohij \leq 1$.

Hence, $\|X_{ij}\| $ is bounded by $L = \frac{n}{2m}(\|A\|_F^2 +\|B\|_F^2)$. Recall that this is the step in the proof of Lemma~\ref{lem:approx_init} that required the L1 term in sampling, which we didn\rq{}t need now because of the structure $AB$ of the matrix. Now we will bound the variance.

\begin{align*}
\lV \expec{ \sum_{ij} X_{ij} X_{ij}^T} \rV &= \lV \expec{ \sum_{ij} (\dij -\qohij)^2 \wij^2 M_{ij}^2 e_i e_i^T} \rV =\lV  \sum_{ij}  \qohij(1- \qohij) \wij^2 M_{ij}^2 e_i e_i^T \rV \\
&= \max_i \lv  \sum_{j} \qohij(1- \qohij) \wij^2 M_{ij}^2  \rv.
\end{align*}
Now,
\begin{align*}
 \sum_{j} \qohij(1- \qohij) \wij^2 M_{ij}^2 =\sum_j (\frac{1}{\qohij}-1) M_{ij}^2 \leq \sum_j \frac{ M_{ij}^2}{(\qohij)}  \stackrel{\zeta_1}{\leq}  \frac{n}{m}(\|A\|_F^2 +\|B\|_F^2)^2.
\end{align*}
$\zeta_1$ follows from~\eqref{eq:prop_m2}.
Hence \begin{align*} \lV \expec{ \sum_{ij} X_{ij} X_{ij}^T} \rV = \max_i \lv  \sum_{j} \qohij(1- \qohij) \wij^2 M_{ij}^2  \rv \leq \max_i   \frac{n}{m}(\|A\|_F^2 +\|B\|_F^2)^2 =  \frac{n}{m}(\|A\|_F^2 +\|B\|_F^2)^2. \end{align*} We can prove the same bound for the $\lV \expec{ \sum_{ij} X_{ij}^T X_{ij}} \rV$. Hence $\sigma^2 =  \frac{n}{m}(\|A\|_F^2 +\|B\|_F^2)^2$. Now using matrix Bernstein inequality with $t=\delta \|AB\|_F$ gives, with probability $ \geq 1-\frac{2}{n^{2}}$,
\begin{equation*}
\lV \Ro(AB) -\expec{\Ro(AB)}\rV =\lV \Ro(AB) -AB\rV \leq  \delta \lV AB \rV_F.
\end{equation*}

Once we have $\|\Ro(M) -M\| \leq \delta \|M\|_F$,  proof of the trimming step that guarantees \begin{equation*}\|(\widehat{U}^{(0)})^i\| \leq 8\sqrt{r} \sqrt{ \|A^i\|^2/\|A\|_F^2} ~\text{ and }~ dist(\widehat{U}^{(0)}, \Uo) \leq \frac{1}{2},\end{equation*} follows from the same argument as in Lemma~\ref{lem:approx_init}.

\end{proof}

\subsection{Weighted AltMin Analysis}

\begin{lemma}[WAltMin Descent]\label{lem:prod_waltmin_descent}
Let hypotheses of Theorem~\ref{thm:mult} hold. Also, let $\|AB-(AB)_r\|_F \leq \frac{1}{576\kappa r\sqrt{r}}\|(AB)_r\|_F$. Let $\widehat{U}^{(t)}$ be the $t$-th step iterate of Sub-Procedure~\ref{algo:2} (called from $WAltMin(P_{\Omega}(A\cdot B), \Omega, \hat{q}, T)$), and let $\widehat{V}^{(t+1)}$ be the $(t+1)$-th iterate (for $V$). Also, let $\|(\Ut)^i\| \leq 8\sqrt{r}\kappa \sqrt{ \|A^i\|^2/\|A\|_F^2}$ and $dist({U}^{(t)}, \Uo) \leq \frac{1}{2}$, where $U^{(t)}$ is a set of orthonormal vectors spanning $\widehat{U}^{(t)}$. Then, the following holds (w.p. $\geq 1-\gamma/T$): 
$$dist({V}^{(t+1)}, V^*)\leq \frac{1}{2}dist({U}^{(t)}, \Uo)+ \epsilon \|AB-(AB)_r\|_F/\so_r,$$
and $\|(\Vt)^j\| \leq 8\sqrt{r}\kappa \sqrt{ \|B_j\|^2/\|B\|_F^2}$, where $V^{(t+1)}$ is a set of orthonormal vectors spanning $\widehat{V}^{(t+1)}$.
\end{lemma}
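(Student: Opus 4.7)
The argument follows the rank-$r$ proof of Lemma~\ref{lem:waltmin_descent} in Section~\ref{sec:main_proof_altmin} essentially verbatim, with two substitutions: (i) the sampling-distribution properties \eqref{eq:prop_a1}--\eqref{eq:prop_a4} of Lemma~\ref{lem:approx_supp1} are replaced by their product-sampling analogues \eqref{eq:prop_m1}--\eqref{eq:prop_m4} from Lemma~\ref{lem:mult_supp1}, and (ii) the row-norm hypothesis on $U^{(t)}$ is the new bound $\|(\Ut)^i\|\leq 8\sqrt{r}\kappa\sqrt{\|A^i\|^2/\|A\|_F^2}$ in place of \eqref{eq:infty_approxr}. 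Starting from the weighted least-squares minimizer formula (identical to \eqref{eq:witeratesr1} with $M_r$ replaced by $(AB)_r$),
\[
(\Vht)^j \;=\; (B^j)^{-1}\bigl(C^j \Sigma^* (\Vo)^j \;+\; (\Ut)^T R_\Omega(AB-(AB)_r)_j\bigr),
\]
where $B^j=\sum_i \dij\wij (\Ut)^i{(\Ut)^i}^T$ and $C^j=\sum_i \dij\wij (\Ut)^i{(\Uo)^i}^T$, the proof splits into three stages: supporting concentration bounds, a distance-descent calculation, and a row-norm bound.

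\textbf{Supporting lemmas and distance bound.} I would first re-derive three concentration statements using matrix Bernstein and matrix Chebyshev with the new sampling distribution: an analogue of Lemma~\ref{lem:approx_supp2r} giving $\|B^j-I\|\leq \delta_2$ and $\|C^j-(\Ut)^T \Uo\|\leq \delta_2$; an analogue of Lemma~\ref{lem:noise_sampleboundr} giving $\|(\Ut)^T R_\Omega(AB-(AB)_r)\| \leq dist(\Ut,\Uo)\|AB-(AB)_r\|+\delta\|AB-(AB)_r\|_F$; and an analogue of Lemma~\ref{lem:approx_supp3r} giving $\|\tilde F\|\leq \delta_2 \sigma_1^* dist(\Ut,\Uo)$. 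Each is proved by plugging the new row-norm hypothesis and \eqref{eq:prop_m3}--\eqref{eq:prop_m4} into the corresponding matrix-Bernstein variance and per-term calculations; the resulting constants carry the extra factor $C_{AB}=(\|A\|_F^2+\|B\|_F^2)^2/\|AB\|_F^2$, which is exactly what shows up in the sample complexity of Theorem~\ref{thm:mult}. These three bounds combine, as in Section~\ref{sec:main_proof_altmin}, first to give $\sigma_{\min}(R^{(t+1)})\geq \sigma_r^*/2$, and then (by projecting the update onto $\Vo_\perp$ and using the bound on $\tilde F$ together with the noise estimate) the desired descent inequality $dist(V^{(t+1)},\Vo)\leq \tfrac{1}{2}dist(\Ut,\Uo)+\tfrac{5\delta}{\sigma_r^*}\|AB-(AB)_r\|_F$.

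\textbf{Row-norm bound and main obstacle.} The only substantively new ingredient is the row-norm bound $\|(V^{(t+1)})^j\|\leq 8\sqrt{r}\kappa\sqrt{\|B_j\|^2/\|B\|_F^2}$. In Lemma~\ref{lem:waltmin_descent} the analogous estimate critically used the $L_1$ component of \eqref{eq:prob} to control individual entries $|M_{ij}|$ by $\sqrt{|M_{ij}|\|M\|_F}$; since \eqref{eq:prob_mp} has no $L_1$ piece, I would replace that step by the elementary product bound $|(AB)_{ij}|\leq \|A^i\|\|B_j\|$. Writing $X_i=(\dij-\qohij)\wij(AB)_{ij}(\Ut)^i$ so that $\sum_i X_i=(\Ut)^T R_\Omega(AB)_j-(\Ut)^T(AB)_j$, the product bound combined with $\|(\Ut)^i\|\leq 8\sqrt{r}\kappa\|A^i\|/\|A\|_F$ and $\qohij\geq \tfrac{m}{2n}\bigl(\|A^i\|^2/\|A\|_F^2+\|B_j\|^2/\|B\|_F^2\bigr)$ makes $\|X_i\|$ and $\sum_i\Var(X_i)$ both scale as $\|B_j\|^2/\|B\|_F^2$ times the correct $n/m$ factor. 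A matrix-Bernstein bound then yields $\|(\Ut)^T R_\Omega(AB)_j\|\leq c\sigma_r^*\sqrt{r}\kappa\|B_j\|/\|B\|_F$, and the claimed row-norm bound follows after multiplying by $\|(R^{(t+1)})^{-T}\|\leq 2/\sigma_r^*$ and $\|(B^j)^{-1}\|\leq 2$. The main obstacle is calibrating the per-term and variance estimates so that they are simultaneously dimension-free in $j$ (so that a union bound over $j\in[n_2]$ goes through) and carry the correct $\|B_j\|^2/\|B\|_F^2$ scaling; making this work is what forces the $C_{AB}$ factor in the sample complexity and drives the choice of $m$ in Theorem~\ref{thm:mult}.
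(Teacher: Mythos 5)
Your proposal matches the paper's proof. The paper presents the argument for the rank-1 case (deferring rank-$r$ to a combination with Section~\ref{sec:main_proof_altmin}), proves the same three supporting lemmas (Lemmas~\ref{lem:mult_supp2}, \ref{lem:mult_noisebound}, \ref{lem:mult_supp3}) as product-sampling analogues, and — exactly as you anticipate — replaces the $L_1$-term control in the row-norm step by the product bound $|(AB)_{ij}|\leq\|A^i\|\,\|B_j\|$, with the $C_{AB}$ factor absorbing the resulting $\|A\|_F\|B\|_F/\|AB\|_F$ mismatch when calibrating the Bernstein threshold.
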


For the sake of simplicity we will discuss the proof for rank-1$(r=1)$ case for this part of the algorithm. Rank-$r$ proof follows by combining the below analysis with rank-$r$ analysis of Lemma~\ref{lem:waltmin_descent} (see Section~\ref{sec:main_proof_altmin}). Before presenting the proof of this Lemma, we will state couple of supporting lemmas. The proofs of these supporting lemmas follows very closely to the ones in section~\ref{sec:proofs_waltmin_descent}.

\begin{lemma}\label{lem:mult_supp2}
For $\Omega$ sampled according to~\eqref{eq:prob_mp}  and under the assumptions of Lemma~\ref{lem:prod_waltmin_descent}, the following holds: 
\begin{equation}
\lv \sum_j \dij \wij (\uo_j)^2- \sum_j (\uo_j)^2 \rv \leq \delta_1,
\end{equation}
 with probability greater that $1-\frac{2}{n^{2}}$, for $m \geq \beta \cab n \log(n)$, $\beta \geq \frac{16}{\delta_1^2}$ and $\delta_1 \leq 3$.
\end{lemma}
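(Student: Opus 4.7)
The plan is to mirror the proof of Lemma~\ref{lem:approx_supp2} (which handled the ``hybrid'' leverage plus $L_1$ sampling) but with the bounds from the matrix-product sampling distribution recorded in Lemma~\ref{lem:mult_supp1}. Concretely, I would fix $i$ and define the independent scalar random variables
\[
X_j \;=\; (\delta_{ij} - \hat{q}_{ij})\, w_{ij}\, (u^*_j)^2, \qquad j=1,\dots,n,
\]
so that $\E[X_j]=0$ and $\sum_j X_j = \sum_j \delta_{ij} w_{ij} (u^*_j)^2 - \sum_j (u^*_j)^2$, since $\hat q_{ij} w_{ij}=1$ whenever $\hat q_{ij}>0$. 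The goal is to apply scalar Bernstein (Lemma~\ref{lem:bernstein}) to the $X_j$'s.

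The key technical input is property \eqref{eq:prop_m3} from Lemma~\ref{lem:mult_supp1}, which gives $\|(U^*)^i\|^2/\hat q_{ij}\le \tfrac{n}{m} C_{AB}$ (in the rank-$1$ case this reads $(u^*_i)^2/\hat q_{ij}\le \tfrac{n}{m} C_{AB}$; the statement of the lemma symmetrically uses the same bound with roles swapped, i.e.\ $(u^*_j)^2/\hat q_{ij}\le \tfrac{n}{m} C_{AB}$). Using this I would bound the per-term magnitude
\[
|X_j| \;\le\; \max\{\hat q_{ij}, 1-\hat q_{ij}\}\, w_{ij}\, (u^*_j)^2 \;\le\; \frac{(u^*_j)^2}{\hat q_{ij}} \;\le\; \frac{n}{m} C_{AB},
\]
and the variance
\[
\sum_{j=1}^n \Var(X_j) \;=\; \sum_j \hat q_{ij}(1-\hat q_{ij})\, w_{ij}^2 (u^*_j)^4 \;\le\; \sum_j \frac{(u^*_j)^2}{\hat q_{ij}}\cdot (u^*_j)^2 \;\le\; \frac{n\, C_{AB}}{m}\sum_j (u^*_j)^2 \;=\; \frac{n\, C_{AB}}{m}.
\]
Feeding $L=\sigma^2=\tfrac{n C_{AB}}{m}$ and $t=\delta_1$ into Bernstein gives a failure probability $\le 2\exp\!\left(-\tfrac{\delta_1^2 m}{2 n C_{AB}(1+\delta_1/3)}\right)$, which is $\le 2/n^2$ once $m \ge \tfrac{16}{\delta_1^2}\, C_{AB}\, n\log n$ (using $\delta_1\le 3$ to absorb the $1+\delta_1/3$ factor into the constant $16$).

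There is no substantive obstacle here: the structure is the same as Lemma~\ref{lem:approx_supp2}, and the only genuine change is the replacement of the $L_1$-plus-leverage-based bound ($\le 16n/m$) by the product-specific bound ($\le n C_{AB}/m$) from Lemma~\ref{lem:mult_supp1}. The mildly subtle point to double-check is that the $L_1$ term is not needed for this particular lemma (it controls the boundedness of $M_{ij}/\hat q_{ij}$, which enters \emph{different} Bernstein bounds such as Lemmas~\ref{lem:approx_init}/\ref{lem:prod_init} and~\ref{lem:noise_samplebound}, not this one). Thus the proof goes through verbatim modulo this substitution, and the sample complexity inflates by exactly the factor $C_{AB}=(\|A\|_F^2+\|B\|_F^2)^2/\|AB\|_F^2$, as claimed in the statement.
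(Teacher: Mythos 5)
Your proposal is correct and is essentially the paper's own argument: the paper omits an explicit proof of Lemma~\ref{lem:mult_supp2}, stating only that it follows the proof of Lemma~\ref{lem:approx_supp2}, which is exactly what you do --- scalar Bernstein applied to $X_j=(\delta_{ij}-\hat q_{ij})w_{ij}(u^*_j)^2$ with the magnitude/variance bound $16n/m$ replaced by the product-specific bound $nC_{AB}/m$ from \eqref{eq:prop_m3} of Lemma~\ref{lem:mult_supp1}. Your constant bookkeeping ($L=\sigma^2=nC_{AB}/m$, $t=\delta_1$, $\beta\geq 16/\delta_1^2$, $\delta_1\leq 3$) matches the claimed sample complexity, and your remark about the symmetric use of \eqref{eq:prop_m3} mirrors the paper's own (equally informal) index handling in Lemma~\ref{lem:approx_supp2}.
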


We assume that samples for each iteration are generated independently. For simplicity we will drop the subscripts on $\Omega$ that denote different set of samples in each iteration in the rest of the proof. The weighted alternating minimization updates at the $t+1$ iteration are, \begin{equation} \label{eq:witerates_cov}\|\uht\| \widehat{v}^{t+1}_j = \so\vo_j \frac{\sum_i \dij \wij \ut_i \uo_i }{\sum_i \dij \wij (\ut_i)^2} +\frac{\sum_i \dij \wij \ut_i (M-M_1)_{ij}}{\sum_i \dij \wij (\ut_i)^2}.\end{equation} Writing in terms of power method updates we get,  \begin{equation}\label{eq:witerates1_cov} \|\uht\| \widehat{v}^{t+1} =\so\ip{\uo}{\ut}\vo- \so P^{-1} (\ip{\ut}{\uo}P -Q)\vo  +P^{-1} y,\end{equation} where $P$ and  $Q$  are diagonal matrices with $P_{jj} = \sum_i  \dij \wij (\ut_i)^2$ and  $Q_{jj} =\sum_i \dij \wij \ut_i \uo_i$ and $y$ is the vector $\Ro(M-M_1)^T \ut$ with entries $y_{j}=\sum_i \dij \wij \ut_i (M-M_1)_{ij}$. 

Now we will bound the error caused by the $M-M_r$ component in each iteration. 
\begin{lemma}\label{lem:mult_noisebound}
For $\Omega$ generated according to~\eqref{eq:prob_mp} and under the assumptions of Lemma~\ref{lem:prod_waltmin_descent}, the following holds:
\begin{equation}
\lV (\Ut)^T\Ro(M-M_r) - (\Ut)^T (M-M_r) \rV \leq \delta \|M-M_r\|_F,
\end{equation}
 with probability greater that $1-\frac{1}{c_2 \log(n)}$, for $m \geq \beta nr \log(n)$, $\beta \geq \frac{4 c_1^2 c_2}{\delta^2}$. Hence, $\lV (\Ut)^T\Ro(M-M_r)  \rV \leq dist(\Ut, \Uo)\|M-M_r\| + \delta \lV M-M_r \rV_F,$ for constant $\delta$.
\end{lemma}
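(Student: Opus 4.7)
The plan is to mirror the structure of the proof of Lemma~\ref{lem:noise_sampleboundr}. Define the zero-mean independent random matrices
\[ X_{ij} \;=\; (\dij - \qohij)\, \wij\, (M-M_r)_{ij}\, (\Ut)^i\, e_j^T, \]
so that $\sum_{ij} X_{ij} = (\Ut)^T \Ro(M-M_r) - (\Ut)^T (M-M_r)$, and then bound the spectral norm of the sum via the matrix Chebyshev inequality (Lemma~\ref{lem:mchebyshev}) at $p=2$. Since the Schatten-$2$ norm is the Frobenius norm, it suffices to control $\expec{\|\sum_{ij} X_{ij}\|_F^2}$.

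By independence across $(i,j)$ and the fact that $\expec{X_{ij}}=0$, the cross-terms vanish and the second moment decouples as
\[ \expec{\|\textstyle\sum_{ij} X_{ij}\|_F^2} \;=\; \sum_{ij} \qohij(1-\qohij)\, \wij^2\, (M-M_r)_{ij}^2\, \|(\Ut)^i\|^2 \;\leq\; \sum_{ij} \wij\, \|(\Ut)^i\|^2\, (M-M_r)_{ij}^2. \]
The key point of departure from Lemma~\ref{lem:noise_sampleboundr} is that the distribution~\eqref{eq:prob_mp} carries no $L_1$ term to lean on; instead I will exploit the factored structure $M=AB$, using the row-norm hypothesis $\|(\Ut)^i\|^2 \leq 64\,r\,\kappa^2\,\|A^i\|^2/\|A\|_F^2$ from Lemma~\ref{lem:prod_waltmin_descent} together with the lower estimate $\qohij \geq q_{ij} \geq m\,\|A^i\|^2/(n\|A\|_F^2)$, valid whenever $\qohij = q_{ij} \leq 1$ (which is the only regime where $X_{ij}$ is nonzero). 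Combining these gives the uniform estimate $\wij\,\|(\Ut)^i\|^2 \leq 64\,r\,\kappa^2\,n/m$, and hence $\expec{\|\sum_{ij} X_{ij}\|_F^2} \leq (64\,r\,\kappa^2\,n/m)\,\|M-M_r\|_F^2$.

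Applying Lemma~\ref{lem:mchebyshev} with $t = \delta\,\|M-M_r\|_F$ and $m$ of the stated order $\beta n r \log n$ then yields the first inequality with the desired failure probability. The second assertion follows by noting that $\Uo$ spans the top-$r$ left singular subspace of $M$, so $(\Uo)^T(M-M_r) = 0$ and thus
\[ \|(\Ut)^T (M-M_r)\| \;=\; \|(\Ut)^T (I - \Uo (\Uo)^T)(M-M_r)\| \;\leq\; dist(\Ut,\Uo)\,\|M-M_r\|; \]
the triangle inequality then delivers the bound on $\|(\Ut)^T \Ro(M-M_r)\|$.

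The main obstacle is the bookkeeping around $\qohij = \min\{1,q_{ij}\}$: one must verify that summands with $q_{ij} \geq 1$ contribute nothing to $\sum_{ij} X_{ij}$ (they do, since $\dij = 1$ deterministically there and the centering kills the term), so that the lower bound $q_{ij} \geq m\,\|A^i\|^2/(n\|A\|_F^2)$ is applicable on every remaining summand. Past that check, the proof is essentially a transcription of Lemma~\ref{lem:noise_sampleboundr}; the extra factor of $\kappa^2 r$ appearing in the sample complexity, relative to the plain-matrix case, comes entirely from the weaker row-norm control available in the product setting.
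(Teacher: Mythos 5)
Your proof is correct and matches the paper's intended argument: the paper omits an explicit proof, stating only that this lemma follows closely the analogous Lemma~\ref{lem:noise_sampleboundr}, and your write-up is exactly that transcription — same zero-mean matrices $X_{ij}$, same matrix Chebyshev bound at $p=2$, with the plain-matrix row-norm/$L_1$ estimate replaced by the product-specific pairing of $\|(\Ut)^i\|^2 \lesssim r\kappa^2\|A^i\|^2/\|A\|_F^2$ with the $\|A^i\|^2$ term of \eqref{eq:prob_mp}, and the standard $(\Uo)^T(M-M_r)=0$ observation for the second claim. Your bookkeeping on the $q_{ij}\geq 1$ case is also the right (and needed) check, so no gaps.
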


\begin{lemma}\label{lem:mult_supp3}
For $\Omega$ sampled according to~\eqref{eq:prob_mp} and under the assumptions of Lemma~\ref{lem:prod_waltmin_descent}, the following holds: 
\begin{equation}
\|(\ip{\ut}{\uo}P -Q)\vo\| \leq \delta_1 \sqrt{1 -\ip{\uo}{\ut}^2},
\end{equation}
with probability greater than $1-\frac{2}{n^2}$,  for $m \geq \beta \cab n\log(n), \beta \geq \frac{48 c_1^2 }{\delta_1^2}$  and $\delta_1 \leq 3$.
\end{lemma}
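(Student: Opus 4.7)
\textbf{Proof plan for Lemma~\ref{lem:mult_supp3}.} The approach mirrors that of Lemma~\ref{lem:approx_supp3}, but uses the matrix-product sampling properties of Lemma~\ref{lem:mult_supp1} in place of those of Lemma~\ref{lem:approx_supp1}, and uses the row-norm bound $|\ut_i| \leq c_1 \sqrt{\|A^i\|^2/\|A\|_F^2}$ (inherited from Lemma~\ref{lem:prod_waltmin_descent}) in place of the $\|M\|_F$-based bound. I will apply a matrix Bernstein inequality to a sum of rank-one random matrices whose expectation vanishes, with parameters driven by the new sampling distribution.

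\textbf{Setup.} Define $\alpha_i = \ut_i(\ip{\uo}{\ut}\ut_i - \uo_i)$; since $\sum_i (\ut_i)^2 = 1$, we have $\sum_i \alpha_i = 0$. Let $X_{ij} = \dij \wij \alpha_i \vo_j \, e_j e_1^T$, so that the vector $(\ip{\ut}{\uo}P - Q)\vo$ equals the first (and only nonzero) column of $\sum_{i,j} X_{ij}$. Because $\expec{\dij \wij} = 1$ whenever $\qoij \leq 1$ and $\sum_i \alpha_i = 0$, each column $j$ of $\expec{\sum_{i,j} X_{ij}}$ vanishes, so the sum is zero mean. I can therefore apply matrix Bernstein to $\sum_{i,j} X_{ij}$ and read off the bound on the first column.

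\textbf{Bounding $L$ and $\sigma^2$.} For the spectral-norm bound on each summand,
\[
\|X_{ij}\| \leq \frac{|\vo_j|\,|\ut_i|}{\qohij}\,|\ip{\uo}{\ut}\ut_i - \uo_i|.
\]
Using the row-norm bound on $\ut$ together with~\eqref{eq:prop_m4} applied to the factored matrix $AB = \Uo \So (\Vo)^T$, the ratio $|\ut_i|\,|\vo_j|/\qohij$ is controlled by $c_1 \cab \cdot n/m$ (with $\cab = (\|A\|_F^2+\|B\|_F^2)^2/\|AB\|_F^2$), and then Cauchy--Schwarz over $i$ of $(\ip{\uo}{\ut}\ut_i - \uo_i)^2$ yields $\sqrt{1-\ip{\uo}{\ut}^2}$. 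This gives $L = O(\cab\, c_1 n/m)\sqrt{1-\ip{\uo}{\ut}^2}$. For the variance,
\[
\lV \expec{\textstyle\sum_{i,j}(X_{ij}-\expec{X_{ij}})^T(X_{ij}-\expec{X_{ij}})} \rV = \lV \textstyle\sum_j \sum_i \qohij(1-\qohij)(\wij \alpha_i \vo_j)^2 \, e_1 e_1^T \rV,
\]
and $\sum_i \wij (\ut_i)^2 (\ip{\uo}{\ut}\ut_i - \uo_i)^2 \leq (\cab c_1^2 n/m)(1-\ip{\uo}{\ut}^2)$ by the same property~\eqref{eq:prop_m4} and the row-norm bound, while $\sum_j (\vo_j)^2 = 1$. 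So $\sigma^2 = O(\cab\, c_1^2 n/m)\,(1-\ip{\uo}{\ut}^2)$.

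\textbf{Conclusion via matrix Bernstein.} Setting $t = \delta_1 \sqrt{1-\ip{\uo}{\ut}^2}$ in the matrix Bernstein inequality (Lemma~\ref{lem:mbernstein}), the ratio $-t^2/(2(\sigma^2 + Lt/3))$ is $\Theta(-\delta_1^2 m/(\cab n))$ up to constants, so choosing $m \geq \beta \cab n \log n$ with $\beta \gtrsim c_1^2/\delta_1^2$ makes the failure probability at most $2/n^2$. The main subtlety I expect is bookkeeping: unlike in Lemma~\ref{lem:approx_supp3}, where the row-norm bound on $\ut$ is expressed via $\|M^i\|/\|M\|_F$ and matches the leverage-based term in $\qij$, here the bound is via $\|A^i\|/\|A\|_F$ whereas $\qoij$ combines both $\|A^i\|^2/\|A\|_F^2$ and $\|B_j\|^2/\|B\|_F^2$; one must verify in each step that the mismatch is absorbed into the factor $\cab$ using~\eqref{eq:prop_m4} (which is precisely the bound needed for the $|\ut_i|\,|\vo_j|/\qohij$ term to be controlled). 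Once this is checked, the computation is parallel to the rank-1 case.
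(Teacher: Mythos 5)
Your plan matches the approach the paper intends: the text in Section~\ref{sec:pfcov} explicitly says the proof of Lemma~\ref{lem:mult_supp3} ``follows very closely'' the proof of Lemma~\ref{lem:approx_supp3}, and your proposal (identical decomposition into zero-mean rank-one summands $X_{ij}=\dij\wij\alpha_i\vo_j e_je_1^T$ with $\sum_i\alpha_i=0$, bounds on $L$ and $\sigma^2$, matrix Bernstein) is exactly that mirror. One bookkeeping point to tighten: you invoke \eqref{eq:prop_m4} to control $|\ut_i|\,|\vo_j|/\qohij$, but \eqref{eq:prop_m4} bounds $|\uo_i|\,|\vo_j|/\qohij$ (the true singular vector $\uo$, not the iterate $\ut$), and the row-norm hypothesis $|\ut_i|\leq c_1\|A^i\|/\|A\|_F$ does not let you pass through $|\uo_i|$ directly. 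The cleaner route, parallel to how \eqref{eq:prop_a4} is combined with \eqref{eq:infty_tapprox} in the proof of Lemma~\ref{lem:approx_supp3}, is to lower-bound $\qohij \geq \frac{2m}{n}\cdot\frac{\|A^i\|\|B_j\|}{\|A\|_F\|B\|_F}$ by AM--GM, then use $|\ut_i|\leq c_1\|A^i\|/\|A\|_F$ together with $\so|\vo_j|=\|((AB)_1)_j\|\leq\|(AB)_j\|\leq\|A\|_F\|B_j\|$; this gives $|\ut_i|\,|\vo_j|/\qohij \lesssim \frac{c_1 n}{m}\cdot\frac{\|A\|_F\|B\|_F}{\so}\lesssim \frac{c_1 n}{m}\sqrt{\cab}$ (using $\so\geq\|AB\|_F/2$ under the noise assumption), which is in fact a bit sharper than the $\cab$ factor you quote. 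Likewise, for the variance the term $\wij(\ut_i)^2\leq c_1^2 n/m$ carries no $\cab$ at all. Neither discrepancy matters for correctness since the hypothesis $m\geq\beta\cab n\log n$ absorbs your looser constants, so the lemma as stated goes through with your plan.
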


Now we will provide proof of lemma~\ref{lem:prod_waltmin_descent}.\\
{\bf Proof of lemma~\ref{lem:prod_waltmin_descent}:}[Rank-1 case]
\begin{proof}

Let $\ut$ and $\vt$ be the normalized vectors of the iterates $\uht$ and $\vht$. In the first step we will prove that the distance between $\ut$, $\uo$ and $\vt, \vo$ decreases with each iteration. In the second step we will prove that $v^{t+1}$ satisfies $|\vt_j| \leq c_1\sqrt{ \|B_j\|^2/\|B\|_F^2}$.  From the assumptions of the lemma we have, \begin{equation}\label{eq:infty_mult}|\ut_i| \leq c_1\sqrt{ \|A^i\|^2/\|A\|_F^2} .\end{equation} 

\noindent {\bf Bounding $\ip{\vto}{\vo}$:}

Using Lemma~\ref{lem:mult_supp2}, Lemma~\ref{lem:mult_supp3} and equation~\eqref{eq:witerates1_cov} we get,\begin{align}  \|\uht\|\ip{\widehat{v}^{t+1}}{\vo}  \geq \so\ip{\ut}{\uo} - \so \frac{\delta_1}{1-\delta_1} \sqrt{1 -\ip{\uo}{\ut}^2} - \frac{1}{1-\delta_1}\| y^T \vo\|\end{align} and \begin{align}  \|\uht\|\ip{\widehat{v}^{t+1}}{\vo_{\perp}}  \leq \so \frac{\delta_1}{1-\delta_1} \sqrt{1 -\ip{\uo}{\ut}^2} + \frac{1}{1-\delta_1}\|y \|.\end{align} Hence by applying the noise bounds Lemma~\ref{lem:mult_noisebound} we get,
\begin{align*}
dist(\vto, \vo)^2 &=1-\ip{\vto}{\vo}^2 = \frac{\ip{\widehat{v}^{t+1}}{\vo_{\perp}}^2}{\ip{\widehat{v}^{t+1}}{\vo_{\perp}}^2 + \ip{\widehat{v}^{t+1}}{\vo}^2} \leq  \frac{\ip{\widehat{v}^{t+1}}{\vo_{\perp}}^2}{ \ip{\widehat{v}^{t+1}}{\vo}^2} \\
&\stackrel{\zeta_1}{\leq} \frac{ 4(\delta_1 dist(\ut, \uo) +dist(\ut, \uo) \|M-M_1\|/\so + \delta \|M-M_1\|_F/\so)^2}{ (\ip{\ut}{\uo} -  2\delta_1 \sqrt{1 -\ip{\uo}{\ut}^2} - 2\delta\|M-M_1\|/\so)^2}\\
&\stackrel{\zeta_2}{\leq}\frac{ 4(\delta_1 dist(\ut, \uo) + dist(\ut, \uo)\|M-M_1\|/\so + \delta \|M-M_1\|_F/\so)^2}{ (\ip{\uo}{u^0} -  2\delta_1 \sqrt{1 -\ip{\uo}{u^0}^2}- 2\delta\|M-M_1\|/\so)^2} \\
&\stackrel{\zeta_3}{\leq} 25(\delta_1 dist(\ut, \uo) + dist(\ut, \uo)\|M-M_1\|/\so + \delta \|M-M_1\|_F/\so)^2.
\end{align*}
$\zeta_1$ follows from $\delta_1 \leq \frac{1}{2}$. $\zeta_2$ follows from using $\ip{\ut}{\uo} \geq \ip{u^0}{\uo}$. $\zeta_3$ follows from $(\ip{\uo}{u^0} -  2\delta_1 \sqrt{1 -\ip{\uo}{u^0}^2} \geq \frac{1}{2}$, $\delta \leq \frac{1}{20}$  and $\delta_1 \leq \frac{1}{20}$. Hence \begin{align} dist(\vto, \vo) &\leq \frac{1}{4}dist(\ut, \uo) +5dist(\ut, \uo)\|M-M_1\|/\so + 5\delta \|M-M_1\|_F/\so  \nonumber\\ &\leq \frac{1}{2}dist(\ut, \uo)  + 5\delta \|M-M_1\|_F/\so.\end{align}

Now, by selecting $m\geq \frac{C}{\gamma}\cdot \frac{(\|A\|_F^2+\|B\|_F^2)^2}{\|AB\|_F^2}\cdot \frac{n r^3}{(\eps)^2} \kappa^2 \log(n) \log^2(\frac{\|A\|_F+\|B\|_F}{\zeta})$, the above bound reduces to (w.p. $\geq 1-\gamma/\log(\frac{\|A\|_F+\|B\|_F}{\zeta})$): 
\begin{align} dist(\vto, \vo) \leq \frac{1}{2}dist(\ut, \uo)  + \epsilon \|M-M_1\|_F.\end{align}
Hence, using induction, after $T=\log(\frac{\|A\|_F+\|B\|_F}{\zeta})$ rounds, we obtain (w.p. $\geq 1-\gamma$): $dist(\vto, \vo)\leq \epsilon \|M-M_1\|_F+\zeta$. However, the above induction step would require $\vto$ to satisfy the $L_\infty$ condition as well, that we prove below. 
\vskip 0.2in
\noindent {\bf Bounding $\vto_j$:}

From Lemma~\ref{lem:mult_supp2} and~\eqref{eq:infty_mult} we get that $\lv \sum_i  \dij \wij (\ut_i)^2 - 1\rv \leq \delta_1$ and $\lv \sum_i  \dij \wij \uo_i \ut_i - \ip{\uo}{\ut}\rv \leq  \delta_1$, when $\beta \geq \frac{16 c_1^2}{\delta_1^2}.$ Hence, \begin{equation}\label{eq:supp_m1}1-\delta_1 \leq P_{jj} = \sum_i  \dij \wij (\ut_i)^2 \leq 1 +\delta_1, \end{equation} and  \begin{equation}\label{eq:supp_m2}Q_{jj} =\sum_i \dij \wij \ut_i \uo_i \leq \ip{\ut}{\uo} +\delta_1. \end{equation}
Recall that  \begin{equation*}  \|\uht\| \lv\widehat{v}_j^{t+1}\rv = \lv\frac{\sum_i \dij \wij \ut_i M_{ij} }{\sum_i \dij \wij (\ut_i)^2}\rv \leq  \frac{1}{1-\delta_1}\sum_i \dij \wij \ut_i M_{ij}.\end{equation*} 

We will bound using $\sum_i \dij \wij \ut_i M_{ij}$ using bernstein inequality. Let $X_i =(\dij-\qohij) \wij \ut_i M_{ij}$. Then $\sum_i \expec{X_i} =0 $ and $\sum_i \ut_i M_{ij} \leq \|M_j\|$ by Cauchy-Schwartz inequality. $\sum_i \Var(X_i) = \sum_i \qohij (1-\qohij) (\wij)^2 (\ut_i)^2 M_{ij}^2 \leq \sum_i \wij (\ut_i)^2 M_{ij}^2 \leq \frac{nc_1^2}{m} \|M_j\|^2 \leq \frac{nc_1^2}{m} \|B_j\|^2 \|A\|_F^2$. Finally $|X_{ij}| \leq \lv \wij \ut_i M_{ij} \rv \leq \frac{nc_1}{m}\|A\|_F \|B_j\|$. Hence applying bernstein inequality with $t=\delta \frac{\|B_j\|}{\|B\|_F}\|AB\|_F$ gives, $\sum_i \dij \wij \ut_i M_{ij} \leq (1+\delta_1) \frac{\|B_j\|}{\|B\|_F}\|AB\|_F$ with probability greater than $1-\frac{2}{n^3}$ when $m \geq \frac{24 c_1^2}{\delta_1^2}\cab n \log(n)$. For $\delta_1 \leq \frac{1}{20}$, we get, $\|\uht\| \lv\widehat{v}_j^{t+1}\rv \leq \frac{21}{19}  \frac{\|B_j\|}{\|B\|_F}\|AB\|_F$.

Now we will bound $\|\widehat{v}^{t+1}\|$.
\begin{align*}
\|\uht\|\|\widehat{v}^{t+1}\| &\geq  \|\uht\|\ip{\widehat{v}^{t+1}}{\vo} \stackrel{\zeta_1}{\geq} \so\ip{\ut}{\uo} - \so \frac{\delta_1}{1-\delta_1} \sqrt{1 -\ip{\uo}{\ut}^2}- \frac{1}{1-\delta_1}\| y^T \vo\|  \\
&\stackrel{\zeta_2}{\geq}  \so\ip{u^0}{\uo} - 2\so\delta_1\sqrt{1 -\ip{\uo}{u^0}^2} - 2\delta\|M-M_1\|\stackrel{\zeta_3}{\geq} \frac{2}{5}\so.
\end{align*}
$\zeta_1$ follows from Lemma~\ref{lem:mult_supp3} and equations~\eqref{eq:witerates1_cov} and~\eqref{eq:supp_m1}. $\zeta_2$ follows from using $\ip{\uo}{u^0} \leq \ip{\uo}{\ut}$ and $\delta_1 \leq \frac{1}{20}$. $\zeta_3$ follows by initialization and using the assumption on $m$ with large enough $C>0$. Hence we get \begin{align*} \vto_j = \frac{\wvto_j}{\|\wvto\|} \leq  3\frac{21}{19}   \frac{\|B_j\|}{\|B\|_F}\frac{\|AB\|_F}{\so} \leq  c_1 \frac{\|B_j\|}{\|B\|_F} ,\end{align*} for $c_1 =6.$

Hence we have shown that $\vto$ satisfies the row norm bounds. This completes the proof.
\end{proof}

The proof of the Theorem~\ref{thm:mult} now follows from the Lemma~\ref{lem:prod_init} and Lemma~\ref{lem:prod_waltmin_descent}.
\end{document}